\definecolor{bleu_sombre}{rgb}{0,0,0.6}  \definecolor{rouge_sombre}{rgb}{0.8,0,0}\definecolor{vert_sombre}{rgb}{0,0.6,0}
\theoremstyle{plain}
\newtheorem{theorem}{{Theorem}}[section] 
\newtheorem*{theorem*}{{Theorem}}
\newtheorem{proposition}[theorem]{Proposition}
\newtheorem*{proposition*}{Proposition}
\newtheorem*{corollary*}{Corollary}
\newtheorem{lemma}[theorem]{Lemma}
\newtheorem*{lemma*}{Lemma}
\theoremstyle{definition}
\newtheorem*{definition*}{Definition}
\theoremstyle{remark}
\newtheorem{remark}[theorem]{Remark}
\renewcommand{\leq}{\leqslant}	\renewcommand{\geq}{\geqslant}
\newcommand{\R}{\mathbb{R}}
\newcommand{\C}{\mathbb{C}}
\newcommand{\N}{\mathbb{N}}	
\newcommand{\Dom}{\mathsf{Dom}}
\newcommand{\Sp}{\mathsf{Sp}}
\newcommand{\dx}{\mathrm{d}}
\newcommand{\eps}{\varepsilon}
\newcommand{\eff}{\mathsf{eff}}
\renewcommand{\Im}{\mathrm{Im}}
\renewcommand{\Re}{\mathrm{Re}}
\newcommand{\fonc}[4] { \left\{ \begin{array}{ccc} #1 & \to & #2 \\ #3 & \mapsto & #4 \end{array} \right. }
\renewcommand{\a}{\alpha}\newcommand{\g}{\gamma}\newcommand{\e}{\varepsilon}\newcommand{\z}{\zeta} \newcommand{\Th}{\Theta}\renewcommand{\l}{\lambda}\newcommand{\m}{\mu}\newcommand{\f}{\varphi}\newcommand{\p}{\psi}\renewcommand{\O}{\Omega}
\newcommand{\Hc}{{\mathcal H}}\newcommand{\Kc}{{\mathcal K}}\newcommand{\Lc}{{\mathcal L}}\newcommand{\Qc}{{\mathcal Q}}\newcommand{\Zc}{{\mathcal Z}}
\newcommand{\abs}[1]{\left\vert #1\right\vert}\newcommand{\nr}[1]{\left\Vert #1\right\Vert}\newcommand{\innp}[2]{\left< #1 , #2 \right>}
\newcommand{\LL}{\mathscr{L}}
\newcommand{\LLo}{\mathscr{L}}
\newcommand{\LLeff}{{\mathscr{L}}_\eff}\newcommand{\LLeffh}{{\mathscr{L}}_{h,\eff}}
\newcommand{\LLbot}{{\mathscr{L}}^\bot}
\newcommand{\hatLL}{\widehat{\mathscr{L}}}
\newcommand{\LLp}{\mathscr{L}}
\newcommand{\LLh}{\mathscr{L}_h}
\newcommand{\LLDir}{\mathscr{L}^{\mathsf{Dir}}_{\varepsilon}}
\newcommand{\LLDireff}{\mathscr{L}^{\mathsf{Dir}}_{\varepsilon,\eff}}
\newcommand{\hatLLDR}{\widehat{\mathscr{L}}^{\mathsf{DR}}_{\alpha}}
\newcommand{\LLDReff}{\mathscr{L}^{\mathsf{DR}}_{\alpha,\eff}}
\newcommand{\LLe}{\mathscr{L}_{\varepsilon}}
\newcommand{\hatLLe}{\widehat{\mathscr{L}}_{\varepsilon}}
\newcommand{\PP}{\mathscr{P}}
\newcommand{\Qo}{Q}
\newcommand{\hatQ}{\widehat Q}
\newcommand{\Qeff}{Q_\eff}
\newcommand{\Qbot}{Q_\bot}
\newcommand{\Qe}{Q_\e}  \newcommand{\hatQe}{\widehat Q_\e} 
\renewcommand{\AA}{\mathcal A}
\newcommand{\hAA}{\widehat {\mathcal A}}
\definecolor{DarkGreen}{rgb}{0,0.5,0.1} 
\newcommand\soutD{\bgroup\markoverwith
{\textcolor{DarkGreen}{\rule[.5ex]{2pt}{1pt}}}\ULon}
\newcommand{\Hm}[1]{\leavevmode{\marginpar{\tiny%
$\hbox to 0mm{\hspace*{-0.5mm}$\leftarrow$\hss}%
\vcenter{\vrule depth 0.1mm height 0.1mm width \the\marginparwidth}%
\hbox to
0mm{\hss$\rightarrow$\hspace*{-0.5mm}}$\\\relax\raggedright #1}}}
\newcommand\soutP{\bgroup\markoverwith
{\textcolor{blue}{\rule[.5ex]{2pt}{1pt}}}\ULon}
\begin{document}
\title[Reduction of dimension]{Reduction of dimension as a consequence of norm-resolvent
convergence and applications}

\author{D. Krej{\v{c}}i{\v{r}}{\'{\i}}k}
\address[D. Krej{\v{c}}i{\v{r}}{\'{\i}}k]{
Department of Mathematics, Faculty of Nuclear Sciences and Physical Engineering, Czech Technical University in Prague,
Trojanova 13, 12000 Prague 2, Czech Republic}
\email{david.krejcirik@fjfi.cvut.cz}

\author{N. Raymond}
\address[N. Raymond]{IRMAR, Universit\'e de Rennes 1, Campus de Beaulieu, F-35042 Rennes cedex, France}
\email{nicolas.raymond@univ-rennes1.fr}

\author{J. Royer}
\address[J. Royer]{Institut de math\'ematiques de Toulouse, Universit\'e Toulouse 3, 118 route de Narbonne, F-31062 Toulouse cedex 9, France}
\email{julien.royer@math.univ-toulouse.fr}

\author{P. Siegl}
\address[P. Siegl]{
Mathematical Institute, 
University of Bern,
Alpeneggstrasse 22,
3012 Bern, Switzerland
\& On leave from Nuclear Physics Institute ASCR, 25068 \v Re\v z, Czech Republic}
\email{petr.siegl@math.unibe.ch}

\begin{abstract}
This paper is devoted to dimensional reductions via the norm resolvent convergence. We derive explicit bounds on the resolvent difference as well as spectral asymptotics. The efficiency of our abstract tool is demonstrated by its application on seemingly different PDE problems from various areas of mathematical physics; all are analysed in a unified manner now, known results are recovered and new ones established.
\end{abstract}

\subjclass[2010]{Primary: 81Q15; Secondary: 35P05, 35P20, 58J50, 81Q10, 81Q20}  

\keywords{reduction of dimension, norm resolvent convergence,
Born-Oppenheimer approximation, thin layers, quantum waveguides, effective Hamiltonian}

\maketitle

\newcommand{\Cor}[1]{{\color{red}#1}}

\section{Introduction}

\subsection{Motivation and context}

In this paper we develop an abstract tool for dimensional reductions via the norm resolvent convergence obtained from variational estimates. The results are relevant in particular for PDE problems, typically Schr\"odinger-type operators depending on an asymptotic parameter having various interpretations (semiclassical limit, shrinking limits, large coupling limit, etc.). In applications, our resolvent estimates lead to accurate spectral asymptotic results for eigenvalues lying in a suitable region of the complex plane. Moreover, avoiding the traditional min-max approach, with its fundamental limitations to self-adjoint cases, we obtain an effective operator, the spectrum of which determines the spectral asymptotics. The flexibility of the latter is illustrated on a non-self-adjoint example in the second part of the paper.

The power of our approach is demonstrated by a unified treatment of diverse classical as well as latest problems occurring in mathematical physics such as:
\begin{enumerate}[-]
	\item semiclassical Born-Oppenheimer approximation,
	\item shrinking tubular neighborhoods of hypersurfaces subject to various boundary conditions, 
	\item domains with very attractive Robin boundary conditions.
\end{enumerate} 
In spite of the variety of operators, asymptotic regimes, and techniques considered in the previous literature, all these results are covered in our general abstract and not only asymptotic setting. Our first result (Theorem \ref{theo.main}) gives a norm resolvent convergence towards a tensorial operator in a general self-adjoint setting. A remarkable feature is that only two quantities need to be controlled:
the size of a commutator of a ``longitudinal operator'' with spectral projection on low lying ``transverse modes'' and the size of the ``spectral gap'' of a ``transverse operator'', 
see~\eqref{a.def} and~\eqref{gamma}, respectively.
Although the latter is also very natural it was hardly visible in existing literature due to many seemingly different technical steps as well as various ways how these quantities enter. As particular cases of the application of Theorem \ref{theo.main}, we recover, in a short manner, known results for quantum waveguides 
(see for instance \cite{Duclos95}, \cite{KS12}, \cite{KR14} or \cite{
KRT15}) and cast a new light on Born-Oppenheimer type results (see \cite{LT13}, \cite{WT14}, \cite{J14} or \cite[Sec.~6.2]{Ray17}). To keep the presentation short we deliberately do not strive for the weakest possible assumptions in examples, although the abstract setting allows for many further generalizations and it clearly indicates how to proceed.

In the second part of the paper, we prove, in the same spirit as previous results, the norm convergence result for a non-self-adjoint Robin Laplacian, see Theorem \ref{th.n-s-a-Robin}. It will partially generalize previous works in the self-adjoint (see \cite{PP-eh}, \cite{KKR16} and \cite{Oliveira}) and in the non-self-adjoint (see \cite{BorisovKr12}) cases. 

As a matter of fact, the crucial step in all the proofs of the paper is an abstract lemma (see Lemma~\ref{lem-diff-resolv}) of an independent interest. It provides a norm resolvent estimate from variational estimates, which is particularly suitable for the analysis of operators defined via sesquilinear forms.

\subsection{Reduction of dimension in an abstract setting
and self-adjoint applications} \label{sec-intro-abstract}\label{sec.general-reduction}

We first describe the reduction of dimension for an operator of the form 
\begin{equation} \label{def-LL}
\LL = S^* S + T, \qquad T = \bigoplus _{s \in \Sigma} T_s, 
\end{equation}
acting on the Hilbert space $\Hc = \bigoplus_{s \in \Sigma} \Hc_s$. 
The norm and inner product in~$\Hc$ will be denoted
by~$\|\cdot\|$ and~$\langle\cdot,\cdot\rangle$, respectively; 
the latter is assumed to be linear in the second argument.

Here $\Sigma$ is a measure space and $T_s$ is a self-adjoint non-negative operator
on a Hilbert space $\Hc_s$ for all $s \in \Sigma$. Precise definitions will be given in Section~\ref{sec-abstract}. A typical example is 
the Schr\"odinger operator
\[
H = (-i\hbar\partial_s)^2 + (-i\partial_t)^2+V(s,t)\,,
\]
acting on $L^2(\R_s \times \R_t)$. We consider a function $s \mapsto \gamma_s$ such that 
\begin{equation}\label{gamma}
\gamma = \inf_{s \in \Sigma} \g_s > 0\,.
\end{equation}
Then we denote by $\Pi_s \in \Lc(\Hc_s)$ the spectral projection of $T_s$ on $[0,\gamma_s)$, and we set $\Pi_s^\bot = \mathrm{Id}_{\Hc_s} - \Pi_s$. We denote by $\Pi$ the bounded operator on $\Hc$ such that for $\Phi \in \mathcal{H}$ and $s \in \Sigma$ we have $(\Pi \Phi)_s = \Pi_s \Phi_s$. We similarly define $\Pi^\bot \in \mathcal{L}(\mathcal{H})$. 
Our purpose is to compare some spectral properties of the operator $\LL$ with those of the simpler operator 
\begin{equation} \label{def-LLeff}
\LLeff = \Pi \LL \Pi.
\end{equation}
This is an operator on $\Pi \Hc$ with domain $\Pi \Hc \cap \Dom(\LL)$. 

In fact, we will first compare $\LL$ with 
\begin{equation} \label{def-hatLL}
\hatLL = \Pi \LL \Pi + \Pi^\bot \LL \Pi^\bot.
\end{equation}
Then $\LLeff$ and $\LLbot$ will be defined as the restrictions of $\hatLL$ to $\Pi\Hc$ and $\Pi^\bot \Hc$, respectively, so that
\[
\hatLL = \LLeff \oplus \LLbot.
\]
We will give a sufficient condition for $z \in \rho(\hatLL)$ to be in $\rho(\LL)$ and, in this case, an estimate for the difference of the resolvents. Then, since $\Pi \Hc$ and $\Pi^\bot \Hc$ reduce $\hatLL$, it is not difficult to check that far from the spectrum of $\LLbot$ the spectral properties of $\hatLL$ are the same as those of $\LLeff$, so we can state a similar statement with $\hatLL$ replaced by $\LLeff$. In applications, we can for instance prove that the first eigenvalues of $\LL$ are close to the eigenvalues of the simpler operator $\LLeff$.
 
We assume that $\Dom(S)$ is invariant under $\Pi$, that $[S,\Pi]$ extends to a bounded operator on~$\Hc$, and we set
\begin{equation}\label{a.def}
a = \frac{\|[S,\Pi]\|_{\mathcal{L}(\mathcal{H})}}{\sqrt{\gamma}}\,.
\end{equation}
For $z \in \C$, we also define 
\begin{equation} \label{def-eta}
\begin{aligned}
\eta_{1}(z)&= \frac{3}{\sqrt{2}} a^2 \g+\frac{6a}{\sqrt{2}}(1+a)|z|+\frac{3a}{\gamma\sqrt{2}}\left(2+\frac{a}{\sqrt{2}}\right)|z|^2\,,\\
\eta_{2}(z)&=\frac{3a}{\sqrt{2}}(1+a)+\frac{3a}{\gamma\sqrt{2}}\left(2+\frac{a}{\sqrt{2}}\right)|z|\,,\\
\eta_{3}(z)&=\frac{3a}{\sqrt{2}}\left(1+\frac{a}{\sqrt{2}}\right)+\frac{3a}{\gamma\sqrt{2}}\left(2+\frac{a}{\sqrt{2}}\right)|z|\,,\\
\eta_{4}(z)&=\frac{3a}{\gamma\sqrt{2}}\left(2+\frac{a}{\sqrt{2}}\right)\,.
\end{aligned}
\end{equation}

\begin{theorem}\label{theo.main}
Let $z\in \rho(\hatLL)$. If 
\[
1 -\eta_{1}(z) \|(\hatLL-z)^{-1}\|-\eta_{2}(z)>0,
\]
then $z\in\rho(\LLo)$ and 
\begin{multline*}
\|(\LLo-z)^{-1}-(\hatLL-z)^{-1}\|\\
\leq \eta_1(z) \|(\LLo-z)^{-1} \| \|(\hatLL-z)^{-1}\| + \eta_2(z) \|(\LLo-z)^{-1} \|+\eta_{3}(z)\|(\hatLL-z)^{-1}\|+\eta_{4}(z).
\end{multline*}
In particular,
\[
\|(\LLo-z)^{-1}\|\leq  \frac{ ( \eta_{3}(z)+1) \| (\hatLL-z)^{-1} \|+\eta_{4}(z)}{1- \eta_1(z) \|(\hatLL-z)^{-1}\|-\eta_2(z)}.
\]
\end{theorem}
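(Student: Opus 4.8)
The plan is to compare $\LLo = S^*S + T$ with $\hatLL$ by working at the level of the quadratic forms and exploiting the fact that $\Pi$ and $\Pi^\bot$ commute with $T$ (since $T = \bigoplus_s T_s$ and $\Pi_s$ is a spectral projection of $T_s$), so that $\hatLL$ differs from $\LLo$ only through the off-diagonal blocks $\Pi S^* S \Pi^\bot$ and $\Pi^\bot S^* S \Pi$. First I would write, for $u \in \Dom(S)$, the decomposition $Su = S\Pi u + S\Pi^\bot u$ and use the hypothesis that $[S,\Pi]$ is bounded to pass the projections through~$S$ up to controlled error terms; this produces the operator $R := \LLo - \hatLL$ as a sum of terms each of which contains at least one factor of $[S,\Pi]$ (hence a factor $\|[S,\Pi]\| = a\sqrt\gamma$), together with factors of $S\Pi$, $S\Pi^\bot$, $\Pi$, $\Pi^\bot$. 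The key quantitative input is the a priori bound coming from~$\gamma$: on $\Pi^\bot\Hc$ one has $T \geq \gamma$, hence $\|S\Pi^\bot u\|^2 \leq \langle u, \hatLL \Pi^\bot u\rangle$ and $\gamma\|\Pi^\bot u\|^2 \leq \langle u,\hatLL\Pi^\bot u\rangle$; resolvent versions of these give $\|S\Pi^\bot(\hatLL-z)^{-1}\|$ and $\|\Pi^\bot(\hatLL-z)^{-1}\|$ bounded in terms of $\|(\hatLL-z)^{-1}\|$, $|z|$ and $\gamma^{-1/2}$. These are exactly the estimates that turn each occurrence of $[S,\Pi]$ into a factor of $a = \|[S,\Pi]\|/\sqrt\gamma$ and generate the polynomial-in-$|z|$ coefficients appearing in $\eta_1,\dots,\eta_4$.

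Next I would invoke the abstract perturbation lemma (Lemma~\ref{lem-diff-resolv}) announced in the introduction, which is precisely designed to upgrade such variational/sesquilinear-form estimates into a norm-resolvent statement: one writes the resolvent identity
\[
(\LLo - z)^{-1} - (\hatLL - z)^{-1} = -(\LLo - z)^{-1}\,(\LLo - \hatLL)\,(\hatLL - z)^{-1}
\]
in a form-sense, i.e. testing against $(\LLo-\bar z)^{-1}\varphi$ and $(\hatLL-z)^{-1}\psi$, and bounds the right-hand side using the four types of factors above. Collecting the terms according to whether they carry zero, one, or two resolvent factors of $\LLo$ respectively $\hatLL$ yields exactly the four-term bound in the statement, with $\eta_1$ multiplying the term with both $\|(\LLo-z)^{-1}\|$ and $\|(\hatLL-z)^{-1}\|$, $\eta_2$ the term with only $\|(\LLo-z)^{-1}\|$, $\eta_3$ the term with only $\|(\hatLL-z)^{-1}\|$, and $\eta_4$ the constant term. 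The hypothesis $1 - \eta_1(z)\|(\hatLL-z)^{-1}\| - \eta_2(z) > 0$ is the smallness condition that makes the implicit bound on $\|(\LLo-z)^{-1}\|$ solvable: moving the $\|(\LLo-z)^{-1}\|$ terms to the left and using the triangle inequality $\|(\LLo-z)^{-1}\| \leq \|(\hatLL-z)^{-1}\| + \|(\LLo-z)^{-1}-(\hatLL-z)^{-1}\|$ gives, after rearrangement, the final displayed bound on $\|(\LLo-z)^{-1}\|$; in particular this shows $(\LLo-z)^{-1}$ is bounded, hence $z \in \rho(\LLo)$.

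The main obstacle is bookkeeping rather than conceptual: one must carefully expand $\LLo - \hatLL$ in terms of $[S,\Pi]$, track which terms acquire the a priori $\gamma$-bound (and thus how many powers of $\gamma^{-1/2}$ and $|z|$ each picks up), and verify that the numerical constants ($3/\sqrt2$, $6/\sqrt2$, etc.) come out exactly as in~\eqref{def-eta}; some care is also needed with domain issues, ensuring $(\hatLL-z)^{-1}\Hc \subset \Dom(S)$ and that all the form manipulations are justified — this is where invoking Lemma~\ref{lem-diff-resolv} in its sesquilinear-form formulation, rather than manipulating unbounded operators directly, is essential. I would also need the elementary fact that far from $\Sp(\LLbot)$ one may replace $\hatLL$ by $\LLeff$, but that is deferred to after this theorem.
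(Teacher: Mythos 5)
Your proposal follows essentially the same route as the paper: the paper bounds the form difference $\Qo(\Phi,\Psi)-\hatQ(\Phi,\Psi)$, which reduces to the off-diagonal terms $\Qo(\Pi^\bot\Phi,\Pi\Psi)+\Qo(\Pi\Phi,\Pi^\bot\Psi)$, by exactly the commutator and spectral-gap estimates you describe (Proposition~\ref{prop.diff.Q0Q} and Theorem~\ref{th-comp-forms}), and then feeds this into Lemma~\ref{lem-diff-resolv} with $\AA=\LLo-z$, $\hAA=\hatLL-z$ to obtain the resolvent bounds. The only point to keep in mind is that the conclusion $z\in\rho(\LLo)$ comes from the lemma's injectivity/closed-range argument (plus injectivity of $\AA^*$, immediate here by self-adjointness), not from manipulating $(\LLo-z)^{-1}$ before it is known to exist.
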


In order to compare the resolvent of $\LL$ to the resolvent of $\LLeff$, this theorem is completed by the following easy estimate:

\begin{proposition}\label{prop.compl-theo.main}
We have $\Sp(\hatLL) = \Sp(\LLeff) \cup \Sp(\LLbot)$ and, for $z \in \rho(\hatLL)$ such that $z\notin [\gamma,+\infty)$,
\[
\nr{(\hatLL-z)^{-1} - (\LLeff-z)^{-1} \Pi}
\leq \frac 1 {\mathsf{dist}(z, [\gamma,+\infty))}\,.
\]
In this estimate, it is implicit that $(\LLeff-z)^{-1}$ is composed on the left by the inclusion $\Pi \Hc \to \Hc$.
\end{proposition}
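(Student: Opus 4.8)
The plan is to prove Proposition \ref{prop.compl-theo.main} by exploiting the fact that $\Pi\Hc$ and $\Pi^\bot\Hc$ reduce $\hatLL$, so that $\hatLL = \LLeff \oplus \LLbot$ is a genuine orthogonal direct sum. First I would record the spectral consequence: for a direct sum of self-adjoint operators, $\Sp(\hatLL) = \Sp(\LLeff) \cup \Sp(\LLbot)$, and for $z \in \rho(\hatLL) = \rho(\LLeff) \cap \rho(\LLbot)$ the resolvent decomposes as $(\hatLL-z)^{-1} = (\LLeff-z)^{-1}\Pi \oplus (\LLbot-z)^{-1}\Pi^\bot$, where the first summand is understood as composed with the inclusion $\Pi\Hc \hookrightarrow \Hc$. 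Consequently
\[
(\hatLL-z)^{-1} - (\LLeff-z)^{-1}\Pi = (\LLbot-z)^{-1}\Pi^\bot,
\]
and since this operator is supported on $\Pi^\bot\Hc$, its norm equals $\|(\LLbot-z)^{-1}\|_{\Lc(\Pi^\bot\Hc)}$.

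The second step is to bound $\|(\LLbot-z)^{-1}\|$ by $\mathsf{dist}(z,\Sp(\LLbot))^{-1}$ via the spectral theorem for the self-adjoint operator $\LLbot$, and then to observe that $\Sp(\LLbot) \subset [\gamma,+\infty)$. The latter inclusion is the only substantive point: $\LLbot = \Pi^\bot \LL \Pi^\bot$ restricted to $\Pi^\bot\Hc$, and on that subspace, for $\Phi = \Pi^\bot\Phi$ in the form domain, one has $\langle \Phi, \LL\Phi\rangle = \|S\Phi\|^2 + \langle \Phi, T\Phi\rangle \geq \langle \Phi, T\Phi\rangle$; since $\Pi^\bot_s$ projects onto the spectral subspace of $T_s$ above $\gamma_s \geq \gamma$, we get $\langle \Phi, T\Phi\rangle \geq \gamma \|\Phi\|^2$. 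Hence $\LLbot \geq \gamma$, so $\mathsf{dist}(z,\Sp(\LLbot)) \geq \mathsf{dist}(z,[\gamma,+\infty))$ whenever $z \notin [\gamma,+\infty)$, and combining with the previous step yields the claimed estimate.

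The main obstacle — more a bookkeeping nuisance than a genuine difficulty — is making rigorous the claim that $\Pi\Hc$ and $\Pi^\bot\Hc$ \emph{reduce} $\hatLL$ in the operator-domain sense, i.e.\ that $\hatLL$ really is the orthogonal direct sum of its two compressions and that $\Dom(\hatLL)$ splits accordingly; this requires knowing that $\Pi$ maps the form domain (equivalently $\Dom(S)$, by the standing assumption that $\Dom(S)$ is $\Pi$-invariant and $[S,\Pi]$ is bounded) into itself, which is exactly what is assumed in the setup and will be established more carefully in Section \ref{sec-abstract}. Once the reducing property is in hand, the rest is the elementary spectral-theorem estimate above. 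I would therefore keep the proof to a few lines, citing the form-domain invariance from the abstract section and the lower bound $\LLbot \geq \gamma$, and then concluding by the direct-sum structure of the resolvent.
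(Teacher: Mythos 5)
Your proof is correct and follows essentially the same route as the paper: the paper establishes the decomposition $\hatLL=\LLeff\oplus\LLbot$ via the form-domain invariance lemmas of Section \ref{sec-abstract} and then deduces the proposition ``from the spectral theorem,'' which is exactly your resolvent identity $(\hatLL-z)^{-1}-(\LLeff-z)^{-1}\Pi=(\LLbot-z)^{-1}\Pi^\bot$ combined with the bound $\|(\LLbot-z)^{-1}\|\leq \mathsf{dist}(z,[\gamma,+\infty))^{-1}$. Your explicit verification that $\LLbot\geq\gamma$ (from $Q_T(\Phi)\geq\gamma\|\Phi\|^2$ on $\mathrm{Ran}(\Pi^\bot)$) merely spells out what the paper leaves implicit.
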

\begin{remark}\label{rem.0}
These results cover a wide range of situations. In Section \ref{sec-examples}, we will discuss three paradigmatic applications. The space $\Sigma$ will be $\R$ or a submanifold of $\R^d$, $d \geq 2$. The set $\Hc_s$ is fixed, but the Hilbert structure thereon may depend on $s$. In our examples $(T_s)_{s\in\Sigma}$ is related to an analytic family of self-adjoint operators which are not necessarily non-negative. Nevertheless, under suitable assumptions, we can reduce ourselves to the non-negative case. Indeed, in our applications, for all $s\in\Sigma$, $T_{s}$ is bounded from below, independently of $s\in\Sigma$. Moreover, the bottom of the spectrum of $T_s$ will be an isolated simple eigenvalue $\mu_1(s)$. Then, we notice that $\inf_{s\in\Sigma}\mu_{1}(s)$ is well-defined and that $T_{s}-\inf_{s\in\Sigma}\mu_{1}(s)$ is non-negative. We denote by $u_1(s)$ a corresponding eigenfunction. We can assume that $\nr{u_1(s)}_\Hc = 1$ for all $s \in \Sigma$ and that $u_1$ is a smooth function of $s$. $\Pi_s$ is the projection on $u_{1}(s)$ and $\Pi \Hc$ can be identified with $L^2(\Sigma)$ via the map $\f \mapsto (s \mapsto \f(s) u_1(s))$. In particular $\LLeff$ can be seen as an operator on $L^2(\Sigma)$, which is what is meant by the ``reduction of dimension''. Finally, $\g_s$ is defined as the bottom $\mu_2(s)-\inf_{s\in\Sigma}\mu_{1}(s)$ of the remaining part of the spectrum and
\begin{equation} \label{def-gamma}
\g = \inf_{s} \mu_2(s)-\inf_{s}\mu_{1}(s) \leq \inf \Sp((\LL-\inf_{s\in\Sigma}\mu_{1}(s))^\perp)\,.
\end{equation}
We recall that we assume the spectral gap condition $\gamma>0$, see \eqref{gamma}.
\end{remark}

\subsection{The Robin Laplacian in a shrinking layer
as a non-self-adjoint application} \label{sec-intro-nsa-robin}

We now consider a reduction of dimension result in a non-self-adjoint setting, namely the Robin Laplacian in a shrinking layer.
Let $d \geq 2$. Here, $\Sigma$ is an orientable smooth (compact or non-compact)
hypersurface in $\R^d$ without boundary. 
The orientation can be specified by a globally defined unit normal vector field $n : \Sigma \to \mathbb S^{d-1}$. Moreover $\Sigma$ is endowed with the Riemannian structure inherited from the Euclidean structure defined on $\R^d$. 
We assume that $\Sigma$ admits a tubular neighborhood, i.e. for $\eps > 0$ small enough the map
\begin{equation} \label{def-Theta}
\Theta_\eps :  (s,t) \mapsto  {s + \eps t n(s)}
\end{equation}
is injective on $\overline{\Sigma} \times [-1,1]$ and defines a diffeomorphism from $\Sigma \times (-1,1)$ to its image. We set
\begin{equation} \label{def-Omega-eps}
\Omega = \Sigma \times (-1,1) \quad \text{and} \quad \Omega_{\eps} = \Theta_\eps (\Omega)\,.
\end{equation}
Then $\Omega_{\eps}$ has the geometrical meaning of a non-self-intersecting layer delimited by the hypersurfaces
\[
\Sigma_{\pm,\eps} = \Theta_\eps (\Sigma \times \{\pm 1\})\,.
\]
Moreover $\Sigma_{\pm,\eps}$ can be identified with $\Sigma$ via the diffeomorphisms
\[
\Theta_{\pm,\eps} : \fonc {\Sigma} {\Sigma_{\pm,\eps}} {s} {s \pm \eps n(s)\,.}
\]

Let $\alpha : \Sigma \to \C$ be a smooth bounded function. We set $\alpha_{\pm,\eps} = \alpha \circ \Theta_{\pm,\eps} ^{-1} : \Sigma_{\pm,\eps} \to \C$ and we consider on $L^2(\Omega_{\eps})$ the closed operator $\PP_{\eps,\alpha}$ (or simply $\PP_{\eps}$ if no risk of confusion) defined as the usual Laplace operator on $\Omega_{\eps}$ subject to the Robin boundary condition 
\begin{equation} \label{BC-Robin}
\frac {\partial u}{\partial n} + \alpha_{\pm,\eps} u = 0, \quad \text{on } \Sigma_{\pm,\eps}\,.
\end{equation}
\begin{remark}
Note that a very special choice of Robin boundary conditions 
is considered in this section. Indeed, the boundary-coupling functions 
considered on~$\Sigma_{+,\eps}$ and~$\Sigma_{-,\eps}$ 
are the same except for a switch of sign, see~\eqref{Robin.form}.
More specifically, $\alpha_{\pm,\eps}(s) = \alpha(s)$ for every $s \in \Sigma$
and~$n$ is an outward normal to~$\Omega_\eps$ on one of 
the connected parts~$\Sigma_{\pm,\eps}$ of the boundary~$\partial\Omega_\eps$,
while it is inward pointing on the other boundary.
This special choice is motivated 
by Parity-Time-symmetric waveguides \cite{BK,BorisovKr12}
as well as by a self-adjoint analogue considered in~\cite{Oliveira}. 
It is straightforward to extend the present procedure 
to the general situation of two different 
boundary-coupling functions on~$\Sigma_{+,\eps}$ and~$\Sigma_{-,\eps}$,
but then the effective operator will be $\eps$-dependent 
(in analogy with the Dirichlet boundary conditions,
see Proposition~\ref{cor.waveguides})
or a renormalization would be needed (cf.~\cite{KS12}).
\end{remark}

Our purpose is to prove that, at the limit when $\eps$ goes to 0, the operator $\PP_\eps$ converges in a norm-resolvent sense to a Schr\"odinger operator 
\[
\LLeff = -\Delta_\Sigma + V_\eff\,,
\]
on $\Sigma$. Here $-\Delta_\Sigma$ is the Laplace-Beltrami operator on $\Sigma$, and the potential $V_\eff$ depends both on the geometry of $\Sigma$ and on the boundary condition. More precisely we have
\begin{equation}\label{Veff}
V_\eff = |\alpha|^2 - 2\alpha \Re(\alpha) - \alpha (\kappa_1 + \dots + \kappa_{d-1}).
\end{equation}
Note that the sum of the principal curvatures is proportional to the mean curvature of $\Sigma$. Notice also that $H_\eff$ defines an (unbounded) operator on the Hilbert space $L^2(\Sigma)$. In particular $\PP_\eps$ and $H_\eff$ do not act on the same space.

We denote by $\Pi \in \mathcal{L}(L^2(\Omega))$ the projection on functions which do not depend on $t$: for $u \in L^2(\Omega)$ and $(s,t) \in \Omega$ we set 
\[
(\Pi u) (s,t) = \frac 1 2 \int_{-1}^1 u(s,\theta)\dx \theta\,. 
\]
Then we define $\Pi^\bot = \mathrm{Id} - \Pi$. 

\begin{theorem}\label{th.n-s-a-Robin}
Let $K$ be a compact subset of $\rho(H^{\eff})$. Then there exists $\eps_0 > 0$ and $C \geq 0$ such that for $z \in K$ and $\eps \in (0,\eps_0)$ we have $z \in \rho(H_{\eps})$ and
\[
\|(\PP_\eps-z)^{-1} - U_\eps ^{-1} (\LLeff - z)^{-1} \Pi U_\eps\|_{\mathcal{L}(L^2(\Omega_{\eps}))} \leq C \eps\,.
\]
Here $U_\eps$ is a unitary transformation from $L^2(\Omega_{\eps},\dx x)$ to $L^2(\Omega, w_\eps(x) \dx \sigma \dx t)$, where for some $C > 1$ we have 
\[
\forall \eps \in (0,\eps_0), \forall x \in \Omega, \quad \frac 1 C \leq|w_\eps(x)| \leq C\,.
\]
\end{theorem}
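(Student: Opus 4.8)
The plan is to deduce Theorem~\ref{th.n-s-a-Robin} from the abstract variational resolvent estimate of Lemma~\ref{lem-diff-resolv}, after straightening the layer and isolating the transverse boundary layer. I would first pull everything back to the fixed product $\Omega=\Sigma\times(-1,1)$ through the diffeomorphism $\Theta_\eps$ of \eqref{def-Theta} together with the rescaling of the transverse variable; this produces the unitary $U_\eps\colon L^2(\Omega_\eps,\dx x)\to L^2(\Omega,w_\eps\,\dx\sigma\,\dx t)$ with $w_\eps(s,t)=\prod_{j=1}^{d-1}\bigl(1-\eps t\,\kappa_j(s)\bigr)$ the (positive, for $\eps$ small) Jacobian weight, whose uniform bounds $C^{-1}\le w_\eps\le C$ follow from boundedness of the principal curvatures (automatic if $\Sigma$ is compact). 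Then $\PP_\eps$ is unitarily equivalent to $\LLe:=U_\eps\PP_\eps U_\eps^{-1}$, which I would handle through its sesquilinear form on $H^1(\Omega)$: it has a large transverse block of order $\eps^{-2}$, essentially the form of $-\eps^{-2}\partial_t^2$ on $(-1,1)$ with boundary term $+\alpha(s)$ on $\{t=+1\}$ and $-\alpha(s)$ on $\{t=-1\}$ (this sign switch being exactly the effect of the special Robin condition of the Remark after \eqref{BC-Robin}), a longitudinal block that is an $O(\eps)$-perturbation of $-\Delta_\Sigma$, and remainders controlled by $O(\eps)$ times the form. A further conjugation by the multiplier $e^{-\eps\alpha(s)t}$, which equals $\mathrm{Id}+O(\eps)$ as a bounded, uniformly invertible operator, turns the Robin boundary term into a Neumann one at the price of an explicit additional potential; this is convenient below, and the $O(\eps)$ cost of this non-unitary conjugation is harmless for an $O(\eps)$ estimate.

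\textbf{Step 2 (transverse mode and effective operator).} For frozen $s$ the transverse operator has ground state $e^{-\eps\alpha(s)t}$ (the constant, after the conjugation of Step~1) with eigenvalue $-\alpha(s)^2$ and a spectral gap of order $\eps^{-2}$; this huge gap is the source of the $O(\eps)$ rate. Reducing onto this mode --- equivalently using the projection $\Pi$ onto $t$-independent functions, since the mode differs from a constant by $O(\eps)$ --- and expanding the straightened form to first order in $\eps$, the surviving $\eps^0$-contributions are: $|\alpha|^2$ from the transverse Dirichlet energy $\|\partial_t(e^{-\eps\alpha t})\|^2$; $-2\alpha\,\Re(\alpha)$ from the sign-switched boundary term evaluated on the mode (so that their sum is $|\alpha|^2-2\alpha\Re(\alpha)=-\alpha^2$); and the mean-curvature term $-\alpha(\kappa_1+\dots+\kappa_{d-1})$ from the first-order geometry of the Jacobian weight. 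This is exactly $V_\eff$ of \eqref{Veff}, so the candidate limit is $\LLeff=-\Delta_\Sigma+V_\eff$ on $L^2(\Sigma)$, identified with $\Pi L^2(\Omega)$ up to the $O(\eps)$ modification of the measure ($\int_{-1}^1 w_\eps\,\dx t=2+O(\eps^2)$).

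\textbf{Step 3 (variational comparison and conclusion).} With $A=\LLe$, $B=\LLeff$ and the identification built from $\Pi$ and the inclusion $L^2(\Sigma)\hookrightarrow L^2(\Omega)$, I would verify the hypotheses of Lemma~\ref{lem-diff-resolv}: (i) uniform m-sectoriality of $\LLe$ with resolvent bounds on $K$ uniform in $\eps$ --- obtained, in the spirit of Theorem~\ref{theo.main}, by first proving the form estimates and then deducing $K\subset\rho(\PP_\eps)$ for $\eps$ small from a smallness/Neumann-type argument; (ii) the decay $\|\Pi^\bot(\LLe-z)^{-1}f\|_{H^1(\Omega)}\le C\eps\|f\|$, where one power of $\eps^{-2}$ from the transverse gap beats the $\eps^{-1}$ cost of differentiating across the thin direction, while one still needs the leading transverse profile of $(\LLe-z)^{-1}f$ (the boundary layer $e^{-\eps\alpha(s)t}$) to recover the $-\alpha^2$; and (iii) the main variational bound comparing $q_\eps$ with the effective form, of the type $\bigl|q_\eps(u,v)-q_\eff(\Pi u,\Pi v)\bigr|\le C\eps\,\|u\|_{H^1}\|v\|_{H^1}$ on the relevant test functions. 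Inserting these into Lemma~\ref{lem-diff-resolv} gives $\|(\LLe-z)^{-1}-(\LLeff-z)^{-1}\Pi\|\le C\eps$ uniformly on $K$, and conjugating back by $U_\eps$ (and absorbing the $O(\eps)$ from Step~1) yields the claimed inequality.

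\textbf{Main obstacle.} The hard part is the careful bookkeeping of Steps~1--2. Because $\alpha$ is complex the operators are merely sectorial, so the min-max and coercivity shortcuts of the self-adjoint literature are unavailable, and one must track by hand, in the straightened variables, every term of order $\eps^{-1}$ and $\eps^0$ to certify that the effective potential is \emph{exactly} \eqref{Veff}: that the curvature term carries the right coefficient, that the combination $|\alpha|^2-2\alpha\Re(\alpha)$ is complete, and that no stray $\eps^{-1}$ term survives. Technically this rests on a trace/interpolation inequality bounding the boundary term $\int_{\Sigma\times\{\pm1\}}\alpha\,u\bar v$ by the transverse Dirichlet energy (favorable here, being weighted by $\eps^{-2}$), together with uniform $L^\infty$-control of the metric coefficients and of their $\eps$-expansions; once these hold the remaining errors are genuinely $O(\eps)$, which Lemma~\ref{lem-diff-resolv} turns into the norm-resolvent rate.
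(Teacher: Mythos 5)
Your overall route is the paper's route (straighten via $\Theta_\eps$, remove the Robin condition by the exponential factor $e^{\alpha\eps t}$, identify $V_\eff$ as in \eqref{Veff}, then feed a variational estimate into Lemma~\ref{lem-diff-resolv}), and your bookkeeping of the $\eps^0$ terms, including the identity $|\alpha|^2-2\alpha\Re(\alpha)=-\alpha^2$ and the curvature term from the weight, is correct. But the decisive steps are left as claims, and two of them do not work as you state them. First, your comparison estimate (iii), $|q_\eps(u,v)-q_\eff(\Pi u,\Pi v)|\leq C\eps\|u\|_{H^1}\|v\|_{H^1}$, is neither in the format required by Lemma~\ref{lem-diff-resolv} (which needs weights $\|\phi\|$, $\|\AA\phi\|$, $\|\psi\|$, $\|\hAA^*\psi\|$, i.e.\ graph norms) nor true: the form $\Qe$ contains the transverse energy $\eps^{-2}\|\partial_t u\|^2$ and, after the conjugation, the singular cross term $\tfrac{2i}{\eps}\int\Im(\alpha)\,\partial_t u\,\bar v\,w_\eps$, neither of which appears in $q_\eff(\Pi u,\Pi v)$, so the difference cannot be $O(\eps)$ in $H^1\times H^1$. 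The paper instead compares $\Qe$ with the intermediate form $\hatQe$ (which keeps the full transverse kinetic term) in graph norms (Proposition~\ref{prop.D-eps}), converting $H^1$-type quantities into graph norms via the accretivity bound of Lemma~\ref{lem-minor-Qe}, and only afterwards replaces $\hatLLe$ by $\LLeff\Pi$ using the $O(\eps^{-2})$ transverse gap (Lemma~\ref{lem-res-bot}). Second, and most importantly, the $\Im(\alpha)$ cross term is exactly the place where the naive estimates you invoke fail to give the rate: from the accretivity bound one only gets $\eps^{-1}\|\partial_t\psi\|\,\|\varphi\|=O(1)$ in graph norms, and even your auxiliary claim (ii) $\|\Pi^\bot(\LLe-z)^{-1}f\|_{H^1}\leq C\eps\|f\|$ would still only yield $O(1)$ for this term, not $O(\eps)$. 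The paper needs a genuine extra idea here: testing the form against $t\,\Im(\alpha)\varphi$ and using $\Qe(t\Im(\alpha)\varphi,\psi)=\langle t\Im(\alpha)\varphi,\LLe\psi\rangle$ to trade the $\eps^{-1}$-singular integral for graph-norm quantities. Your proposal names this difficulty (``no stray $\eps^{-1}$ term survives'') but offers no mechanism that actually removes it; a trace/interpolation inequality does not.

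There is a further gap specific to the non-self-adjoint setting: Lemma~\ref{lem-diff-resolv} only gives injectivity and closed range of $\LLe-z$; to conclude $z\in\rho(\LLe)$ one must also know that $(\LLe-z)^*$ is injective. The paper secures this through the complex symmetry ($J$-self-adjointness) of $\PP_{\eps,\alpha}$, i.e.\ $\PP_{\eps,\alpha}^*=\PP_{\eps,\overline\alpha}=J\PP_{\eps,\alpha}J$ (Lemma~\ref{lem.symmetry}), which excludes residual spectrum; your ``smallness/Neumann-type argument'' does not address this point. Finally, a cosmetic difference: you treat the multiplier $e^{-\eps\alpha t}$ as a non-unitary $\mathrm{Id}+O(\eps)$ conjugation, whereas the paper absorbs its modulus into the weight $w_\eps=e^{-2\eps t\Re(\alpha)}\tilde w_\eps$, making the whole map $U_\eps$ unitary onto the weighted space appearing in the statement; your variant can be made to work but requires tracking the extra $O(\eps)$ conjugation error, and as written it does not produce the unitary $U_\eps$ asserted in the theorem.
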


As for Theorem \ref{theo.main} it is implicit that the resolvent $(\LLeff-z)^{-1}$ is composed on the left by the inclusion $\Pi L^2(\O_\e) \to L^2(\O_\e)$. Moreover the operator $\LLeff$ on $L^2(\Sigma)$ has been identified with an operator on $\Pi L^2(\O_\e)$.

\begin{remark}
In the geometrically trivial situation $\Sigma=\R^{d-1}$
and special choice $\Re(\alpha)=0$,
a version of Theorem~\ref{th.n-s-a-Robin} was previously 
established in~\cite{BorisovKr12}. 
At the same time, in the self-adjoint case $\Im(\alpha)=0$
and very special geometric setting $d=1$ ($\Sigma$ being a curve),
a version of Theorem~\ref{th.n-s-a-Robin} is due to~\cite{Oliveira}. 
In our general setting, it is interesting to see
how the geometry enters the effective dynamics,
through the mean curvature of~$\Sigma$, see~\eqref{Veff}.
\end{remark}

\subsection{From variational estimates to norm resolvent convergence}

All the results of this paper are about estimates of the difference of resolvents of two operators. These estimates will be deduced from the corresponding estimates of the associated quadratic forms by the following general lemma:

\begin{lemma}\label{lem-diff-resolv}
Let $\Kc$ be a Hilbert space. Let $\AA$ and $\hAA$ be two closed densely defined operators on $\Kc$. Assume that $\hAA$ is bijective and that there exist $\eta_{1},\eta_{2},\eta_{3},\eta_{4}\geq 0$ such that $1 -\eta_{1} \|\hAA^{-1}\|-\eta_{2}>0$ and
\begin{multline*}
\forall \phi \in \Dom(\AA), \forall \psi \in \Dom(\hAA^*)\,,\\
\quad |\langle\AA\phi,\psi\rangle - \langle\phi,\hAA^* \psi\rangle| 
\leq \eta_1 \|\phi\|\|\psi\|+\eta_2 \|\phi\|\|\hAA^* \psi\|+\eta_{3}\|\AA \phi\|\|\psi\|+ \eta_{4} \| {\AA} \phi\|\|\hAA^* \psi\|\,.
\end{multline*}
Then $\AA$ is injective with closed range. If moreover $\AA^*$ is injective, then $\AA$ is bijective and we have the estimates 
\begin{equation}\label{A^-1.est}
\|\AA^{-1}\|\leq   \frac{ ( \eta_{3}+1) \| \hat {\AA}^{-1} \|+\eta_{4}}{1- \eta_1 \|\hAA^{-1}\|-\eta_2}
\end{equation}
and
\[
\left\| {\AA}^{-1}- \hAA^{-1}\right\|
\leq \eta_1 \|\AA^{-1} \| \|\hAA^{-1}\| + \eta_2 \|\AA^{-1} \|+\eta_{3}\|\hAA^{-1}\|+\eta_{4}.
\] 
\end{lemma}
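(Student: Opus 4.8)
\textbf{Proof strategy for Lemma~\ref{lem-diff-resolv}.}
The plan is to extract from the sesquilinear estimate a bound on $\|\phi\|$ in terms of $\|\AA\phi\|$, then combine it with the hypothesis $1-\eta_1\|\hAA^{-1}\|-\eta_2>0$ to deduce injectivity and a lower bound for~$\AA$; closedness of the range and surjectivity then follow from standard arguments. First I would rewrite, for $\phi\in\Dom(\AA)$, the vector $\hAA^{-1}\AA\phi-\phi$: since $\hAA$ is bijective, for any $\psi\in\Dom(\hAA^*)$ we have $\langle \hAA^{-1}\AA\phi-\phi,\hAA^*\psi\rangle = \langle\AA\phi,\psi\rangle - \langle\phi,\hAA^*\psi\rangle$, which is exactly the quantity controlled by the hypothesis. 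Because $\hAA^*$ is bijective (as $\hAA$ is), $\hAA^*$ maps $\Dom(\hAA^*)$ onto all of $\Kc$, so taking the supremum over $\psi$ with $\|\hAA^*\psi\|\le 1$ and writing $\|\psi\|\le\|\hAA^{-1}\|\,\|\hAA^*\psi\|$ gives
\[
\|\hAA^{-1}\AA\phi-\phi\| \le \bigl(\eta_1\|\hAA^{-1}\|+\eta_2\bigr)\|\phi\| + \bigl(\eta_3\|\hAA^{-1}\|+\eta_4\bigr)\|\AA\phi\|\,.
\]
The triangle inequality then yields $\|\phi\| \le \|\hAA^{-1}\AA\phi\| + \|\hAA^{-1}\AA\phi-\phi\| \le (\eta_1\|\hAA^{-1}\|+\eta_2)\|\phi\| + (\eta_3+1)\|\hAA^{-1}\|\,\|\AA\phi\| + \eta_4\|\AA\phi\|$, and moving the $\|\phi\|$ term to the left (licit precisely because $1-\eta_1\|\hAA^{-1}\|-\eta_2>0$) gives the a priori bound
\[
\|\phi\| \le \frac{(\eta_3+1)\|\hAA^{-1}\| + \eta_4}{1-\eta_1\|\hAA^{-1}\|-\eta_2}\,\|\AA\phi\|\,, \qquad \phi\in\Dom(\AA)\,.
\]

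From this coercivity estimate, $\AA$ is injective and, since $\AA$ is closed, its range is closed: if $\AA\phi_n\to g$, the bound shows $(\phi_n)$ is Cauchy, $\phi_n\to\phi$, and closedness gives $\phi\in\Dom(\AA)$, $\AA\phi=g$. For surjectivity I would use that $\overline{\Ran(\AA)} = \Ker(\AA^*)^\perp$; since $\Ran(\AA)$ is closed and $\AA^*$ is assumed injective, $\Ran(\AA)=\Kc$, so $\AA$ is bijective. The bound just derived is exactly~\eqref{A^-1.est}.

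For the resolvent-type difference, I would write $\AA^{-1}-\hAA^{-1} = \hAA^{-1}(\hAA-\AA)\AA^{-1}$ in a weak form to avoid domain issues: for $f,g\in\Kc$, set $\phi=\AA^{-1}f\in\Dom(\AA)$ and $\psi=(\hAA^*)^{-1}g\in\Dom(\hAA^*)$, so that
\[
\langle (\AA^{-1}-\hAA^{-1})f, g\rangle = \langle \phi, \hAA^*\psi\rangle - \langle \AA\phi,\psi\rangle\,,
\]
which up to sign is again the controlled quantity. Bounding the right-hand side by $\eta_1\|\phi\|\|\psi\| + \eta_2\|\phi\|\|\hAA^*\psi\| + \eta_3\|\AA\phi\|\|\psi\| + \eta_4\|\AA\phi\|\|\hAA^*\psi\|$ and using $\|\phi\|\le\|\AA^{-1}\|\|f\|$, $\|\AA\phi\|=\|f\|$, $\|\psi\|\le\|\hAA^{-1}\|\|g\|$, $\|\hAA^*\psi\|=\|g\|$ gives
\[
|\langle (\AA^{-1}-\hAA^{-1})f,g\rangle| \le \bigl(\eta_1\|\AA^{-1}\|\|\hAA^{-1}\| + \eta_2\|\AA^{-1}\| + \eta_3\|\hAA^{-1}\| + \eta_4\bigr)\|f\|\|g\|\,,
\]
and taking the supremum over unit $f,g$ yields the claimed operator-norm bound.

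The main obstacle is purely technical: justifying the manipulations with unbounded operators and their adjoints — in particular that $\hAA$ bijective implies $\hAA^*$ bijective with $\|(\hAA^*)^{-1}\|=\|\hAA^{-1}\|$, that $(\hAA^*)^{-1}=(\hAA^{-1})^*$ maps $\Kc$ into $\Dom(\hAA^*)$, and that the pairing identities above hold with all terms well-defined. Once one is careful that $\psi$ always lies in $\Dom(\hAA^*)$ and $\phi$ in $\Dom(\AA)$, every step is a direct application of the hypothesis. No approximation argument is needed since the estimate is assumed on the full domains.
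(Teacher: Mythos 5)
Your argument is correct and essentially the paper's own proof: the paper tests the pairing identity with the single vector $\psi=(\hAA^{-1})^*\phi$ (i.e.\ $\hAA^*\psi=\phi$) to get the quadratic inequality directly, while you take the supremum over all $\psi$ with $\|\hAA^*\psi\|\le 1$ to bound $\|\hAA^{-1}\AA\phi-\phi\|$ first — a cosmetic variant leading to the identical coercivity bound, injectivity/closed-range/adjoint argument, and the same weak-form computation with $\phi=\AA^{-1}f$, $\psi=(\hAA^*)^{-1}g$ for the difference of inverses.
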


Since the proof is rather elementary, let us provide it already now.
\begin{proof}
Let $\phi \in \Dom(\AA)$ and consider $\psi = (\hAA^{-1})^* \phi \in \Dom(\hAA^*)$. We have
\begin{align*}
|\|\phi\|^2- \langle\AA \phi,(\hAA^{-1})^*\phi\rangle|
& = |\langle\phi,\hAA^* \psi\rangle - \langle\AA\phi,\psi\rangle|\\
& \leq (\eta_1 \|\hAA^{-1}\|+\eta_{2})\|\phi\|^2+\left(\eta_{3}\|\hAA^{-1}\|+\eta_{4} \right)\|\AA\phi\|\|\phi\|\,,
\end{align*}
so
\[
\|\phi\|^2 \leq \left(\eta_1 \| \hat {\AA}^{-1} \|+ \eta_2 \right) \|\phi\|^2 + \left(( \eta_{3}+1) \| \hat {\AA}^{-1} \|+\eta_{4}\right)  \|\phi\| \|\AA \phi\|.
\]
Then if 
$
\eta_1 \|\hAA^{-1}\| + \eta_2 < 1\,,
$
we get
\begin{equation}\label{eq.lb-L2}
\|\phi\| \leq \frac{ ( \eta_{3}+1) \| \hat {\AA}^{-1} \|+\eta_{4}}{1- \eta_1 \|\hAA^{-1}\|-\eta_2} \|\AA\phi\|\,.
\end{equation}
In particular, ${\AA}$ is injective with closed range. If ${\AA}^*$ is injective, the range of ${\AA}$ is dense and thus ${\AA}$ is bijective. In particular, with \eqref{eq.lb-L2}, we obtain \eqref{A^-1.est}.

Finally for $f,g\in \Kc$, $\phi= \AA ^{-1} f$ and $\psi = (\hAA^{-1})^*g$ we have
\[
\langle\big( \AA^{-1} - \hAA^{-1} \big) f,g\rangle = \langle\phi,\hAA^* \psi\rangle - \langle\AA\phi,\psi\rangle,
\]
and the conclusion follows by easy manipulations.
\end{proof}

\subsection{Organization of the paper}
In Section \ref{sec-abstract}, we prove Theorem \ref{theo.main}. We first define the operators $\LL$, $\widehat{\mathscr{L}}$ and $\LLeff$, 
and then we show how Lemma \ref{lem-diff-resolv} can be applied. In Section \ref{sec-examples}, we discuss some applications of Theorem \ref{theo.main} to the semiclassical Born-Oppenheimer approximation, the Dirichlet Laplacian on a shrinking tubular neighborhood of an hypersurface and the Robin Laplacian in the large coupling limit. Section \ref{sec-Robin} is devoted to the proof of Theorem \ref{th.n-s-a-Robin} about the non-self-adjoint Robin Laplacian on a shrinking layer.

\section{Abstract reduction of dimension} \label{sec-abstract}

In this section we describe more precisely the setting introduced in Section \ref{sec-intro-abstract} and we prove Theorem \ref{theo.main}. The applications will be given in the following section.

\subsection{Definition of the effective operator}

Let $(\Sigma,\sigma)$ be a measure space. For each $s \in \Sigma$ 
we consider a separable complex Hilbert space $\mathcal{H}_s$.
Then, on $\Hc_s$ we consider a closed symmetric non-negative sesquilinear form $q_s$ with dense domain $\Dom(q_s)$. We denote by $T_s$ the corresponding self-adjoint and non-negative operator, as given by the Representation Theorem. As already said in Section~\ref{sec-intro-abstract}, we consider a function $s \in \Sigma \mapsto \gamma_s \in \R$ whose infimum is positive, see~\eqref{gamma}.
Then we denote by $\Pi_s \in \Lc(\Hc_s)$ the spectral projection of $T_s$ on $[0,\gamma_s)$, and we set $\Pi_s^\bot = \mathrm{Id}_{\Hc_s} - \Pi_s$.

We denote by $\Hc$ the subset of $\bigoplus_{s \in \Sigma} \Hc_s$ which consists of all $\Phi = (\Phi_s)_{s \in \Sigma}$ such that the functions $s \mapsto \nr{\Phi_s}_{\Hc_s}$ and $s \mapsto \nr{\Pi_s \Phi_s}_{\Hc_s}$ are measurable on $\Sigma$ and 
\[
\|\Phi\|^2 = \int_\Sigma \|\Phi_s\|^2_{\mathcal{H}_s} \dx\sigma(s) < + \infty\,.
\]
It is endowed with the Hilbert structure given by this norm. We denote by $\Pi$ the bounded operator on $\Hc$ such that for $\Phi \in \mathcal{H}$ and $s \in \Sigma$ we have $(\Pi \Phi)_s = \Pi_s \Phi_s$. We similarly define $\Pi^\bot \in \mathcal{L}(\mathcal{H})$. 

We say that $\Phi = (\Phi_s)_{s \in \Sigma} \in \mathcal{H}$ belongs to $\Dom(Q_T)$ if $\Phi_s$ belongs to $\Dom(q_s)$ for all $s \in \Sigma$, the functions $s \mapsto {q_s(\Phi_s)}$ and $s \mapsto q_s(\Pi_s \Phi_s)$ are measurable on $\Sigma$ and 
\[
Q_T(\Phi) = \int_\Sigma {q_s(\Phi_s)} \dx\sigma(s) < +\infty\,.
\]
We consider on $\Hc$ an operator $S$ with dense domain $\Dom(S)$. We assume that $\Dom(S)$ is invariant under $\Pi$, that $[S,\Pi]$ extends to a bounded operator on $\Hc$, and we define $a$ as in \eqref{a.def}.
We assume that
\[
\Dom(\Qo) = \Dom(S) \cap \Dom(Q_T)
\]
is dense in $\mathcal{H}$, and for $\Phi \in \Dom(\Qo)$ we set 
\begin{equation} \label{def-Q}
\Qo(\Phi) = \|S\Phi\|^2 + Q_T(\Phi)\,.
\end{equation}
We assume that $\Qo$ defines a closed form on $\mathcal{H}$. The form $\Qo$ is symmetric and non-negative and the associated operator is the operator $\LLo$ introduced in \eqref{def-LL}.

Then we define the operator $\hatLL$ (see \eqref{def-hatLL}) by its form. 
For this we need to verify that the form domain 
is left invariant both by~$\Pi$ and~$\Pi^\bot$.

\begin{lemma}\label{lem.Pi-DomQ}
For all $\Phi \in \Dom(\Qo)$ we have $\Pi \Phi \in \Dom(\Qo)$ and $\Pi^\bot \Phi \in \Dom(\Qo)$.
\end{lemma}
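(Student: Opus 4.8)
The plan is to show that $\Pi$ preserves $\Dom(\Qo) = \Dom(S) \cap \Dom(Q_T)$; the claim for $\Pi^\bot = \Id - \Pi$ then follows immediately since $\Dom(\Qo)$ is a linear subspace. So let $\Phi \in \Dom(\Qo)$. First I would check that $\Pi\Phi \in \Dom(S)$: this is exactly the hypothesis that $\Dom(S)$ is invariant under $\Pi$, which we have assumed. Next I would check that $\Pi\Phi \in \Dom(Q_T)$, i.e.\ that $(\Pi\Phi)_s = \Pi_s\Phi_s$ lies in $\Dom(q_s)$ for all $s$, with the requisite measurability and finiteness of $\int_\Sigma q_s(\Pi_s\Phi_s)\,\dx\sigma(s)$. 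The measurability of $s \mapsto q_s(\Pi_s\Phi_s)$ is built into the definition of $\Dom(Q_T)$, so the real content is the pointwise statement $\Pi_s\Phi_s \in \Dom(q_s)$ together with a bound $q_s(\Pi_s\Phi_s) \leq q_s(\Phi_s)$ (or at least an integrable bound).

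The key pointwise fact is that a spectral projection of $T_s$ onto an interval preserves the form domain $\Dom(q_s) = \Dom(T_s^{1/2})$ and does not increase the form. Concretely: $\Pi_s = \mathds{1}_{[0,\gamma_s)}(T_s)$ commutes with every bounded Borel function of $T_s$, in particular with the bounded operators $(T_s + 1)^{-1/2}$ and with $T_s^{1/2}(T_s+1)^{-1/2}$. Writing $\Phi_s = (T_s+1)^{-1/2}\psi_s$ for a unique $\psi_s \in \Hc_s$ (which is possible precisely when $\Phi_s \in \Dom(T_s^{1/2}) = \Dom(q_s)$), we get $\Pi_s\Phi_s = (T_s+1)^{-1/2}\Pi_s\psi_s$, so $\Pi_s\Phi_s \in \Dom(q_s)$ as well, and
\[
q_s(\Pi_s\Phi_s) + \|\Pi_s\Phi_s\|^2 = \|(T_s+1)^{1/2}\Pi_s\Phi_s\|^2 = \|\Pi_s(T_s+1)^{1/2}\Phi_s\|^2 \leq \|(T_s+1)^{1/2}\Phi_s\|^2 = q_s(\Phi_s) + \|\Phi_s\|^2.
\]
Hence $q_s(\Pi_s\Phi_s) \leq q_s(\Phi_s)$ for each $s$. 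Integrating, $\int_\Sigma q_s(\Pi_s\Phi_s)\,\dx\sigma(s) \leq \int_\Sigma q_s(\Phi_s)\,\dx\sigma(s) = Q_T(\Phi) < \infty$, so $\Pi\Phi \in \Dom(Q_T)$. Combining with $\Pi\Phi \in \Dom(S)$ gives $\Pi\Phi \in \Dom(\Qo)$, and then $\Pi^\bot\Phi = \Phi - \Pi\Phi \in \Dom(\Qo)$ as the difference of two elements of the linear space $\Dom(\Qo)$.

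The only genuine subtlety—and the step I would be most careful about—is the measurability bookkeeping on the measure space $\Sigma$: one must make sure that the function $s \mapsto q_s(\Pi_s\Phi_s)$ is measurable so that the integral defining $Q_T(\Pi\Phi)$ makes sense, rather than merely bounded pointwise by the integrable function $s \mapsto q_s(\Phi_s)$. But this measurability is already part of the \emph{definition} of $\Dom(Q_T)$ as stated in the text (the functions $s\mapsto q_s(\Phi_s)$ and $s\mapsto q_s(\Pi_s\Phi_s)$ are required to be measurable), so for the purposes of Lemma~\ref{lem.Pi-DomQ} it suffices to invoke it; no additional argument is needed beyond the pointwise domain inclusion and the pointwise inequality $q_s(\Pi_s\Phi_s)\leq q_s(\Phi_s)$ established above. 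All other steps are routine: invariance of $\Dom(S)$ under $\Pi$ is an assumption, and passage from $\Pi$ to $\Pi^\bot$ uses only linearity.
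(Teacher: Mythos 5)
Your proposal is correct and follows the same skeleton as the paper's proof: $\Pi\Phi\in\Dom(S)$ is exactly the standing assumption, the measurability of $s\mapsto q_s(\Pi_s\Phi_s)$ (and, via $\Pi_s^2=\Pi_s$, of $s\mapsto q_s(\Pi_s(\Pi\Phi)_s)$) is built into the definition of $\Dom(Q_T)$, and the only substantive point is a pointwise bound guaranteeing finiteness of $\int_\Sigma q_s(\Pi_s\Phi_s)\,\dx\sigma(s)$. Where you differ is in that bound: the paper exploits that $\Pi_s$ is the spectral projection onto $[0,\gamma_s)$ to get $q_s(\Pi_s\Phi_s)\leq \gamma_s\|\Phi_s\|_{\Hc_s}^2$ and then integrates against $\sup_{s}\gamma_s$, whereas you dominate by $q_s(\Phi_s)$ itself and integrate against $Q_T(\Phi)$. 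Your variant buys a little: it does not use $\sup_s\gamma_s<\infty$, which the paper's displayed estimate needs implicitly although only $\inf_s\gamma_s>0$ is assumed; on the other hand the paper's bound shows $\Pi\Phi\in\Dom(Q_T)$ for \emph{any} $\Phi\in\Hc$ (with bounded $\gamma_s$), not only for $\Phi\in\Dom(Q_T)$.

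One small slip, not a genuine gap: from your displayed identity $q_s(\Pi_s\Phi_s)+\|\Pi_s\Phi_s\|^2\leq q_s(\Phi_s)+\|\Phi_s\|^2$ you cannot conclude $q_s(\Pi_s\Phi_s)\leq q_s(\Phi_s)$, since $\|\Phi_s\|^2-\|\Pi_s\Phi_s\|^2\geq 0$ works against you. The inequality you want is nevertheless true and follows directly, e.g.\ from $T_s^{1/2}\Pi_s\Phi_s=\Pi_s T_s^{1/2}\Phi_s$ (so $q_s(\Pi_s\Phi_s)=\|\Pi_s T_s^{1/2}\Phi_s\|^2\leq\|T_s^{1/2}\Phi_s\|^2=q_s(\Phi_s)$), or from the spectral measure of $T_s$ restricted to $[0,\gamma_s)$; and in any case the weaker bound $q_s(\Pi_s\Phi_s)\leq q_s(\Phi_s)+\|\Phi_s\|^2$, which your display does give, already yields $Q_T(\Pi\Phi)\leq Q_T(\Phi)+\|\Phi\|^2<\infty$ and hence the lemma.
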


\begin{proof}
Let $\Phi = (\Phi_s)_{s \in \Sigma} \in \Dom(\Qo)$. We have $\Phi \in \Dom(S)$, so by assumption we have $\Pi \Phi \in \Dom(S)$. By assumption again, the function $s \mapsto q_s (\Pi_s \Phi_s) = q_s(\Pi_s \Pi_s \Phi_s)$ is measurable and we have 
\[
\int_{\Sigma} q_s(\Pi_s \Phi_s) \, d\sigma(s) \leq \sup_{s \in \Sigma} \gamma_s \int_{\Sigma} \nr{\Phi_s}_{\Hc_s}^2 \, d\sigma(s) < +\infty.
\]
This proves that $\Pi \Phi$ belongs to $\Dom(Q_T)$, and hence to $\Dom(\Qo)$. Then the same holds for $\Pi^\bot \Phi = \Phi - \Pi \Phi$.
\end{proof}

With this lemma we can set, for $\Phi,\Psi \in \Dom(\Qo)$,
\[
\hatQ (\Phi,\Psi) = \Qo(\Pi \Phi,\Pi \Psi) + \Qo(\Pi^\bot \Phi,\Pi^\bot \Psi)\,.
\]

\begin{lemma} \label{lem.hQ}
For all $\Phi \in\Dom(\hatQ)$ we have 
\[
\Qo(\Phi) \leq 2 \hatQ (\Phi)\,.
\]
In particular the form $\hatQ$ is non-negative, closed, and it determines uniquely a self-adjoint operator $\hatLL$ on $\Hc$. Moreover we have $[\Pi,\hatLL]=0$ on $\Dom(\hatLL)$.
\end{lemma}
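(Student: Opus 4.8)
The plan is to establish the three claims of Lemma~\ref{lem.hQ} in order: the pointwise bound $\Qo(\Phi) \leq 2\hatQ(\Phi)$, the closedness of $\hatQ$, and the commutation $[\Pi,\hatLL] = 0$.

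First I would prove the bound. For $\Phi \in \Dom(\hatQ) = \Dom(\Qo)$, write $\Phi = \Pi\Phi + \Pi^\bot\Phi$ and use the fact that $\Pi$ and $\Pi^\bot$ are \emph{not} orthogonal for the form $\Qo$ in general, so one cannot expect equality. Instead, expand $\Qo(\Phi)$ using the sesquilinearity of the form associated to $\Qo$; the cross terms $\Qo(\Pi\Phi,\Pi^\bot\Phi)$ and its conjugate appear. One then applies the elementary inequality $|\Qo(\Pi\Phi,\Pi^\bot\Phi) + \Qo(\Pi^\bot\Phi,\Pi\Phi)| \leq \Qo(\Pi\Phi) + \Qo(\Pi^\bot\Phi)$, which follows from Cauchy--Schwarz for the non-negative form $\Qo$ together with $2\sqrt{xy} \leq x+y$. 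This gives $\Qo(\Phi) \leq 2\Qo(\Pi\Phi) + 2\Qo(\Pi^\bot\Phi) = 2\hatQ(\Phi)$. Here one uses Lemma~\ref{lem.Pi-DomQ} to know $\Pi\Phi,\Pi^\bot\Phi \in \Dom(\Qo)$ so that all the quantities are finite and meaningful.

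Next, closedness of $\hatQ$: equip $\Dom(\Qo)$ with the form norm $\|\Phi\|_{\Qo}^2 = \Qo(\Phi) + \|\Phi\|^2$, under which it is complete since $\Qo$ is closed. The bound just proved, together with the obvious reverse-type estimate $\hatQ(\Phi) \leq \Qo(\Pi\Phi) + \Qo(\Pi^\bot\Phi) \leq C\,\Qo(\Phi) + C\|\Phi\|^2$ (using boundedness of $\Pi,\Pi^\bot$ on $\Hc$, continuity of $\Pi$ on $\Dom(\Qo)$ which is part of the standing assumptions, and again Lemma~\ref{lem.Pi-DomQ}), shows that the $\hatQ$-form norm and the $\Qo$-form norm are equivalent on $\Dom(\Qo) = \Dom(\hatQ)$. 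A non-negative symmetric form whose form norm is equivalent to that of a closed form, on the same domain, is itself closed; hence $\hatQ$ is closed, and by the Representation Theorem it defines a unique self-adjoint non-negative operator $\hatLL$.

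Finally, $[\Pi,\hatLL] = 0$ on $\Dom(\hatLL)$. The clean way is to note that $\hatQ$ is, by construction, invariant under $\Pi$ in the sense that $\hatQ(\Pi\Phi,\Psi) = \hatQ(\Phi,\Pi\Psi)$ for all $\Phi,\Psi \in \Dom(\hatQ)$: indeed $\hatQ(\Pi\Phi,\Psi) = \Qo(\Pi\Phi,\Pi\Psi)$ since $\Pi\Pi^\bot = 0$, and likewise $\hatQ(\Phi,\Pi\Psi) = \Qo(\Pi\Phi,\Pi\Psi)$, so both equal the same thing. Since $\Pi$ is a bounded self-adjoint projection that maps $\Dom(\hatQ)$ into itself, this form-level symmetry passes to the operator level: for $\Phi \in \Dom(\hatLL)$ and any $\Psi \in \Dom(\hatQ)$, $\langle \Pi\Phi, \hatLL^{1/2}\hatLL^{1/2}\Psi\rangle$-type manipulations (or more directly, the standard lemma that a bounded operator commuting with a closed form in this sense leaves $\Dom(\hatLL)$ invariant and commutes with $\hatLL$) yield $\Pi\Phi \in \Dom(\hatLL)$ and $\hatLL\Pi\Phi = \Pi\hatLL\Phi$. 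I expect the main obstacle to be purely expository rather than mathematical: being careful that $\Pi$ is bounded on $\Hc$ and continuous on $(\Dom(\Qo),\|\cdot\|_{\Qo})$ — the latter being exactly what makes the form-invariance argument legitimate — and that no orthogonality of $\Pi$ with respect to $\Qo$ is ever silently assumed; the factor $2$ is sharp precisely because that orthogonality fails.
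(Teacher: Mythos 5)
Your proof is correct and takes essentially the same route as the paper: the cross-term expansion of $\Qo(\Phi)-\hatQ(\Phi)$ controlled by Cauchy--Schwarz for the non-negative form $\Qo$, and the form-level identity $\hatQ(\Pi\Phi,\Psi)=\Qo(\Pi\Phi,\Pi\Psi)=\hatQ(\Phi,\Pi\Psi)$ passed to the operator level via the representation theorem to get $[\Pi,\hatLL]=0$. The only difference is expository: you spell out closedness through equivalence of the $\hatQ$- and $\Qo$-form norms (the reverse bound $\hatQ\lesssim \Qo+\|\cdot\|^2$ following from the boundedness of $[S,\Pi]$ and the estimate on $Q_T\circ\Pi$ as in Lemma~\ref{lem.Pi-DomQ}), a step the paper compresses into ``the first conclusions follow.''
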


\begin{proof}
We have 
\[
\Qo(\Phi) - \hatQ(\Phi) = \Qo(\Pi \Phi, \Pi^\bot \Phi) + \Qo(\Pi^\bot \Phi, \Pi \Phi)\,.
\]
Since the form $\Qo$ is non-negative we can apply the Cauchy-Schwarz inequality to write 
\[
\Qo(\Pi \Phi, \Pi^\bot \Phi) 
\leq 
\sqrt {\Qo(\Pi \Phi)} \mbox{$\sqrt{\Qo(\Pi^\bot \Phi)}$}
\leq 
\frac 12 \big( \Qo(\Pi \Phi) + \Qo(\Pi^\bot \Phi) \big) = \frac 12 \hatQ(\Phi)\,.
\]
We have the same estimate for $\Qo(\Pi^\bot \Phi, \Pi \Phi)$, and the first conclusions follow. We just check the last property about the commutator. Let $\psi\in\Dom(\hatLL)$. For all $\phi \in \Dom(\hatLL)$ we have
\[
\hatQ(\phi,\Pi\psi)=\Qo(\Pi\phi,\Pi\psi)=\hatQ(\Pi\phi,\psi)=\langle\Pi\phi,\hatLL\psi\rangle_\Hc = \langle \phi, \Pi \hatLL \psi \rangle_\Hc.
\]
This proves that $\Pi\psi\in\Dom(\hatLL)$ with $\hatLL\Pi\psi = \Pi \hatLL \psi$ and the proof is complete.
\end{proof}

Then, from $\hatQ$ it is easy to define the forms corresponding to the operators $\LLeff$ and $\LLbot$:

\begin{lemma}\label{lem.QPi}
Let $\Qeff$ be the restriction of $\Qo$ to $\Pi\Dom(\Qo) = \mathrm{Ran}(\Pi) \cap \Dom(\Qo)$. Then $\Qeff$ is non-negative and closed. The associated operator $\LLeff$ is self-adjoint, its domain is invariant under $\Pi$, and $[\Pi,\LLeff]=0$ on $\Dom(\LLeff)$. Moreover, we have $(\Dom(\hatLL)\cap\mathrm{Ran}(\Pi),\hatLL)=(\Dom(\LLeff),\LLeff)$.

We have similar statements for the restriction $\Qbot$ of $\Qo$ to $\Pi^\bot \Dom(\Qo) = \mathrm{Ran}(\Pi^\bot) \cap \Dom(\Qo)$ and the corresponding operator $\LLbot$.
\end{lemma}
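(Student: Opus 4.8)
The plan is to mimic the structure already used for $\hatQ$ in Lemmas \ref{lem.hQ} and to exploit that $\Pi$ is an orthogonal projection commuting with $\hatLL$. First I would observe that $\Pi \Dom(\Qo)$ is a closed subspace of the form Hilbert space $(\Dom(\Qo), \Qo(\cdot) + \|\cdot\|^2)$: indeed $\Pi$ is bounded on $\Hc$ and, by Lemma \ref{lem.Pi-DomQ}, maps $\Dom(\Qo)$ into itself, and on $\Dom(\Qo)$ one has $\Qo(\Pi\Phi) \le \hatQ(\Phi) \le 2\Qo(\Phi)$ by Lemma \ref{lem.hQ}, so $\Pi$ is also bounded for the form norm; hence its range is closed in that norm. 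This immediately gives that $\Qeff$, being the restriction of the closed form $\Qo$ to this closed subspace, is itself a closed, densely defined (in $\Ran(\Pi)$), symmetric non-negative form on the Hilbert space $\Ran(\Pi)$. By the Representation Theorem it determines a unique self-adjoint non-negative operator $\LLeff$ on $\Ran(\Pi)$.

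Next I would identify $\LLeff$ with the part of $\hatLL$ in $\Ran(\Pi)$. Since $\Pi$ and $\Pi^\bot$ commute with $\hatLL$ (Lemma \ref{lem.hQ}) and $\Hc = \Ran(\Pi) \oplus \Ran(\Pi^\bot)$ is an orthogonal decomposition reducing $\hatLL$, the standard theory of reducing subspaces gives that $\hatLL$ restricted to $\Ran(\Pi)$ is self-adjoint on that subspace with domain $\Dom(\hatLL) \cap \Ran(\Pi)$, and similarly on $\Ran(\Pi^\bot)$, so that $\hatLL = \hatLL|_{\Ran(\Pi)} \oplus \hatLL|_{\Ran(\Pi^\bot)}$. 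It remains to check that the form of $\hatLL|_{\Ran(\Pi)}$ is exactly $\Qeff$. For $\Phi \in \Ran(\Pi)$ we have $\Pi\Phi = \Phi$ and $\Pi^\bot\Phi = 0$, so $\hatQ(\Phi) = \Qo(\Phi) = \Qeff(\Phi)$; more precisely $\hatQ$ restricted to $\Dom(\hatQ) \cap \Ran(\Pi) = \Dom(\Qo)\cap\Ran(\Pi)$ coincides with $\Qeff$, and the form of a reduced operator is the restriction of the form to the reducing subspace. By uniqueness in the Representation Theorem, the operator associated to $\Qeff$ on $\Ran(\Pi)$ equals $\hatLL|_{\Ran(\Pi)}$, which proves $(\Dom(\hatLL)\cap\Ran(\Pi),\hatLL) = (\Dom(\LLeff),\LLeff)$. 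That the domain of $\LLeff$ is invariant under $\Pi$ and $[\Pi,\LLeff]=0$ is then automatic since on $\Dom(\LLeff)=\Dom(\hatLL)\cap\Ran(\Pi)$ the projection $\Pi$ acts as the identity.

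The statement for $\Qbot$ and $\LLbot$ is entirely symmetric: one replaces $\Pi$ by $\Pi^\bot$ throughout, using that $\Pi^\bot$ also leaves $\Dom(\Qo)$ invariant (Lemma \ref{lem.Pi-DomQ}) and is bounded for the form norm by the same estimate $\Qo(\Pi^\bot\Phi)\le\hatQ(\Phi)\le 2\Qo(\Phi)$. Nothing new is needed.

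I do not expect any real obstacle here; the lemma is a routine packaging of the reducing-subspace formalism, and all the non-trivial input (invariance of $\Dom(\Qo)$ under $\Pi,\Pi^\bot$; the comparison $\Qo\le 2\hatQ$; the commutation $[\Pi,\hatLL]=0$) has already been established in Lemmas \ref{lem.Pi-DomQ} and \ref{lem.hQ}. The one point deserving a careful line is the closedness of $\Qeff$, i.e. that $\Ran(\Pi)\cap\Dom(\Qo)$ is closed in the form norm of $\Qo$ — this is where the boundedness of $\Pi$ for the graph norm of the form (not merely on $\Hc$) is used, and it is the only spot where one must be slightly attentive rather than purely formal.
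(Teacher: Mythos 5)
There is one genuine error in your closedness step: you invoke the chain $\Qo(\Pi\Phi)\le\hatQ(\Phi)\le 2\Qo(\Phi)$ ``by Lemma \ref{lem.hQ}''. That lemma states the \emph{opposite} inequality, $\Qo\le 2\hatQ$, and the inequality $\hatQ\le 2\Qo$ is false for a general non-negative form: the cross terms $2\Re\,\Qo(\Pi\Phi,\Pi^\bot\Phi)$ may be negative and cancel almost all of $\Qo(\Phi)$ (already $q(x)=|x_1+x_2|^2$ on $\C^2$ with $\Pi$ the projection onto the first coordinate gives $\hat q\bigl((1,-1)\bigr)=2$ while $q\bigl((1,-1)\bigr)=0$; in the paper's setting, take $\Phi$ with $S\Phi=0$ but $[S,\Pi]\Phi\neq 0$). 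The form-boundedness of $\Pi$ that you want is nevertheless true, but for a different reason: $\nr{S\Pi\Phi}\le\nr{S\Phi}+\nr{[S,\Pi]}\,\nr{\Phi}$ and $q_s(\Pi_s\Phi_s)\le q_s(\Phi_s)$ because $\Pi_s$ is a spectral projection of $T_s$, whence $\Qo(\Pi\Phi)\le 2\Qo(\Phi)+2\nr{[S,\Pi]}^2\nr{\Phi}^2$. Better still, no form-boundedness of $\Pi$ is needed at all: if $(\Phi_n)\subset\mathrm{Ran}(\Pi)\cap\Dom(\Qo)$ is Cauchy for the form norm of $\Qo$, the closedness of $\Qo$ gives a form-norm limit $\Phi\in\Dom(\Qo)$, and since form-norm convergence implies convergence in $\Hc$ and $\mathrm{Ran}(\Pi)$ is closed in $\Hc$, also $\Phi\in\mathrm{Ran}(\Pi)$. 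This is exactly the paper's one-line argument (closedness of $\Qo$ plus continuity of $\Pi$ on $\Hc$), and with this repair your proof is complete.

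For the identification $(\Dom(\hatLL)\cap\mathrm{Ran}(\Pi),\hatLL)=(\Dom(\LLeff),\LLeff)$ your route is correct but genuinely different from the paper's. The paper argues directly: for $\psi\in\Dom(\LLeff)$ and every $\phi\in\Dom(\hatQ)$ it computes $\hatQ(\phi,\psi)=\Qo(\Pi\phi,\psi)=\Qeff(\Pi\phi,\psi)=\innp{\phi}{\LLeff\psi}$, which yields $\Dom(\LLeff)\subset\Dom(\hatLL)\cap\mathrm{Ran}(\Pi)$ with equality of the actions, the reverse inclusion being easy. You instead use that $\Pi$ commutes with $\hatLL$ (Lemma \ref{lem.hQ}), so $\mathrm{Ran}(\Pi)$ reduces $\hatLL$, and then the standard fact that the quadratic form of the part of $\hatLL$ in $\mathrm{Ran}(\Pi)$ is the restriction of $\hatQ$, which on $\Dom(\Qo)\cap\mathrm{Ran}(\Pi)$ coincides with $\Qeff$; uniqueness in the representation theorem concludes (note you should also record that $\Qeff$ is densely defined in $\mathrm{Ran}(\Pi)$, which follows since $\Pi$ maps the dense set $\Dom(\Qo)$ into $\mathrm{Ran}(\Pi)\cap\Dom(\Qo)$). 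The paper's computation is more elementary and self-contained; your argument buys brevity by outsourcing the work to the reducing-subspace formalism and the behaviour of form domains under reduction, both of which are legitimate here.
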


\begin{proof}
The closedness of $\Qeff$ comes from the closedness of $\Qo$ and the continuity of $\Pi$. The other properties are proved as for Lemma \ref{lem.hQ}. We prove the last assertion. Let $\psi\in\Dom(\LLeff)$. By definition of this domain we have $\Pi\psi=\psi$. For $\phi\in\Dom(\hatQ)$, we have 
\[
\hatQ(\phi,\psi)=\Qo(\Pi\phi,\Pi\psi)=\Qeff(\Pi\phi,\Pi\psi)=\Qeff(\Pi\phi,\psi) = \innp{\Pi\phi}{\LLeff \psi} = \innp{\phi}{\LLeff \psi}. 
\]
This proves that $\psi\in\Dom(\hatLL)$ and $\LLeff\psi=\hatLL\psi$. Thus $\Dom(\LLeff) \subset \Dom(\hatLL)\cap\mathrm{Ran}(\Pi)$ and $\hatLL = \LLeff$ on $\Dom(\LLeff)$. The reverse inclusion $\Dom(\hatLL)\cap\mathrm{Ran}(\Pi) \subset \Dom(\LLeff)$ is easy, so the proof is complete.
\end{proof}
Finally we have proved that 
\[
\Dom(\hatLL) = \big(\Dom(\hatLL) \cap \mathrm{Ran}(\Pi)\big) \oplus \big(\Dom(\hatLL) \cap \mathrm{Ran}(\Pi^\bot)\big) = \Dom(\LLeff) \oplus \Dom(\LLbot)
\]
and for $\f \in \Dom(\hatLL)$ we have 
\[
\hatLL \f = \LLeff \Pi \f + \LLbot \Pi^\bot \f.
\]
From the spectral theorem, we deduce the following lemma.
\begin{lemma} \label{lem-res-bot}
We have $\Sp(\hatLL) = \Sp(\LLeff) \cup \Sp(\LLbot)$ and, for $z \in \rho(\hatLL)$ such that $z\notin [\gamma,+\infty)$,
\[
\nr{(\hatLL-z)^{-1} - (\LLeff-z)^{-1} \Pi} \leq \frac 1 {\mathsf{dist}(z, [\gamma,+\infty))}\,.
\]
\end{lemma}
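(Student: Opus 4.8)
The plan is to exploit the orthogonal decomposition established in the lines just preceding the statement, namely $\hatLL = \LLeff \oplus \LLbot$ with respect to $\Hc = \mathrm{Ran}(\Pi) \oplus \mathrm{Ran}(\Pi^\bot)$, both summands reducing $\hatLL$. The spectral identity $\Sp(\hatLL) = \Sp(\LLeff) \cup \Sp(\LLbot)$ is then the standard fact that $\hatLL - z$ has a bounded everywhere-defined inverse if and only if both $\LLeff - z$ and $\LLbot - z$ do, in which case
\[
(\hatLL - z)^{-1} = (\LLeff - z)^{-1}\Pi \oplus (\LLbot - z)^{-1}\Pi^\bot,
\]
where, as in the statement, $(\LLeff - z)^{-1}$ is tacitly composed with the inclusion $\mathrm{Ran}(\Pi) \hookrightarrow \Hc$.

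Reading off this block structure, for $z \in \rho(\hatLL)$ one has the exact identity
\[
(\hatLL - z)^{-1} - (\LLeff - z)^{-1}\Pi = (\LLbot - z)^{-1}\Pi^\bot,
\]
so, since $\|\Pi^\bot\|_{\Lc(\Hc)} \leq 1$, it suffices to bound $\|(\LLbot - z)^{-1}\|$.

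The one substantive step is to locate the spectrum of $\LLbot$; I would show $\Sp(\LLbot) \subseteq [\gamma, +\infty)$ via the associated form. For $\Phi = (\Phi_s)_{s \in \Sigma} \in \Dom(\Qbot) = \mathrm{Ran}(\Pi^\bot) \cap \Dom(\Qo)$, each $\Phi_s$ lies in $\mathrm{Ran}(\Pi_s^\bot)$, which is the spectral subspace of $T_s$ associated with $[\gamma_s, +\infty)$; hence $q_s(\Phi_s) \geq \gamma_s \|\Phi_s\|_{\Hc_s}^2 \geq \gamma \|\Phi_s\|_{\Hc_s}^2$ by \eqref{gamma}. Integrating over $\Sigma$ gives $Q_T(\Phi) \geq \gamma \|\Phi\|^2$, and since $\Qbot(\Phi) = \|S\Phi\|^2 + Q_T(\Phi) \geq Q_T(\Phi)$, the min-max principle yields $\inf \Sp(\LLbot) \geq \gamma$.

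Finally, $\LLbot$ being self-adjoint with $\Sp(\LLbot) \subseteq [\gamma, +\infty)$, for $z \notin [\gamma, +\infty)$ the spectral theorem gives $z \in \rho(\LLbot)$ and
\[
\|(\LLbot - z)^{-1}\| \leq \frac{1}{\mathsf{dist}(z, \Sp(\LLbot))} \leq \frac{1}{\mathsf{dist}(z, [\gamma, +\infty))},
\]
which combined with the displayed identity and $\|\Pi^\bot\| \leq 1$ completes the proof. I do not expect a genuine obstacle here; the only care needed is the bookkeeping of which Hilbert spaces $\LLeff$ and $\LLbot$ act on and the implicit inclusion hidden in the notation $(\LLeff - z)^{-1}\Pi$.
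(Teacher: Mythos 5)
Your argument is correct and is exactly the one the paper intends: it deduces the lemma from the decomposition $\hatLL=\LLeff\oplus\LLbot$ established just before the statement, the lower bound $\Qo(\Pi^\bot\Phi)\geq\gamma\|\Pi^\bot\Phi\|^2$ (used explicitly in the proof of Proposition~\ref{prop.diff.Q0Q}), and the spectral theorem for the self-adjoint operator $\LLbot$, which is all the paper summarizes by the phrase ``from the spectral theorem.'' Your write-up merely makes these steps explicit, so there is nothing to add.
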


\subsection{Comparison of the resolvents}
This section is devoted to the proof of the following theorem that implies Theorem \ref{theo.main} via Lemma \ref{lem-diff-resolv}.
\begin{theorem}\label{th-comp-forms}
Let $\LLo$ and $\hatLL$ be as above. Let $z \in \C$ and $\eta_1(z),\eta_2(z),\eta_3(z),\eta_4(z)$ as in \eqref{def-eta}. Then for $\Phi \in \Dom(\LLo)$ and $\Psi \in \Dom(\hatLL^*)$ we have 
\begin{align*}
\abs{\Qo(\Phi,\Psi) -\hatQ(\Phi,\Psi)}
& \leq \eta_1(z) \nr{\Phi}\nr{\Psi} + \eta_2(z) \|\Phi\|  \|(\hatLL-\bar z) \Psi\|\\
& + \eta_3(z) \|(\LLo-z) \Phi\|  \|\Psi\|  + \eta_4(z) \|(\LLo-z)\Phi\|  \|(\hatLL-\bar z) \Psi\|\,.
\end{align*}
\end{theorem}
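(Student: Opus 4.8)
The plan is to compute the difference $\Qo(\Phi,\Psi)-\hatQ(\Phi,\Psi)$ explicitly by splitting $\Phi=\Pi\Phi+\Pi^\bot\Phi$ and $\Psi=\Pi\Psi+\Pi^\bot\Psi$, and then to control each resulting "cross term" using the commutator bound $\|[S,\Pi]\|\leq a\sqrt{\gamma}$ and the spectral gap $\gamma$. Since $\hatQ(\Phi,\Psi)=\Qo(\Pi\Phi,\Pi\Psi)+\Qo(\Pi^\bot\Phi,\Pi^\bot\Psi)$ by definition, the difference is exactly the sum of the mixed terms $\Qo(\Pi\Phi,\Pi^\bot\Psi)+\Qo(\Pi^\bot\Phi,\Pi\Psi)$. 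Recalling $\Qo(\Phi,\Psi)=\langle S\Phi,S\Psi\rangle+Q_T(\Phi,\Psi)$, I would first deal with the $Q_T$ part: because $T=\bigoplus T_s$ commutes with the fiberwise projection $\Pi$, and $\Pi_s$ projects onto $[0,\gamma_s)$ while $\Pi_s^\bot$ onto $[\gamma_s,\infty)$, the form $Q_T$ is "block-diagonal" with respect to the splitting $\Hc=\Pi\Hc\oplus\Pi^\bot\Hc$, so $Q_T(\Pi\Phi,\Pi^\bot\Psi)=Q_T(\Pi^\bot\Phi,\Pi\Psi)=0$. Hence only the $S^*S$ part contributes cross terms.

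Next I would handle $\langle S\Pi\Phi,S\Pi^\bot\Psi\rangle+\langle S\Pi^\bot\Phi,S\Pi\Psi\rangle$. The idea is to move the projections across $S$ using $S\Pi=\Pi S+[S,\Pi]$ and $S\Pi^\bot=\Pi^\bot S-[S,\Pi]$. After these substitutions, one gets a sum of terms: some of the form $\langle\Pi S\Phi,\Pi^\bot S\Psi\rangle=0$ (which vanish since $\Pi\Pi^\bot=0$), plus terms each carrying at least one factor of $[S,\Pi]$, e.g. $\langle[S,\Pi]\Phi,\Pi^\bot S\Psi\rangle$, $\langle\Pi S\Phi,-[S,\Pi]\Psi\rangle$, $\langle[S,\Pi]\Phi,-[S,\Pi]\Psi\rangle$, and symmetrically. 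The commutator terms with a surviving $S$ factor are the delicate ones: I would estimate $\|S\Phi\|$ and $\|S\Psi\|$ in terms of the operators $\LLo$ and $\hatLL$. Specifically, $\|S\Phi\|^2\leq\Qo(\Phi)=\langle\LLo\Phi,\Phi\rangle=\langle(\LLo-z)\Phi,\Phi\rangle+z\|\Phi\|^2$, which gives $\|S\Phi\|\lesssim\|(\LLo-z)\Phi\|^{1/2}\|\Phi\|^{1/2}+|z|^{1/2}\|\Phi\|$, and then a further arithmetic–geometric inequality turns products like $\|S\Phi\|\cdot(\text{stuff})$ into the four admissible bilinear shapes. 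The factor $1/\sqrt{\gamma}$ in $a$ comes from using $\|\Pi^\bot S\Psi\|$ type quantities: on the range of $\Pi^\bot$ one has $T_s\geq\gamma_s\geq\gamma$, so $\|S\Pi^\bot\Psi\|$ can be compared favorably, or conversely $\|\Pi^\bot\Psi\|\leq\gamma^{-1/2}\|T^{1/2}\Pi^\bot\Psi\|$; this is where the $\sqrt{\gamma}$ denominators in $a$ and the explicit $\gamma$-dependence of the $\eta_i$ enter.

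The main obstacle I expect is the bookkeeping: tracking all the cross terms generated by the two substitutions, keeping careful count of which terms carry one versus two commutator factors (this distinguishes the $a$ versus $a^2$ contributions visible in $\eta_1$), and repeatedly applying Cauchy–Schwarz plus Young's inequality $xy\leq\tfrac{1}{2\sqrt2}(\sqrt2\, x^2 + \tfrac1{\sqrt2}y^2)$ (or a similarly weighted version) in a way that produces exactly the coefficients $3/\sqrt2$, $6/\sqrt2$, $3/(\gamma\sqrt2)$ appearing in \eqref{def-eta}. There is no conceptual difficulty beyond this—the structure is: (i) reduce to the $S^*S$ cross terms, (ii) push $\Pi,\Pi^\bot$ through $S$ at the cost of $[S,\Pi]$, (iii) bound leftover $\|S\cdot\|$ factors by $\|(\LLo-z)\cdot\|$ and $\|(\hatLL-\bar z)\cdot\|$ via the quadratic forms, (iv) collect terms by their bilinear type and read off $\eta_1,\dots,\eta_4$. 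One should also verify at the outset that $\Phi\in\Dom(\LLo)\subset\Dom(\Qo)$ and $\Psi\in\Dom(\hatLL^*)=\Dom(\hatLL)\subset\Dom(\hatQ)=\Dom(\Qo)$ so that all the form expressions make sense and Lemma \ref{lem.Pi-DomQ} applies to justify splitting by $\Pi$.
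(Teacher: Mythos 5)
Your overall skeleton (cross-term decomposition $\Qo-\hatQ=\Qo(\Pi^\bot\Phi,\Pi\Psi)+\Qo(\Pi\Phi,\Pi^\bot\Psi)$, vanishing of the $Q_T$ cross terms, use of $[S,\Pi]$, the form bounds $\nr{S\Phi}^2\le\Qo(\Phi)\le\nr{\LLo\Phi}\nr{\Phi}$, insertion of $z$ by the triangle inequality) is the paper's, but the central estimating step as you describe it would not yield the stated inequality with the $\eta_i$ of \eqref{def-eta}. After substituting $S\Pi=\Pi S+[S,\Pi]$ and $S\Pi^\bot=\Pi^\bot S-[S,\Pi]$ in \emph{both} arguments, you bound the surviving terms by $\nu\nr{S\Phi}\nr{\Psi}$, $\nu\nr{\Phi}\nr{S\Psi}$ and $\nu^2\nr{\Phi}\nr{\Psi}$ with $\nu=\nr{[S,\Pi]}$, i.e.\ you discard the projections sitting next to $S$. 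Then $\nr{S\Phi}\le(\nr{\LLo\Phi}\nr{\Phi})^{1/2}$ followed by Young's inequality with weight $\sqrt\gamma$ produces a contribution of size $\nu\sqrt\gamma\,\nr{\Phi}\nr{\Psi}=a\gamma\nr{\Phi}\nr{\Psi}$, whereas the theorem requires $\eta_1(0)=\tfrac{3}{\sqrt2}a^2\gamma=\tfrac{3}{\sqrt2}\nu^2$. Since in every application $a\to0$, a coefficient $a\gamma$ instead of $a^2\gamma$ is not a matter of constants: it ruins the hypothesis $1-\eta_1(z)\nr{(\hatLL-z)^{-1}}-\eta_2(z)>0$ of Theorem \ref{theo.main} (e.g.\ in the Born--Oppenheimer example it would make $\eta_1\nr{(\hatLL-z)^{-1}}$ of order $1$ rather than $O(h)$). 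So the step ``bound leftover $\nr{S\,\cdot}$ factors via the forms'' fails quantitatively.

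The missing idea is that the spectral gap has to be applied to the vectors $\Pi^\bot\Phi$ and $\Pi^\bot\Psi$ themselves, not to $\Pi^\bot S\Phi$ or $\Pi^\bot S\Psi$ (at the end of your sketch you conflate $\Pi^\bot S\Psi$ with $S\Pi^\bot\Psi$; for $\Pi^\bot S\Psi$ the quantity $T^{1/2}\Pi^\bot S\Psi$ is controlled by nothing at hand). The paper therefore commutes on one slot only, obtaining the bound $\nu\nr{S\Pi^\bot\Phi}\nr{\Psi}+\nu\nr{\Pi^\bot\Phi}\nr{S\Pi\Psi}+\nu\nr{S\Pi^\bot\Psi}\nr{\Phi}+\nu\nr{\Pi^\bot\Psi}\nr{S\Phi}$, and then estimates the $\Pi^\bot$-quantities through the gap: $\nr{S\Pi^\bot\Psi}^2+\gamma\nr{\Pi^\bot\Psi}^2\le\Qo(\Pi^\bot\Psi)=\hatQ(\Psi,\Pi^\bot\Psi)\le\nr{\hatLL\Psi}\nr{\Pi^\bot\Psi}$, hence $\nr{\Pi^\bot\Psi}\le\nr{\hatLL\Psi}/\gamma$; and, on the $\Phi$ side, a small bootstrap: $\Qo(\Pi^\bot\Phi)=\langle\Pi^\bot\Phi,\LLo\Phi\rangle-\Qo(\Pi^\bot\Phi,\Pi\Phi)$, where the cross-term bound is reused and $\nr{S\Pi^\bot\Phi}^2$, $\gamma\nr{\Pi^\bot\Phi}^2$ are absorbed on the left, giving $\nr{S\Pi^\bot\Phi}^2/\gamma+\nr{\Pi^\bot\Phi}^2\lesssim\nr{\LLo\Phi}^2/\gamma^2+a^2\nr{\Phi}^2$. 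These two estimates are precisely where the factors $a^2$ and $1/\gamma$ in \eqref{def-eta} originate; your plan contains neither the $\Phi$-side bootstrap nor the correct placement of $\Pi^\bot$, and even rewriting $[S,\Pi]=\Pi^\bot[S,\Pi]\Pi+\Pi[S,\Pi]\Pi^\bot$ to reinstate projections would still force you back to these bounds. Without them the bookkeeping cannot produce $\eta_1,\dots,\eta_4$ as claimed.
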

Theorem \ref{th-comp-forms} is a consequence of the following proposition after inserting $z$ and using the triangular inequality.
\begin{proposition}\label{prop.diff.Q0Q}
For all $\Phi \in \Dom(\LLo)$ and $\Psi \in \Dom(\hatLL)$ we have 
\begin{multline*}
\frac 1 \gamma |\Qo(\Phi,\Psi) - \hatQ(\Phi,\Psi)| \\
\leq \frac{3 a}{\sqrt{2}} \left( \|\Phi\| +  \frac {\|\LLo \Phi\|}{\gamma} \right) \frac {\|\hatLL \Psi \|}{\gamma} + \frac{3 a}{\sqrt{2}} \left( a \|\Phi\| +\left(1+\frac{a}{\sqrt{2}}\right)  \frac {\| \LLo \Phi \|}{\gamma} \right) \left( \|\Psi\| +  \frac {\|\hatLL \Psi \|}{\gamma} \right)\,.
\end{multline*}
\end{proposition}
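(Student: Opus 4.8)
The plan is to expand the difference $\Qo(\Phi,\Psi) - \hatQ(\Phi,\Psi)$ using the decomposition $\mathrm{Id} = \Pi + \Pi^\bot$ on both arguments and exploit that $\hatQ$ is precisely the ``block-diagonal'' part of $\Qo$. Concretely, writing $\Phi = \Pi\Phi + \Pi^\bot\Phi$ and likewise for $\Psi$, bilinearity gives
\[
\Qo(\Phi,\Psi) - \hatQ(\Phi,\Psi) = \Qo(\Pi\Phi,\Pi^\bot\Psi) + \Qo(\Pi^\bot\Phi,\Pi\Psi),
\]
so only the two ``off-diagonal'' cross terms survive. Since $\Qo(\Phi,\Psi) = \langle S\Phi, S\Psi\rangle + Q_T(\Phi,\Psi)$ and $Q_T$ is itself block-diagonal with respect to $\Pi_s$ (because $\Pi_s$ commutes with $T_s$), the transverse part $Q_T$ contributes nothing to these cross terms; the entire difference comes from $\langle S\Pi\Phi, S\Pi^\bot\Psi\rangle + \langle S\Pi^\bot\Phi, S\Pi\Psi\rangle$. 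This is where the commutator $[S,\Pi]$ enters: I would write $S\Pi\Phi = \Pi S\Phi + [S,\Pi]\Phi$ and $S\Pi^\bot\Phi = \Pi^\bot S\Phi - [S,\Pi]\Phi$, and similarly for $\Psi$, so that each cross term becomes a sum of pieces involving $\langle [S,\Pi]\,\cdot\,,\,\cdot\,\rangle$ or $\langle \Pi S\cdot, \Pi^\bot S\cdot\rangle$-type contributions; the latter would vanish if $S$ itself commuted with $\Pi$, and in general they are controlled again by $[S,\Pi]$ after inserting the projections appropriately.

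Next, to turn these bounds into the stated estimate I need to bound the quantities that appear — namely $\|S\Pi\Phi\|$, $\|S\Pi^\bot\Phi\|$, $\|[S,\Pi]\Phi\|$, and their analogues for $\Psi$ — in terms of $\|\Phi\|$, $\|\LLo\Phi\|$, $\|\Psi\|$, $\|\hatLL\Psi\|$, and $\gamma$. The key facts are: $\|[S,\Pi]\Phi\| \le \sqrt{\gamma}\,a\,\|\Phi\|$ by the definition \eqref{a.def} of $a$; the spectral gap gives $\|T_s^{1/2}\Pi_s^\bot\phi_s\|^2 \ge \gamma\|\Pi_s^\bot\phi_s\|^2$, hence $\|\Pi^\bot\Phi\| \le \gamma^{-1/2}\|T^{1/2}\Pi^\bot\Phi\| \le \gamma^{-1/2}Q_T(\Phi)^{1/2} \le \gamma^{-1/2}\Qo(\Phi)^{1/2}$; and $\Qo(\Phi) = \langle\LLo\Phi,\Phi\rangle \le \|\LLo\Phi\|\|\Phi\|$, so $\|S\Phi\| \le \Qo(\Phi)^{1/2} \le \|\LLo\Phi\|^{1/2}\|\Phi\|^{1/2} \le \tfrac12(\|\Phi\| + \gamma^{-1}\|\LLo\Phi\|)\sqrt{\gamma}$ after an AM–GM step. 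From $\|S\Pi\Phi\| \le \|S\Phi\| + \|[S,\Pi]\Phi\|$ and $\|S\Pi^\bot\Phi\| \le \|S\Phi\| + \|[S,\Pi]\Phi\|$ one gets the required control. For $\Psi$ I would use that $\Pi$ commutes with $\hatLL$ (Lemma \ref{lem.hQ}), so $\hatLL\Pi\Psi = \Pi\hatLL\Psi$ and $\hatQ(\Pi^\bot\Psi) = \langle\hatLL\Pi^\bot\Psi,\Pi^\bot\Psi\rangle$, giving $\|\Pi^\bot\Psi\| \le \gamma^{-1/2}\hatQ(\Pi^\bot\Psi)^{1/2}$ and (via Lemma \ref{lem.hQ}, $\Qo \le 2\hatQ$, or directly) the analogous bounds $\|S\Pi\Psi\|, \|S\Pi^\bot\Psi\| \lesssim \sqrt{\gamma}(\|\Psi\| + \gamma^{-1}\|\hatLL\Psi\|)$ — being slightly careful because $\hatQ(\Psi) = \hatQ(\Pi\Psi) + \hatQ(\Pi^\bot\Psi)$ already splits, so each block is individually bounded by $\langle\hatLL\Psi,\Psi\rangle \le \|\hatLL\Psi\|\|\Psi\|$.

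Finally I would assemble: each of the two cross terms is a sum of two products, each product being (a constant times) one factor from the $\Phi$-list times one factor from the $\Psi$-list, with at least one factor carrying a power of $[S,\Pi]$ and hence a factor $\sqrt{\gamma}\,a$; tracking the powers of $\gamma$ shows the whole difference is $\le \gamma \times (\text{the asserted right-hand side})$, and grouping terms according to which of $\|\Phi\|$ vs. $\|\LLo\Phi\|/\gamma$ and $\|\Psi\|$ vs. $\|\hatLL\Psi\|/\gamma$ they multiply reproduces exactly the stated form, with the constant $3/\sqrt{2}$ emerging from the AM–GM steps (the $1/\sqrt2$) and the counting of the (at most three) elementary pieces in each cross term. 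The main obstacle is bookkeeping rather than conceptual: one must be disciplined about which projections are inserted where so that every surviving term genuinely carries a commutator factor, and one must choose the AM–GM splittings of $\|S\Phi\|$ and $\|S\Psi\|$ consistently so that the coefficients collapse to the clean form above rather than to a messier bound. A secondary technical point is justifying the manipulations $S\Pi\Phi = \Pi S\Phi + [S,\Pi]\Phi$ on $\Dom(\LLo) \subset \Dom(S)$, which is legitimate because $\Dom(S)$ is assumed $\Pi$-invariant and $[S,\Pi]$ extends boundedly, together with a density argument if one first proves the form identity on a core.
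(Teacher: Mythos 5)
Your setup is fine — the reduction to the two cross terms $\Qo(\Pi^\bot\Phi,\Pi\Psi)+\Qo(\Pi\Phi,\Pi^\bot\Psi)$, the vanishing of the $Q_T$ contribution, and the role of $[S,\Pi]$ are exactly as in the paper. The gap is in how you estimate these cross terms. Your commutation scheme ($S\Pi\Phi=\Pi S\Phi+[S,\Pi]\Phi$, etc.) produces terms such as $\langle \Pi S\Phi,[S,\Pi]\Psi\rangle$ and $\langle[S,\Pi]\Phi,\Pi S\Psi\rangle$, which you bound by $\nu\|S\Phi\|\,\|\Psi\|$ and $\nu\|\Phi\|\,\|S\Psi\|$ with $\nu=\|[S,\Pi]\|$. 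After your AM--GM step $\|S\Phi\|\leq\tfrac{\sqrt\gamma}{2}(\|\Phi\|+\gamma^{-1}\|\LLo\Phi\|)$ this yields a contribution of order $\nu\sqrt\gamma\,\|\Phi\|\|\Psi\| = a\gamma\|\Phi\|\|\Psi\|$. But the right-hand side of the Proposition, once multiplied by $\gamma$, contains $\|\Phi\|\|\Psi\|$ only with the coefficient $\tfrac{3}{\sqrt2}a^2\gamma=\tfrac{3}{\sqrt2}\nu^2$, which is quadratic in $a$. For $a<1$ (the only interesting regime; e.g.\ in the thin-layer application $\gamma\sim\eps^{-2}$, $a=\mathcal O(\eps^2)$, so $a\gamma=\mathcal O(1)$ while $a^2\gamma=\mathcal O(\eps^2)$) your bound is strictly weaker than the stated one and would destroy the smallness of $\eta_1$ that the applications rely on. The same loss occurs on the $\Psi$ side: bounding $\|\Pi^\bot\Psi\|\leq\gamma^{-1/2}\hatQ(\Pi^\bot\Psi)^{1/2}\leq\gamma^{-1/2}(\|\hatLL\Psi\|\|\Psi\|)^{1/2}$ reintroduces a bare $\|\Psi\|$, which then multiplies $\nu\|S\Phi\|$ and again gives an $a\gamma\|\Phi\|\|\Psi\|$ term.

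The missing idea is the bootstrap control of the off-block components, which is the heart of the paper's proof. On the $\Phi$ side one tests $\LLo\Phi$ against $\Pi^\bot\Phi$: from $\langle\Pi^\bot\Phi,\LLo\Phi\rangle=\Qo(\Pi^\bot\Phi)+\Qo(\Pi^\bot\Phi,\Pi\Phi)$, the lower bound $\Qo(\Pi^\bot\Phi)\geq\|S\Pi^\bot\Phi\|^2+\gamma\|\Pi^\bot\Phi\|^2$ and Young's inequality one gets
\[
\frac{\|S\Pi^\bot\Phi\|^2}{\gamma}+\|\Pi^\bot\Phi\|^2\leq(2+a^2)\frac{\|\LLo\Phi\|^2}{\gamma^2}+2a^2\|\Phi\|^2,
\]
so that $\|\Pi^\bot\Phi\|$ and $\gamma^{-1/2}\|S\Pi^\bot\Phi\|$ carry a factor $a$ in front of $\|\Phi\|$ — your bounds $\|\Pi^\bot\Phi\|\leq\gamma^{-1/2}\Qo(\Phi)^{1/2}$ and $\|S\Pi^\bot\Phi\|\leq\|S\Phi\|+\nu\|\Phi\|$ lose precisely this factor. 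On the $\Psi$ side one tests against $\Pi^\bot\Psi$ rather than $\Psi$, giving $\|\Pi^\bot\Psi\|\leq\|\hatLL\Psi\|/\gamma$ and $\|S\Pi^\bot\Psi\|\leq\|\hatLL\Psi\|/\sqrt\gamma$ with no bare $\|\Psi\|$ at all. Finally, the cross terms must be estimated asymmetrically (as in the paper's \eqref{eq.mix-term-a}--\eqref{eq.mix-term-b}) so that every bare $\|\Psi\|$ or $\|S\Pi\Psi\|$ is paired with one of the small quantities $\|S\Pi^\bot\Phi\|$, $\|\Pi^\bot\Phi\|$, and every $\|S\Phi\|$ or $\|\Phi\|$ with $\|\Pi^\bot\Psi\|$, $\|S\Pi^\bot\Psi\|$; only then does every $\|\Phi\|\|\Psi\|$ term pick up two commutator factors and the stated coefficients become reachable. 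As written, your argument proves a genuinely weaker inequality, not the Proposition.
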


\begin{proof}
Let $\nu = \nr{[S,\Pi]}$. We have
\[\Qo(\Phi,\Psi) - \hatQ(\Phi,\Psi)=\Qo(\Pi^\bot\Phi,\Pi\Psi)+\Qo(\Pi\Phi,\Pi^\bot\Psi)\,.\]
For the first term we write
\[\Qo(\Pi^\bot\Phi,\Pi\Psi)=\langle S\Pi^\bot\Phi,S\Pi\Psi\rangle=\langle S\Pi^\bot\Phi,[S,\Pi]\Pi\Psi\rangle+\langle S\Pi^\bot\Phi,\Pi S\Pi\Psi\rangle\,,\]
so that
\[\Qo(\Pi^\bot\Phi,\Pi\Psi)=\langle S\Pi^\bot\Phi,[S,\Pi]\Pi\Psi\rangle+\langle [S,\Pi^\bot]\Pi^\bot\Phi,\Pi S\Pi\Psi\rangle\,.\]
We deduce that
\begin{equation}\label{eq.mix-term-a}
|\Qo(\Pi^\bot\Phi,\Pi\Psi)|\leq \nu\|S\Pi^\bot\Phi\|\|\Psi\|+\nu\|\Pi^\bot\Phi\|\|S\Pi\Psi\|\,.
\end{equation}
Similarly, we get, by slightly breaking the symmetry,
\begin{equation}\label{eq.mix-term-b}
|\Qo(\Pi\Phi,\Pi^\bot\Psi)|\leq \nu\|S\Pi^\bot\Psi\|\|\Phi\|+\nu\|\Pi^\bot\Psi\|\|S\Phi\|\,.
\end{equation}
We infer that
\begin{equation}\label{eq.diff-forms0}
|\Qo(\Phi,\Psi) - \hatQ(\Phi,\Psi)| \leq  \nu\|S\Pi^\bot\Phi\|\|\Psi\|+\nu\|\Pi^\bot\Phi\|\|S\Pi\Psi\|+ \nu\|S\Pi^\bot\Psi\|\|\Phi\|+\nu\|\Pi^\bot\Psi\|\|S\Phi\|\,.
\end{equation}
Since $Q_T$ is non-negative we have 
\begin{equation}\label{eq.SPhi}
\|S\Phi\|^2\leq \Qo(\Phi)\leq\|\LLo \Phi\|\|\Phi\|.
\end{equation}
Similarly,
\begin{equation}\label{eq.SPiPsi}
\|S\Pi\Psi\|^2\leq\hatQ(\Psi)\leq\|\hatLL\Psi\|\|\Psi\|\,.
\end{equation}
Then we estimate $\|\Pi^\bot\Phi\|$ and $\|S\Pi^\bot\Phi\|$. We have
\[\langle\Pi^\bot\Phi, \LLo \Phi\rangle= \Qo(\Pi^\bot\Phi,\Phi)=\Qo(\Pi^\bot\Phi)+\Qo(\Pi^\bot\Phi,\Pi\Phi)\,,\]
and deduce
\[\Qo(\Pi^\bot\Phi)\leq \|\LLo \Phi\|\|\Pi^\bot\Phi\|+|\Qo(\Pi^\bot\Phi, \Pi\Phi)|\,.\]
From \eqref{eq.mix-term-b}, we get
\[\Qo(\Pi^\bot\Phi)\leq \|\LLo \Phi\|\|\Pi^\bot\Phi\|+ \nu\|S\Pi^\bot\Phi\|\|\Phi\|+\nu\|\Pi^\bot\Phi\|\|S\Phi\|\,.\]
Moreover, we have
\[\Qo(\Pi^\bot\Phi)\geq \|S\Pi^\bot\Phi\|^2+\gamma\|\Pi^\bot\Phi\|^2\,.\]
We infer that
\begin{multline*}
 \|S\Pi^\bot\Phi\|^2+\gamma\|\Pi^\bot\Phi\|^2\\
\leq
\frac{\gamma}{4}\|\Pi^\bot\Phi\|^2+\frac{1}{\gamma}\|\LLo \Phi\|^2+ \frac{1}{2}\|S\Pi^\bot\Phi\|^2+\frac{\nu^2}{2}\|\Phi\|^2+\frac{\gamma}{4}\|\Pi^\bot\Phi\|^2+\frac{\nu^2}{\gamma}\|S\Phi\|^2\,.
\end{multline*}
Using \eqref{eq.SPhi} we deduce that
\[ 
\frac{1}{2}\left(\|S\Pi^\bot\Phi\|^2+\gamma\|\Pi^\bot\Phi\|^2\right)
\leq \frac{1}{\gamma}\|\LLo \Phi\|^2+\frac{\nu^2}{2}\|\Phi\|^2+\frac{\nu^2}{2}\left(\frac{\|\LLo \Phi\|^2}{\gamma^2}+\|\Phi\|^2\right),
\]
and thus
\begin{equation}\label{eq.control-PibotPhi}
\frac {\|S\Pi^\bot\Phi\|^2} \gamma  + \|\Pi^\bot\Phi\|^2 \leq (2+a^2) \frac {\|\LLo \Phi\|^2}{\g^2} + 2a^2 \|\Phi\|^2\,.
\end{equation}
Let us now consider $\|\Pi^\bot\Psi\|$ and $\|S\Pi^\bot\Psi\|$. We have easily that
\[\|S\Pi^\bot\Psi\|^2+\gamma\|\Pi^\bot\Psi\|^2\leq  \Qo(\Pi^\bot\Psi)=\hatQ(\Psi,\Pi^\bot\Psi)\leq\|\hatLL\Psi\|\|\Pi^\bot\Psi\|\,,\]
and thus
\begin{equation}\label{eq.control-PibotPsi}
\frac {\|S\Pi^\bot\Psi\|^2} \gamma + \|\Pi^\bot\Psi\|^2\leq\frac{\|\hatLL\Psi\|^2}{\gamma^2}.
\end{equation}
It remains to combine \eqref{eq.diff-forms0}, \eqref{eq.SPhi}, \eqref{eq.SPiPsi}, \eqref{eq.control-PibotPhi}, \eqref{eq.control-PibotPsi}, and use elementary manipulations.
\end{proof}  

\section{Examples of applications} \label{sec-examples}
In this section we discuss three applications of Theorem \ref{theo.main} and we recall that we are in the context of Remark \ref{rem.0}.

\subsection{Semiclassical Born-Oppenheimer approximation}
In this first example we set $(\Sigma,\sigma)=(\R,\dx s)$. 
We consider a Hilbert space $\Hc_T$ and set $\Hc = L^2(\R,\Hc_T)$. Then, for $h > 0$, we consider on $\Hc$ the operator $S_h = h D_{s}$, where $D_s = -i\partial_s$. We also consider an operator $T$ on $\Hc$ such that for $\Phi = (\Phi_s)_{s \in \R} \in \Hc$ we have $(T \Phi)_s = T_{s} \Phi_s$, where $(T_{s})$ is a family of operators on $\Hc_T$ which depends analytically on $s$. Thus the operator $\LLo = \LLh$ takes the form
\[
\LLh=h^2D^2_{s}+T\,.
\]
This kind of operators appears in \cite{LT13, Lampart2016} where their spectral and dynamical behaviors are analyzed. As an example of operator $T$, the reader can have the Schr\"odinger operator $-\Delta_{t}+V(s,t)$ in mind,
where the electric potential~$V$ is assumed to be real-valued.
Here the operator norm of the commutator $[hD_{s},\Pi]$ is controlled by the supremum of $\|\partial_{s}u_{1}(s)\|_{\mathcal{H}}$. Assuming that $\|\partial_{s}u_{1}(s)\|_{\mathcal{H}}$ is bounded, we have $a=a(h)=\mathcal{O}(h)$ (see \eqref{a.def}). Let us also assume, for our convenience, that $\mu_{1}$ has a unique minimum, non-degenerate and not attained at infinity. Without loss of generality we can assume that this minimum is 0 and is attained at $0$. Thus, here $\gamma$ just satisfies $\gamma=\inf_{s\in\R}\mu_{2}(s)>0$.

For $k \in \N^*$ we set 
\begin{equation} \label{def-lambda}
\l_k(h) = \sup_{\substack {F \subset \Dom(\LLh)  \\ \mathrm{codim}(F) = k-1}} \inf_{\substack{\f\in F \\ \nr{\f} = 1}} \innp{\LLh \f}{\f}.
\end{equation}
By the min-max principle, the first values of $\l_k(h)$ are given by the non-decreasing sequence of isolated eigenvalues of $\LLh$ (counted with multiplicities) below the essential spectrum. If there is a finite number of such eigenvalues, the rest of the sequence is given by the minimum of the essential spectrum. We similarly define the sequence $(\l_{\eff,k}(h))$ corresponding to the operator $\LLeffh$. Note that $\LLeffh$ can be identified with the operator
\[h^2D_{s}^2+\mu_{1}(s)+h^2\|\partial_{s}u_{1}(s)\|_{\mathcal{H}_T}^2\,.\]
As a consequence of the harmonic approximation (see for instance \cite[Chapter 7]{FouHel10} or \cite[Section 4.3.1]{Ray17}), we get the following asymptotics.
\begin{proposition} \label{prop-spectre-op-reduit}
Let $k\in\N^*$. We have
\[
  \l_{\eff,k}(h)=(2k-1)\sqrt{\frac{\mu''(0)}{2}}h+o(h)
  \,, \qquad h \to 0
  \,.
\]

\end{proposition}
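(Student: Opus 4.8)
The plan is to recognize $\LLeffh$ as (unitarily equivalent to) a one-dimensional semiclassical Schrödinger operator on $L^2(\R)$ and then apply the standard harmonic approximation. By the identification in Remark~\ref{rem.0}, $\Pi\Hc$ is identified with $L^2(\R)$ via $\f\mapsto(s\mapsto\f(s)u_1(s))$, and under this identification $\LLeffh$ becomes
\[
h^2 D_s^2 + \mu_1(s) + h^2\|\partial_s u_1(s)\|_{\Hc_T}^2 =: h^2 D_s^2 + W_h(s),
\]
as already noted in the excerpt (the $h^2\|\partial_s u_1\|^2$ term comes from the fact that differentiating $\f(s)u_1(s)$ produces both $\f'(s)u_1(s)$ and $\f(s)\partial_s u_1(s)$, and the cross term vanishes after integration by parts since $\|u_1(s)\|=1$ forces $\Re\innp{u_1(s)}{\partial_s u_1(s)}=0$). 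The first step is therefore to justify this identification carefully: that the form domain of $\Qeff$ corresponds to $H^1(\R)$ intersected with the natural weighted space, and that the operator acts as claimed on smooth compactly supported functions.

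Next I would invoke the hypotheses on $\mu_1$: it is bounded below (so $W_h$ is too), it has a unique, non-degenerate minimum, normalized to $\mu_1(0)=0$ with $\mu_1''(0)>0$, and this minimum is not attained at infinity (so $\liminf_{|s|\to\infty}\mu_1(s)>0$, giving a spectral gap between the low-lying eigenvalues and the essential spectrum for $h$ small). The term $h^2\|\partial_s u_1(s)\|^2$ is a bounded perturbation of order $h^2$, which is negligible compared to the $O(h)$ scale of the eigenvalues; so up to an $O(h^2)=o(h)$ error we may work with $h^2 D_s^2 + \mu_1(s)$. Then the harmonic approximation (e.g.\ \cite[Chapter 7]{FouHel10} or \cite[Section 4.3.1]{Ray17}) applies: after the rescaling $s = h^{1/2}\sigma$ the operator is, to leading order, $h\big(D_\sigma^2 + \tfrac12\mu_1''(0)\sigma^2\big)$, whose $k$-th eigenvalue is $h(2k-1)\sqrt{\mu_1''(0)/2}$. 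The quoted references give exactly $\l_{\eff,k}(h) = (2k-1)\sqrt{\mu_1''(0)/2}\,h + o(h)$; writing $\mu:=\mu_1$ yields the stated formula.

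The main obstacle is not the harmonic approximation itself — that is a black box from the cited literature — but rather verifying that $\LLeffh$ genuinely coincides with the stated scalar operator with the correct self-adjoint realization, and checking that the harmonic approximation results apply under the stated regularity (analyticity of $s\mapsto T_s$, smoothness and normalization of $u_1$, boundedness of $\|\partial_s u_1\|$). One should confirm that $W_h$ has the regularity required by the reference (typically $C^\infty$ near the well, with suitable growth/confinement at infinity replaced here by the gap to the essential spectrum), and that the min-max values $\l_{\eff,k}(h)$ in \eqref{def-lambda} are, for $h$ small, genuine discrete eigenvalues below the essential spectrum so that the asymptotics are meaningful. Once these bookkeeping points are settled, the proof is a direct citation.
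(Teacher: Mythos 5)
Your proposal is correct and follows essentially the same route as the paper, which simply identifies $\LLeffh$ with the scalar operator $h^2D_s^2+\mu_1(s)+h^2\|\partial_s u_1(s)\|^2_{\Hc_T}$ and cites the harmonic approximation (\cite[Chapter 7]{FouHel10}, \cite[Section 4.3.1]{Ray17}) without further argument. Your additional bookkeeping (the unitary identification of $\Pi\Hc$ with $L^2(\R)$, the negligibility of the $O(h^2)$ Born--Huang term relative to the $O(h)$ eigenvalue scale, and the rescaling $s=h^{1/2}\sigma$) is exactly what the paper leaves implicit.
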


From our abstract analysis, we deduce the following result.

\begin{proposition}\label{cor.diff-res.BO}
Let $c_0,C_0 > 0$. There exist $h_{0}>0$ and $C>0$ such that for $h\in(0,h_{0})$ and 
\[
z \in \Zc_h = \left\{z \in [-C_0 h,C_0 h] \, : \, \mathsf{dist}(z,\Sp(\LLeffh)) \geq c_0 h\right\}
\]
we have $z \in \rho(\LLh)$ and
\[
\|(\LLh-z)^{-1}-(\LLeffh-z)^{-1}\|\leq C\,.
\]
\end{proposition}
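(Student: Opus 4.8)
The plan is to apply Theorem~\ref{theo.main} with $\LLo = \LLh$, $\hatLL = \hatLLh$ the associated block-diagonal operator, and $z$ ranging over the region $\Zc_h$. The key point is to control, uniformly for $h$ small and $z \in \Zc_h$, the three quantities that enter the hypothesis and the conclusion of Theorem~\ref{theo.main}: the parameter $a = a(h)$, the resolvent norm $\|(\hatLLh - z)^{-1}\|$, and the various $\eta_j(z)$. As recalled in the text, under the standing assumption that $s \mapsto \|\partial_s u_1(s)\|_{\Hc_T}$ is bounded we have $a(h) = \Oc(h)$, and $\gamma = \inf_s \mu_2(s) > 0$ is a fixed constant independent of $h$. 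Since $|z| \leq C_0 h$ on $\Zc_h$, inspecting the formulas~\eqref{def-eta} gives at once $\eta_1(z) = \Oc(h^2)$, $\eta_2(z) = \Oc(h)$, $\eta_3(z) = \Oc(h)$, and $\eta_4(z) = \Oc(h)$, uniformly in $z \in \Zc_h$.

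The substantial step is the uniform bound on $\|(\hatLLh - z)^{-1}\|$ for $z \in \Zc_h$. Here I would use the decomposition $\hatLLh = \LLeffh \oplus \LLbot$ together with Proposition~\ref{prop.compl-theo.main} (equivalently Lemma~\ref{lem-res-bot}): for $z \notin [\gamma, +\infty)$ one has
\[
\|(\hatLLh - z)^{-1}\| \leq \|(\LLeffh - z)^{-1}\| + \frac{1}{\mathsf{dist}(z,[\gamma,+\infty))}.
\]
For $z \in \Zc_h$ we have $|z| \leq C_0 h$, so for $h$ small enough $\mathsf{dist}(z, [\gamma,+\infty)) \geq \gamma/2$, and the second term is bounded by $2/\gamma$. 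For the first term, the definition of $\Zc_h$ forces $\mathsf{dist}(z, \Sp(\LLeffh)) \geq c_0 h$; since $\LLeffh$ is self-adjoint, $\|(\LLeffh - z)^{-1}\| \leq (c_0 h)^{-1}$. Hence $\|(\hatLLh - z)^{-1}\| \leq (c_0 h)^{-1} + 2/\gamma = \Oc(h^{-1})$.

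With these estimates in hand, the hypothesis of Theorem~\ref{theo.main} reads $1 - \eta_1(z)\|(\hatLLh - z)^{-1}\| - \eta_2(z) > 0$; the first product is $\Oc(h^2)\cdot\Oc(h^{-1}) = \Oc(h)$ and $\eta_2(z) = \Oc(h)$, so the left-hand side is $1 - \Oc(h) > 0$ for $h \leq h_0$ with $h_0$ small enough, uniformly in $z \in \Zc_h$. Thus $z \in \rho(\LLh)$. Finally, the resolvent-difference bound in Theorem~\ref{theo.main} gives
\[
\|(\LLh - z)^{-1} - (\hatLLh - z)^{-1}\| \leq \eta_1(z)\|(\LLh-z)^{-1}\|\|(\hatLLh-z)^{-1}\| + \eta_2(z)\|(\LLh-z)^{-1}\| + \eta_3(z)\|(\hatLLh-z)^{-1}\| + \eta_4(z);
\]
using the last display of Theorem~\ref{theo.main}, $\|(\LLh-z)^{-1}\| \leq \Oc(h^{-1})$ as well, so each of the four terms is $\Oc(h^2)\cdot\Oc(h^{-1})\cdot\Oc(h^{-1}) = \Oc(1)$, $\Oc(h)\cdot\Oc(h^{-1}) = \Oc(1)$, $\Oc(h)\cdot\Oc(h^{-1}) = \Oc(1)$, and $\Oc(h) = \Oc(1)$, respectively. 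Combining with Proposition~\ref{prop.compl-theo.main} to pass from $\hatLLh$ to $\LLeffh$ (the error there being $\Oc(1)$ as well, since $\mathsf{dist}(z,[\gamma,+\infty)) \geq \gamma/2$), we obtain $\|(\LLh-z)^{-1} - (\LLeffh-z)^{-1}\| \leq C$ uniformly, which is the claim. The only mild subtlety — the "main obstacle" — is bookkeeping: making sure all the $\Oc(\cdot)$ bounds are genuinely uniform in $z \in \Zc_h$ (they are, since every estimate above depends on $z$ only through $|z| \leq C_0 h$ and through $\mathsf{dist}(z,\Sp(\LLeffh)) \geq c_0 h$), and checking that $\Dom(S_h)$ is $\Pi$-invariant with $[S_h,\Pi]$ bounded, which reduces to smoothness and boundedness of $u_1$ and $\partial_s u_1$ as already noted in the text preceding the statement.
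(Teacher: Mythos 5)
Your proposal is correct and follows essentially the same route as the paper: bound $\|(\hatLL_h-z)^{-1}\|$ by $(c_0h)^{-1}+\mathcal O(1)$ using the decomposition $\hatLL_h=\LLeffh\oplus\LLbot$, note $a(h)=\mathcal O(h)$ so the $\eta_j(z)$ are small enough that the hypothesis of Theorem~\ref{theo.main} holds uniformly on $\Zc_h$, deduce $\|(\LLh-z)^{-1}\|\lesssim h^{-1}$ and an $\mathcal O(1)$ bound on the resolvent difference, and finish with Proposition~\ref{prop.compl-theo.main}. Your explicit bookkeeping of the orders of $\eta_1,\dots,\eta_4$ simply spells out what the paper's proof leaves implicit.
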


\begin{proof}
Let $h > 0$ and $z \in \Zc_{h}$. If $h$ is small enough we have $C_0 h < \gamma$ so $z \in \rho(\LLeffh) \cap \rho(\LLbot_h) = \rho(\hatLL_h)$. Moreover, by the Spectral Theorem,
\[
\nr{(\hatLL_h-z)^{-1}} \leq \nr{(\LLeffh-z)^{-1}} + \nr{(\LLbot_h-z)^{-1}} \leq \frac 1 {c_0 h} + \frac 1 {\gamma - C_0 h}.
\]
With the notation \eqref{def-eta} we have  
\[
\liminf_{h \to 0} \sup_{z \in Z_{h}} \left(1 - \eta_{1,h}(z)\|(\hatLL_h-z)^{-1}\|-\eta_{2,h}(z) \right) > 0.
\]
From Theorems \ref{theo.main} and Proposition \ref{prop.compl-theo.main}, we deduce that $z\in\rho(\LLh)$, 
\[\|(\LLh-z)^{-1}\|\lesssim h^{-1}\,,\]
and the estimate on the difference of the resolvents.
Here and occasionally in the sequel,
we adopt the notation $x \lesssim y$ if there is 
a positive constant~$C$ (independent of~$x$ and~$y$) 
such that $x \leq C y$.
\end{proof}

From this norm resolvent convergence result, we recover a result of \cite[Section 4.2]{Lampart2016}.
\begin{proposition} \label{prop-eig-BH}
Let $k \in \N^*$. 
Then
\[
\l_k(h) = \l_{\eff,k}(h) +  \mathcal{O}(h^2), \quad h \to 0.
\]
\end{proposition}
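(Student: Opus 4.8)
The plan is to deduce Proposition~\ref{prop-eig-BH} from the norm-resolvent convergence of Proposition~\ref{cor.diff-res.BO} together with the eigenvalue asymptotics of Proposition~\ref{prop-spectre-op-reduit}, by a standard comparison of the $k$-th min-max values of $\LLh$ and $\LLeffh$ on a spectral window shrinking at speed $h$. First I would fix $k$ and use Proposition~\ref{prop-spectre-op-reduit} to note that $\l_{\eff,k}(h) = (2k-1)\sqrt{\mu''(0)/2}\,h + o(h)$ is comparable to $h$; in particular, choosing constants $c_0,C_0>0$ appropriately (depending on $k$), for $h$ small the window $[-C_0h,C_0h]$ contains $\l_{\eff,1}(h),\dots,\l_{\eff,k}(h)$, these are genuine isolated eigenvalues of $\LLeffh$ below its essential spectrum, consecutive ones are separated by a distance $\gtrsim h$, and $\l_{\eff,k}(h)$ itself stays at distance $\gtrsim h$ from the rest of $\Sp(\LLeffh)$. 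This is where $\Zc_h$ from Proposition~\ref{cor.diff-res.BO} comes in: we can find a contour (say a circle of radius $\asymp h$ centred at $\l_{\eff,k}(h)$, or more robustly a fixed union of small circles around $\l_{\eff,1}(h),\dots,\l_{\eff,k}(h)$) lying entirely in $\Zc_h$, on which $\|(\LLeffh-z)^{-1}\|\lesssim h^{-1}$ and, by Proposition~\ref{cor.diff-res.BO}, $\|(\LLh-z)^{-1}-(\LLeffh-z)^{-1}\|\leq C$.

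Next I would run the Riesz-projection argument. Let $\Gamma$ be such a contour enclosing exactly the eigenvalues $\l_{\eff,1}(h),\dots,\l_{\eff,k}(h)$ of $\LLeffh$ (with multiplicity), and set
\[
P_\eff = \frac{1}{2i\pi}\oint_\Gamma (\LLeffh-z)^{-1}\,\dx z, \qquad
P = \frac{1}{2i\pi}\oint_\Gamma (\LLh-z)^{-1}\,\dx z,
\]
the latter being well defined because $\Gamma\subset\Zc_h\subset\rho(\LLh)$. Then
\[
\|P - P_\eff\| \leq \frac{|\Gamma|}{2\pi}\sup_{z\in\Gamma}\|(\LLh-z)^{-1}-(\LLeffh-z)^{-1}\| \lesssim h \cdot C \lesssim h,
\]
since $|\Gamma|\asymp h$. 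For $h$ small this is $<1$, so $P$ and $P_\eff$ have the same rank, namely $k$ (counting multiplicity); in particular $\LLh$ has exactly $k$ eigenvalues in the region enclosed by $\Gamma$, all lying below the essential spectrum since the enclosed region is well inside $(-\infty,\gamma)$. Consequently the $k$-th min-max value $\l_k(h)$ is the largest of these, and it lies inside $\Gamma$, hence within distance $\asymp h$ of $\l_{\eff,k}(h)$ — this already gives $\l_k(h)=\l_{\eff,k}(h)+\mathcal O(h)$, but to reach $\mathcal O(h^2)$ one needs the sharper quantitative step below.

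To upgrade $\mathcal O(h)$ to $\mathcal O(h^2)$ I would use the two-sided resolvent/quadratic-form estimate more carefully, or equivalently the standard fact that when $\|(\LLh-z)^{-1}-(\LLeffh-z)^{-1}\|\leq C$ on a circle of radius $r$ about an eigenvalue, the eigenvalues of $\LLh$ inside differ from those of $\LLeffh$ by $\mathcal O(C r^2)$: indeed, applying the difference estimate to an eigenfunction. Concretely, let $\l=\l_j(h)$ ($j\le k$) be an eigenvalue of $\LLeffh$ with normalized eigenfunction $\f$ (viewed in $\Hc$ via $\Pi$); since $\Pi$ commutes with $\LLeffh$ and $(\LLeffh-\l)\f=0$, Proposition~\ref{cor.diff-res.BO} gives, for $z$ on a circle of radius $r\asymp h$ around $\l$,
\[
\|(\LLh-z)^{-1}\f\| = \|(\LLh-z)^{-1}\f - (\LLeffh-z)^{-1}\f - \tfrac{1}{\l-z}\f\| + \tfrac{1}{|\l-z|} \leq C + \tfrac1r,
\]
so $\mathrm{dist}(\l,\Sp(\LLh))^{-1} = \|(\LLh-\l)^{-1}\|$ — wait, one instead tests: $\|(\LLh-z)^{-1}\| \ge \|(\LLh-z)^{-1}\f\| \ge \tfrac1r - C$, which only bounds the resolvent from below. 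The clean route is: from $\|(\LLh-z)^{-1}-(\LLeffh-z)^{-1}\|\le C$ on $\Gamma$ and the fact that $(\LLeffh-z)^{-1}$ has a simple pole structure there, the operator $\LLh$ restricted to $\mathrm{Ran}\,P$ is conjugate to $\LLeffh$ restricted to $\mathrm{Ran}\,P_\eff$ up to an error of order $\|(\LLh-P_\eff\text{-model})\| \lesssim r^2 \cdot \sup\|(\LLh-z)^{-1}-(\LLeffh-z)^{-1}\| \lesssim h^2$; hence the eigenvalues match to $\mathcal O(h^2)$, and in particular $\l_k(h)=\l_{\eff,k}(h)+\mathcal O(h^2)$. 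I expect the main obstacle to be exactly this last quantitative step: extracting the $\mathcal O(h^2)$ rather than $\mathcal O(h)$ requires either a careful perturbative expansion of the eigenvalue via $\mathrm{tr}\big((\LLh-z)P - \l_{\eff,k}(h) P\big)$ integrated over $\Gamma$, exploiting that the \emph{difference} of resolvents is bounded (size $C$) while being integrated against a contour of length $\asymp h$ and weighted by factors $(z-\l_{\eff,k}(h))$ that are themselves $\asymp h$ — so the bookkeeping of powers of $h$ must be done honestly, keeping track of the fact that $\l_{\eff,k}(h)$ is simple (or of bounded multiplicity) so that no small denominators beyond $\asymp h$ appear.
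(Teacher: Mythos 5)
Your route (Riesz projections on contours of radius $\asymp h$) is genuinely different from the paper's, and it can be completed, but as written it has two concrete gaps. First, Proposition~\ref{cor.diff-res.BO} is stated only for \emph{real} spectral parameters: $\Zc_h\subset[-C_0h,C_0h]\subset\R$, so your circles in $\C$ are not contained in $\Zc_h$ and the resolvent-difference bound you invoke on $\Gamma$ is not literally available. This is repairable — Theorem~\ref{theo.main} and Proposition~\ref{prop.compl-theo.main} hold for complex $z$, and the proof of Proposition~\ref{cor.diff-res.BO} only uses $|z|\lesssim h$ and $\mathsf{dist}(z,\Sp(\hatLL_h))\gtrsim h$, so it extends to the complex discs you need — but the extension must be stated. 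Second, the decisive upgrade from $\mathcal O(h)$ to $\mathcal O(h^2)$ is asserted rather than proved: the eigenfunction test you rightly abandon, and the ``conjugate up to an error $\lesssim r^2\sup\|\cdot\|$'' claim is precisely what requires an argument. It can be supplied: $\LLh P-\LLeffh P_\eff=\frac{1}{2i\pi}\oint_\Gamma z\big[(\LLeffh-z)^{-1}-(\LLh-z)^{-1}\big]\,\dx z$ has norm $\lesssim |\Gamma|\cdot\sup_\Gamma|z|\cdot C\lesssim h^2$ (both the length of $\Gamma$ and $|z|$ on $\Gamma$ are $\mathcal O(h)$), and transporting $\LLh|_{\mathrm{Ran}P}$ to $\mathrm{Ran}P_\eff$ by Kato's intertwining unitary $W$ with $\|W-\mathrm{Id}\|\lesssim\|P-P_\eff\|\lesssim h$ costs another $\mathcal O(h)\cdot\|\LLeffh P_\eff\|=\mathcal O(h^2)$; together with rank equality, $\LLh\geq 0$, and the fact that the real gaps between (and below) the circles lie in $\Zc_h\subset\rho(\LLh)$, this identifies $\l_k(h)$ as the top eigenvalue inside $\Gamma$ and yields the claim.

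The paper avoids all of this with a one-point trick, which is why its Proposition~\ref{cor.diff-res.BO} only needs real $z$: fix $\e>0$ with $\l_{\eff,k+1}(h)-\l_{\eff,k}(h)>2\e h$ and set $z_h=\l_{\eff,k}(h)+\e h\in\Zc_h$. Then $(\LLeffh-z_h)^{-1}$ has exactly $k$ negative eigenvalues, all of size $\gtrsim 1/h$, the rest of its spectrum being positive; since $\nr{(\LLh-z_h)^{-1}-(\LLeffh-z_h)^{-1}}\leq C$, the min-max principle applied to the two bounded self-adjoint resolvents gives $\abs{(\l_k(h)-z_h)^{-1}-(\l_{\eff,k}(h)-z_h)^{-1}}\leq C$, and because $(\l_{\eff,k}(h)-z_h)^{-1}=-1/(\e h)$ is large, an $\mathcal O(1)$ perturbation of it corresponds to $|\l_k(h)-\l_{\eff,k}(h)|\leq C\e h\,|\l_k(h)-z_h|=\mathcal O(h^2)$. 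This handles ordering and multiplicities automatically and is much shorter; your projection argument buys control of the eigenprojections (potentially useful beyond the self-adjoint setting), at the price of the complex extension and the conjugation step above.
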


\begin{proof}
Let $\e > 0$ be such that $\l_{\eff,k+1}(h) - \l_{\eff,k}(h) > 2\e h$  for all $h$. We set $z_h = \l_{\eff,k}(h) + \e h$. The resolvent $(\LLeffh-z_h)^{-1}$ has $k$ negative eigenvalues
\[
\frac 1 {\l_{\eff,k}(h)-z_h} \leq \dots \leq \frac 1 {\l_{\eff,1}(h)-z_h},
\]
all smaller than $-\a/h$ for some $\a > 0$, and the rest of the spectrum is positive. By Proposition~\ref{cor.diff-res.BO} the resolvent ${(\LLh-z_h)^{-1}}$ is well defined for $h$ small enough and there exists $C > 0$ such that 
\[
\nr{(\LLh-z_h)^{-1} - (\LLeffh-z_h)^{-1}} \leq C.
\]
By the min-max principle applied to these two resolvents, we obtain that for all $j \in \{ 1,\dots, k\}$ the $j$-th eigenvalue of $(\LLh-z_h)^{-1}$ is at distance not greater than $C$ from $1 / (\l_{\eff,k+1-j}-z_h)$, and the rest of the spectrum is greater than $-C$. In particular, for $j = 1$,
\[
\abs{\frac 1 {\l_k(h)-z_h} - \frac 1 {\l_{\eff,k}(h)-z_{h}}} \leq C.
\]
This gives
\[
\abs{\l_k(h) - \l_{\eff,k}(h)} \leq  C \e h \abs{\l_k(h) - \l_{\eff,k}(h) + \e h} ,
\]
and the conclusion follows for $h$ small enough.
\end{proof}

\subsection{Shrinking neighborhoods of hypersurfaces}\label{sec.shrink.Dir}

In this paragraph we consider a submanifold $\Sigma$ of $\R^d$, $d \geq 2$, as in Section \ref{sec-intro-nsa-robin}. We choose $\e > 0$ and define $\Th_\e$, $\O$ and $\O_\e$ as in \eqref{def-Theta} and \eqref{def-Omega-eps}.
For $\varphi\in H^1_{0}(\Omega_{\eps})$, we set
\[
Q^\mathsf{Dir}_{\O_\e}(\varphi)=\int_{\Omega_{\eps}}|\nabla\varphi|^2\dx x,
\]
and we denote by $-\Delta_{\Omega_{\eps}}^\mathsf{Dir}$ the associated operator.
Then we use the diffeomorphism $\Theta_\eps$ to see $-\Delta^\mathsf{Dir}_{\Omega_{\eps}}$ as an operator on $L^2(\Omega)$. We set, for $\psi \in H^1_{0}(\Omega, \dx \sigma  \dx t)$, 
\[
\mathcal{Q}_\eps^{\mathsf{Dir}}(\psi) = Q_{\O_\eps}^{\mathsf{Dir}}(\psi \circ \Theta_\eps^{-1})\,.
\]

We need a more explicit expression of $\mathcal{Q}^{\mathsf{Dir}}_\eps$ in terms of the variables $(s,t)$ on $\Omega$. For $(s,t) \in \Omega$ we have on $T_{(s,t)} \Omega \simeq T_s\Sigma \times n(s) \R$ 
\[
d_{(s,t)} \Theta_\eps = (\mathsf{Id}_{T_s\Sigma} + \eps t d_s n) \otimes \eps \mathsf{Id}_{n(s) \R}\,.
\]
Hence 
\[
d_{\Theta_\eps(s,t)} \Theta_\eps^{-1} = (\mathsf{Id}_{T_s\Sigma} + \eps t d_s n)^{-1} \otimes \eps^{-1} \mathsf{Id}_{n(s) \R}\,.
\]
We recall that the Weingarten map $-d_s n$ is a self-adjoint operator on $T_s\Sigma$ (endowed with the metric inherited from the Euclidean structure on $\R^d$). For $\psi \in H^1(\Omega, \dx \sigma  \dx t)$, $x \in \Omega_{\eps}$ and $(s,t) = \Theta_\eps^{-1}(x)$ we get 
\begin{align*}
\|\nabla (\psi \circ \Theta_\eps^{-1})(x)\|_{T_x \O_\e}^2
& = \|(d_x \Theta_\eps^{-1})^* \nabla \psi(s,t)\|_{T_x \O_\e}^2\\
& = \|(\mathsf{Id}_{T_s\Sigma} + \eps t d_s n)^{-1} \nabla_s \psi(s,t)\|_{T_s \Sigma}^2 + \frac 1 {\eps^2} \abs{\partial_t \psi (s,t)}^2\,.
\end{align*}
The eigenvalues of the Weingarten map are the principal curvatures $\kappa_1,\dots,\kappa_{d-1}$. In particular for $(s,t)\in\Omega$ we have 
\begin{equation} \label{det-w}
|d_{(s,t)}\Theta_\eps| = \eps {w}_{\eps}, \quad \text{where }  w_\eps(s,t) = \prod_{j=1}^{d-1} (1- \eps t \kappa_j(s))\,.
\end{equation}
The Riemannian structure on $\Omega$ is given by the pullback by $\Theta_\eps$ of the Euclidean structure defined on $\Omega_{\eps}$. More explicitly, for $(s,t) \in \Omega$ the inner product on $T_{(s,t)} \O$ is given by
\[
\forall X ,Y  \in T_{(s,t)} (\Omega), \quad  g_\eps(X,Y) = \langle d_{(s,t)} \Theta_\eps (X), d_{(s,t)} \Theta_\eps (Y)\rangle_{\R^d}\,.\]
Then the measure corresponding to the metric $g_\eps$ is given by $\eps {w}_{\eps} \dx \sigma\dx t$. Thus, if we set 
\begin{equation} \label{def-G}
G_\eps(s,t) = (\mathsf{Id}_{T_s\Sigma} + \eps t d_s n)^{-2},
\end{equation}
we finally obtain
\begin{align*}
\mathcal{Q}_\eps^{\mathsf{Dir}}(\psi)
& = \int_{\Omega_{\eps}} |(\mathsf{Id}_{T_s\Sigma} + \eps t d_s n)^{-1} \nabla_s \psi (\Theta_\eps^{-1} (x))|^2 \dx x + \frac 1 {\eps^2} \int_{\Omega_{\eps}} |\partial_t \psi (\Theta_\eps^{-1} (x))|^2 \dx x \\
& = \eps \int_{\Omega} \langle G_\eps(s,t) \nabla_s \psi, \nabla_s \psi\rangle_{T\Sigma} {w}_{\eps} \dx\sigma\dx t  + \frac{1}{\eps^2} \int_{\Omega} |\partial_t \psi|^2 \eps {w}_{\eps} \dx\sigma\dx t\,.
\end{align*}
The transverse operator $T_{s}(\e)$ is the Dirichlet realization on $L^2((-1,1),\eps  w_{\eps}\dx t)$ of the differential operator $-\eps^{-2} w^{-1}_{\eps}\partial_{t} w_{\eps}\partial_{t}$. We denote by $\m_1(s,\e)$ its first eigenvalue and we set $\mu(\eps)=\inf_{s\in\R}\mu_{1}(s,\eps)$. We have, by perturbation theory, as $\eps \to 0$,
\[
\mu_{1}(s,\eps)=\frac{\pi^2}{4\eps^2}+V(s)+\mathcal{O}(\eps)\,,\quad\mu(\eps)=\frac{\pi^2}{4\eps^2}+\mathcal{O}(1),
\]
where
\[
V(s)=-\frac{1}{2}\sum_{j=1}^{d-1}\kappa_{j}(s)^2+\frac{1}{4}\left(\sum_{j=1}^{d-1}\kappa_{j}(s)\right)^2.
\]
We denote by $\LLDir$ the operator associated to the form $\mathcal{Q}_\eps^{\mathsf{Dir}}$ and by $\LLDireff$ the corresponding effective operator as defined in the general context of Section \ref{sec.general-reduction}. It is nothing but the operator associated with the form $H^1(\Sigma)\ni\varphi\mapsto\mathcal{Q}_\eps^{\mathsf{Dir}}(\varphi u_{s,\eps})$ where $u_{s,\eps}$ is the positive $L^2$-normalized groundstate of the transverse operator (and actually depending on the principal curvatures analytically). From perturbation theory, we can easily check that the commutator between the projection on $u_{s,\eps}$ and $S$ is bounded (and of order $\eps$).

\begin{proposition}\label{cor.waveguides}
Let $c_0,C_0 > 0$. There exist $\eps_{0}>0$ and $C>0$ such that for all $\eps\in(0,\eps_{0})$ and
\[z \in \Zc_{c_{0}, C_{0},\eps}=\{z\in\R : |z-\mu(\eps)|\leq C_{0}\,,\quad\mathsf{dist}(z,\Sp(\LLDireff))\geq c_{0}\}\]
we have
\[\left\|\left(\LLDir-z\right)^{-1}-\left(\LLDireff-z\right)^{-1}\right\|\leq C\eps\,.\]
\end{proposition}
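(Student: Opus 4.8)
The plan is to apply the abstract machinery of Theorem~\ref{theo.main} together with Proposition~\ref{prop.compl-theo.main} to the concrete pair $(\LLDir, \hatLLDir)$, exactly as was done for the Born--Oppenheimer case in Proposition~\ref{cor.diff-res.BO}, and then transfer the estimate from $\hatLLDir$ back to $\LLDireff$. First I would fix $c_0, C_0 > 0$ and take $z \in \Zc_{c_0,C_0,\eps}$. Since $\mu(\eps) = \pi^2/(4\eps^2) + \Oc(1) \to +\infty$ while $\gamma = \gamma(\eps)$ is the gap $\inf_s \mu_2(s,\eps) - \mu(\eps)$, which by perturbation theory behaves like $3\pi^2/(4\eps^2) + \Oc(1)$, for $\eps$ small the point $z$ lies well below $[\gamma,+\infty)$ and in fact $\mathsf{dist}(z,[\gamma,+\infty)) \gtrsim \eps^{-2}$. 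In particular $z \in \rho(\LLDireff) \cap \rho(\LLbot_\e) = \rho(\hatLLDir)$, where as usual $\LLbot_\e$ is the restriction of $\hatLLDir$ to $\Pi^\bot$-space and satisfies $\inf \Sp(\LLbot_\e) \geq \gamma$ by~\eqref{def-gamma}. By the spectral theorem, as in Proposition~\ref{cor.diff-res.BO},
\[
\|(\hatLLDir - z)^{-1}\| \leq \|(\LLDireff - z)^{-1}\| + \|(\LLbot_\e - z)^{-1}\| \leq \frac{1}{c_0} + \frac{1}{\gamma - C_0 - \mu(\eps) + \dots},
\]
which is bounded uniformly in $\eps$; in fact the first term dominates, giving $\|(\hatLLDir-z)^{-1}\| \leq c_0^{-1} + o(1)$.

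The second ingredient is the control of the parameter $a = a(\eps)$ from~\eqref{a.def}. Here $\gamma = \gamma(\eps) \sim 3\pi^2/(4\eps^2)$ is of order $\eps^{-2}$, and the commutator $[S,\Pi]$ is controlled, as indicated in the text preceding the statement, by the $s$-derivatives of the normalized transverse groundstate $u_{s,\eps}$; since $S$ carries the $\eps$-scaled longitudinal gradient (the form has the prefactor $\eps \langle G_\eps \nabla_s \cdot, \nabla_s \cdot\rangle w_\eps$ against the measure $\eps w_\eps \,\dx\sigma\,\dx t$, so after the unitary rescaling $S$ is essentially $\nabla_s$ composed with the bounded operator $G_\eps^{1/2}$), and $u_{s,\eps}$ depends on $s$ only through the principal curvatures $\kappa_j(s)$ analytically with $\eps$-independent bounds, one gets $\|[S,\Pi]\|_{\Lc(\Hc)} = \Oc(1)$. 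Hence $a(\eps) = \Oc(1)/\sqrt{\gamma(\eps)} = \Oc(\eps)$. Feeding $a = \Oc(\eps)$, $|z| = \mu(\eps) + \Oc(1) = \Oc(\eps^{-2})$, and $\gamma = \gamma(\eps) \sim \eps^{-2}$ into the definitions~\eqref{def-eta}, one checks term by term that $\eta_{1,\e}(z) = \Oc(1)$, $\eta_{2,\e}(z) = \Oc(\eps)$, $\eta_{3,\e}(z) = \Oc(\eps)$, $\eta_{4,\e}(z) = \Oc(\eps)$; more precisely $\eta_{1,\e}(z)$ stays bounded (the worst term $a^2|z|^2/\gamma$ is $\Oc(\eps^2 \cdot \eps^{-4} \cdot \eps^2) = \Oc(1)$) while the others are $\Oc(\eps)$. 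Combined with $\|(\hatLLDir - z)^{-1}\| \leq c_0^{-1} + o(1)$, I would verify that
\[
\liminf_{\eps \to 0}\ \inf_{z \in \Zc_{c_0,C_0,\eps}} \bigl(1 - \eta_{1,\e}(z)\|(\hatLLDir-z)^{-1}\| - \eta_{2,\e}(z)\bigr) > 0,
\]
provided $c_0$ is chosen so that the bounded quantity $\eta_{1,\e}(z)/c_0$ stays below $1$ — which is exactly the content we need and where a compatibility assumption between $c_0$ and the curvature bounds is implicitly used (one may also just note that $\eta_{1,\e}(z)\to 0$ if $\Sigma$ is compact, or handle the general case by noting the leading constant can be made small, or restrict to the regime where the resolvent bound holds).

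Granting that, Theorem~\ref{theo.main} gives $z \in \rho(\LLDir)$ together with $\|(\LLDir - z)^{-1}\| \leq \bigl((\eta_{3,\e}(z)+1)\|(\hatLLDir-z)^{-1}\| + \eta_{4,\e}(z)\bigr)/(1 - \eta_{1,\e}(z)\|(\hatLLDir-z)^{-1}\| - \eta_{2,\e}(z)) = \Oc(1)$, and the resolvent-difference bound
\[
\|(\LLDir - z)^{-1} - (\hatLLDir - z)^{-1}\| \leq \eta_{1,\e}(z)\|(\LLDir-z)^{-1}\|\|(\hatLLDir-z)^{-1}\| + \eta_{2,\e}(z)\|(\LLDir-z)^{-1}\| + \eta_{3,\e}(z)\|(\hatLLDir-z)^{-1}\| + \eta_{4,\e}(z).
\]
The first term on the right is $\Oc(1)\cdot\Oc(1)\cdot\Oc(1)$, which is only $\Oc(1)$, not $\Oc(\eps)$ — so here is the one genuinely delicate point: to get the claimed rate $C\eps$ one must observe that the $\eta_1$-term is actually $\eta_{1,\e}(z)\|[\LLDir-z]^{-1}\|\|[\hatLLDir-z]^{-1}\|$ with the $\eta_1$ contribution to the form difference being paired, in Theorem~\ref{th-comp-forms}/Proposition~\ref{prop.diff.Q0Q}, against $\|\Phi\|\|\Psi\|$ with coefficient $3a^2\gamma/\sqrt2 + \dots$; tracking the $a$-powers through Proposition~\ref{prop.diff.Q0Q} shows that every genuinely $a$-linear-or-higher term that is paired with two resolvents contributes a net power $a/\gamma \cdot (\text{stuff})$ or $a^2$, and after the dust settles the total is $\Oc(a) = \Oc(\eps)$; the only term that would be $\Oc(1)$ is controlled by the smallness of $c_0^{-1}\eta_{1,\e}$ absorbed on the left when one solves for $\|(\LLDir-z)^{-1}\|$. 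Cleanly: use the ``In particular'' bound of Theorem~\ref{theo.main} to first pin $\|(\LLDir-z)^{-1}\| = \Oc(1)$, then re-substitute into the difference estimate and use that $\eta_{1,\e}(z) = \Oc(a^2\gamma + a|z| + a|z|^2/\gamma) = \Oc(1)$ is multiplied by $\|(\LLDir-z)^{-1}\|\|(\hatLLDir-z)^{-1}\| = \Oc(1)$ — so in fact one should sharpen the bookkeeping to see $\eta_{1,\e}(z) = \Oc(\eps)$ is \emph{false} in general, hence one instead keeps $z$ in a fixed compact window by working with $\LLDir - \mu(\eps)$ shifted, i.e. replace $z$ by $z + \mu(\eps)$ with $z$ bounded, so that $|z| = \Oc(1)$ and then $\eta_{1,\e} = \Oc(a^2\gamma) = \Oc(\eps^2 \cdot \eps^{-2}) = \Oc(1)$ still — the resolution is that after the shift the relevant ``$|z|$'' in~\eqref{def-eta} is the distance in the shifted picture, and the leading $a^2\gamma$ term is $\Oc(1)$ but multiplies $\|(\hatLL-z)^{-1}\| = \Oc(1)$, so one genuinely only obtains $\Oc(1)$ from Theorem~\ref{theo.main} alone and must instead invoke the finer structure. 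I expect this $\eps$-rate bookkeeping — reconciling the $\Oc(1)$ coefficient $\eta_1$ with the desired $\Oc(\eps)$ conclusion via the shift $\LLDir \rightsquigarrow \LLDir - \mu(\eps)$ and a direct re-examination of Proposition~\ref{prop.diff.Q0Q} with $\|\LLDir\Phi\|$ replaced by $\|(\LLDir - \mu(\eps))\Phi\|$ — to be the main obstacle; everything else (the spectral-theorem bound on $\|(\hatLLDir-z)^{-1}\|$, the $a = \Oc(\eps)$ estimate, and the final passage from $\hatLLDir$ to $\LLDireff$ via Proposition~\ref{prop.compl-theo.main}, whose right-hand side $1/\mathsf{dist}(z,[\gamma,+\infty)) = \Oc(\eps^2)$) is routine and parallels Proposition~\ref{cor.diff-res.BO}.
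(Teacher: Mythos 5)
Your overall strategy---shift by $\mu(\eps)$, bound $\|(\hatLLe-\zeta)^{-1}\|$ by the spectral theorem, invoke Theorem \ref{theo.main}, and pass to $\LLDireff$ via Proposition \ref{prop.compl-theo.main}---is exactly the paper's, but there is a genuine gap in the single quantitative input that makes the argument close: the size of the commutator. You assert $\|[S,\Pi]\|_{\Lc(\Hc)}=\Oc(1)$ and hence $a(\eps)=\Oc(\eps)$, on the grounds that $u_{s,\eps}$ depends analytically on the curvatures $\kappa_j(s)$. In fact the transverse operator, hence its ground state, depends on $s$ only through the parameters $\eps t\kappa_j(s)$ (see $w_\eps$ in \eqref{det-w}), so analytic perturbation theory gives $\partial_s u_{s,\eps}=\Oc(\eps)$ and therefore $\|[S,\Pi]\|=\Oc(\eps)$---this is precisely what the paper states just before the proposition---so that $a_\eps=\Oc(\eps)/\sqrt{\gamma_\eps}=\Oc(\eps^2)$. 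With $a_\eps=\Oc(\eps^2)$, $\gamma_\eps\sim\eps^{-2}$ and $\zeta=z-\mu(\eps)=\Oc(1)$, all four quantities in \eqref{def-eta} are $\Oc(\eps)$ (the term you worry about, $a^2\gamma$, is then $\Oc(\eps^2)$), the condition $1-\eta_1\|(\hatLLe-\zeta)^{-1}\|-\eta_2>0$ holds for small $\eps$ with no extra compatibility hypothesis between $c_0$ and the curvature bounds, Theorem \ref{theo.main} gives the difference of resolvents of $\LLe$ and $\hatLLe$ of size $\Oc(\eps)$, and the $\Oc(\eps^2)$ error of Proposition \ref{prop.compl-theo.main} finishes the proof.

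Because you only have $a=\Oc(\eps)$, your $\eta_1$ is $\Oc(a^2\gamma)=\Oc(1)$, and you correctly observe---but do not resolve---that Theorem \ref{theo.main} then yields only an $\Oc(1)$ bound on the resolvent difference and that some ``finer structure'' would be required; your closing paragraph leaves the claimed rate $C\eps$ unproven, which is the heart of the statement. (Your preliminary bookkeeping in the unshifted variable is also internally off: the last term of $\eta_1$ is $3a|z|^2/(\gamma\sqrt2)$, not $a^2|z|^2/\gamma$, so with $|z|\sim\eps^{-2}$ and your $a=\Oc(\eps)$ it would blow up; this is moot once one works with $\zeta=z-\mu(\eps)$, as required anyway by the non-negativity assumption of the abstract setting, cf.\ Remark \ref{rem.0}.) In short, the missing idea is the $\Oc(\eps)$ commutator bound coming from the fact that the $s$-dependence of the transverse problem enters only through $\eps\kappa_j(s)$; once that is in place, the proof closes exactly along the lines you set up, with no further refinement of Proposition \ref{prop.diff.Q0Q} needed.
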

We recover a result of \cite{KRT15} (when there is no magnetic field). 
\begin{proof}
We are in the context of Remark \ref{rem.0}. The form $\Qc_\e - \m(\e)$ is non-negative. We denote by $\LLe$ the corresponding non-negative self-adjoint operator and define $\hatLLe$ as in Lemma \ref{lem.hQ}. Given $\e > 0$ and $z \in \Zc_{c_{0}, C_{0},\e}$ we write $\z$ for $z - \m(\e)$. Thus, with the notation of the abstract setting we have $\g_\e \sim \e^{-2}$, $a_\e = \mathcal{O}(\e^2)$, $\z = \mathcal{O}(1)$ and hence $\eta_{1,\e}(\z)=\mathcal{O}(\eps)$, $\eta_{2,\e}(\z)=\mathcal{O}(\eps^2)$, $\eta_{3,\e}(\z)=\mathcal{O}(\eps)$ and $\eta_{4,\e}(\z)=\mathcal{O}(\eps^2)$. Moreover, by the spectral theorem, we have
\[
\left\|\big(\hatLLe-\z\big)^{-1}\right\|=\mathcal{O}(1).
\]
Thus, there exists $\eps_{0}>0$ such that for $\eps\in(0,\eps_{0})$, $z\in Z_{c_{0},C_{0},\eps}$ and $\z = z-\m(\e)$ the operator $\LLe-\z$ is bijective and
\[\left\|\left(\LLe-\z\right)^{-1}\right\|=\mathcal{O}(1)\,,\]
\[\left\|\left(\LLe-\z\right)^{-1}-\big(\hatLLe-\z\big)^{-1}\right\|=\mathcal{O}(\eps)\,.\]
The conclusion easily follows.
\end{proof}

Given $\e > 0$ we define the sequence $(\l^{\mathsf{Dir}}_{k}(\e))_{k \in \N^*}$ and $(\l^{\mathsf{Dir}}_{k,\eff}(\e))_{k \in \N^*}$ corresponding to the operators $\LLDir$ and $\LLDireff$ as in \eqref{def-lambda}. By using analytic perturbation theory with respect to the parameters $(\eps\kappa_{j})_{1\leq j\leq d-1}$ to treat the commutator, we have, for all $k \in \N^*$, 
\[
\l^{\mathsf{Dir}}_{k,\eff}(\e) = \frac{\pi^2}{4\eps^2}+\l_{k}^\Sigma+\mathcal{O}(\eps)\,, \quad \eps \to 0,
\]
where $\l_{k}^\Sigma$ is the $k$-th eigenvalue of $-\Delta_{s}+V(s)$. 

We recover a result in the spirit of \cite{Duclos95, KRT15}.
\begin{proposition}\label{cor.waveguides-ev}
For all $k\geq 1$ we have
\[
\lambda^\mathsf{Dir}_{k}(\eps)=\frac{\pi^2}{4\eps^2}+\l_{k}^\Sigma+\mathcal{O}(\eps), \quad \eps \to 0
\]
\end{proposition}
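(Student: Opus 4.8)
The plan is to combine the norm-resolvent convergence provided by Proposition~\ref{cor.waveguides} with the min-max principle, essentially repeating the argument used for Proposition~\ref{prop-eig-BH} but now in the setting of the Dirichlet Laplacian on the shrinking tube. First I would reduce to a suitable bounded region of the spectrum: fix $k$, and observe that since $\l^{\mathsf{Dir}}_{k,\eff}(\eps) = \frac{\pi^2}{4\eps^2} + \l_k^\Sigma + \mathcal{O}(\eps)$ (with the $\l_j^\Sigma$ the eigenvalues of $-\Delta_s + V$ on $\Sigma$), the first $k$ shifted effective eigenvalues $\l^{\mathsf{Dir}}_{j,\eff}(\eps) - \mu(\eps)$, $j \le k$, stay in a fixed bounded interval as $\eps \to 0$, and they are separated from $\l^{\mathsf{Dir}}_{k+1,\eff}(\eps) - \mu(\eps)$ by a gap bounded below by $2c$ for some $c>0$ (using that the $\l_j^\Sigma$ are distinct, or at least that there is a spectral gap just above the $k$-th one; if $\l_k^\Sigma = \l_{k+1}^\Sigma$ one simply enlarges $k$ and reads off the statement for the multiple eigenvalue).

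Next I would set $z_\eps = \l^{\mathsf{Dir}}_{k,\eff}(\eps) + c$ (so that, after subtracting $\mu(\eps)$, the point $\zeta_\eps = z_\eps - \mu(\eps)$ lies in $\Zc_{c_0,C_0,\eps}$ for appropriate $c_0, C_0$ and all small $\eps$), and apply Proposition~\ref{cor.waveguides} at this point to get
\[
\bigl\| (\LLDir - z_\eps)^{-1} - (\LLDireff - z_\eps)^{-1} \bigr\| \le C\eps .
\]
The resolvent $(\LLDireff - z_\eps)^{-1}$ has exactly $k$ negative eigenvalues, namely $(\l^{\mathsf{Dir}}_{j,\eff}(\eps) - z_\eps)^{-1}$ for $j=1,\dots,k$, all of which are $\leq -1/c < 0$ and bounded away from $0$ uniformly in $\eps$, while the rest of its spectrum is positive. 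By the min-max principle applied to the two self-adjoint resolvents (both bounded, differing by $\mathcal{O}(\eps)$ in norm), the $j$-th largest eigenvalue of $(\LLDir - z_\eps)^{-1}$ lies within $C\eps$ of the $j$-th largest eigenvalue of $(\LLDireff - z_\eps)^{-1}$; in particular, for the $k$ negative ones this pins down $\l^{\mathsf{Dir}}_j(\eps)$ for $j \le k$. Taking $j$ corresponding to the $k$-th eigenvalue and inverting the relation gives
\[
\Bigl| \frac{1}{\l^{\mathsf{Dir}}_k(\eps) - z_\eps} - \frac{1}{\l^{\mathsf{Dir}}_{k,\eff}(\eps) - z_\eps} \Bigr| \le C\eps ,
\]
whence, since both denominators are bounded and bounded away from zero,
\[
\bigl| \l^{\mathsf{Dir}}_k(\eps) - \l^{\mathsf{Dir}}_{k,\eff}(\eps) \bigr| \le C' \eps ,
\]
and combining with the expansion of $\l^{\mathsf{Dir}}_{k,\eff}(\eps)$ yields $\l^{\mathsf{Dir}}_k(\eps) = \frac{\pi^2}{4\eps^2} + \l_k^\Sigma + \mathcal{O}(\eps)$.

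The main obstacle, as in the Born–Oppenheimer case, is bookkeeping with the spectral shift and the min-max counting rather than any deep difficulty: one must be careful that the eigenvalues being compared really are the $k$-th ones on both sides (so that the counting of negative eigenvalues of the two resolvents matches), which requires knowing a priori that $\l^{\mathsf{Dir}}_{k+1,\eff}(\eps)$ stays above the window and that no eigenvalue of $\LLDir$ sneaks in below $\l^{\mathsf{Dir}}_k(\eps)$ unaccounted for — but this is exactly what the two-sided min-max comparison of the resolvents delivers. A secondary point to handle cleanly is the case of degenerate $\l_k^\Sigma$, where the statement should be read with multiplicity; enlarging $k$ so that a genuine gap opens just above resolves it. Everything else (the identification of $\LLDireff$ with $-\Delta_\Sigma + V$ on $\Pi\Hc$, the boundedness of the relevant quantities) has already been set up in the preceding paragraphs.
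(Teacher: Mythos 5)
Your argument is correct and follows essentially the same route as the paper: choose $z_\eps$ a fixed distance above $\l^{\mathsf{Dir}}_{k,\eff}(\eps)$ inside the window $\Zc_{c_0,C_0,\eps}$, apply Proposition~\ref{cor.waveguides}, compare the resolvents via the min-max principle exactly as in Proposition~\ref{prop-eig-BH}, and invert using that the denominators are bounded and bounded away from zero, then plug in the expansion of $\l^{\mathsf{Dir}}_{k,\eff}(\eps)$. (Only a cosmetic slip: the negative eigenvalues $(\l^{\mathsf{Dir}}_{j,\eff}(\eps)-z_\eps)^{-1}$ lie in $[-1/c,0)$, i.e.\ they are $\geq -1/c$, not $\leq -1/c$; what your argument actually uses — negativity, uniform separation from the positive part of the spectrum, and uniform bounds on the denominators — is stated correctly.)
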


\begin{proof}
Let $k\geq 1$. There exist $c_{0},\tilde c_{0}, C_{0},\eps_{0}>0$ such that for $\e \in (0,\e_0)$ we have
\[
\lambda^{\mathsf{Dir}}_{k,\eff}(\e)+\tilde c_{0} \in \Zc_{c_{0}, C_{0}, \eps}.
\]
As in the proof of Proposition \ref{prop-eig-BH} we obtain from Proposition \ref{cor.waveguides} and the min-max principle
\[\left|\left(\lambda^{\mathsf{Dir}}_{k}(\e)-(\lambda^{\mathsf{Dir}}_{k,\eff}(\e)+\tilde c_{0})\right)^{-1}-\left(\lambda^{\mathsf{Dir}}_{k,\eff}(\e)-(\lambda^{\mathsf{Dir}}_{k,\eff}(\e)+\tilde c_{0})\right)^{-1}\right|=\mathcal{O}(\eps).\]
We deduce
\[\left|\lambda^{\mathsf{Dir}}_{k}(\e)-\lambda^{\mathsf{Dir}}_{k,\eff}(\e)\right|=\mathcal{O}(\eps)\left|\left(\lambda^{\mathsf{Dir}}_{k}(\e)-(\lambda^{\mathsf{Dir}}_{k,\eff}(\e)+\tilde c_{0})\right)\right|\,,\]
and the conclusion follows.
\end{proof}

\subsection{Dirichlet-Robin shell with large coupling constant}
In this section, we keep considering the hypersurface $\Sigma$ of the last paragraph (here $\eps=1$). Let us now consider the Dirichlet-Robin Laplacian in an annulus. In other words, with $w_1$ and $G_1$ as defined by \eqref{det-w} and \eqref{def-G}, we consider on the weighted space $L^2(w_{1}\dx s\dx t)$ the quadratic form
\[\mathcal{Q}^{\mathsf{DR}}_{\alpha}(\psi)=\int_{\Sigma\times (0,1)}\left(\langle G_{1}(s,t)\nabla_{s}\psi,\nabla_{s}\psi\rangle_{T\Sigma}+ |\partial_{t}\psi|^2\right) w_{1}(s,t)\dx s\dx t-\alpha\int_{\Sigma}|\psi(s,0)|^2\dx s.\]
It is defined for $\psi\in \mathsf{Dom}(\mathcal{Q}^{\mathsf{DR}}_{\alpha})$ where
\[ \mathsf{Dom}(\mathcal{Q}^{\mathsf{DR}}_{\alpha})=\{\psi\in H^1(\Sigma\times(0,1)) :  \psi(s,1)=0\,,\quad \partial_{t}\psi(s,0)=-\alpha\psi(s,0)\}\,.\]

In these definitions $\alpha$ is real, and we are interested in the strong coupling limit $\alpha\to+\infty$.

This quadratic form is of the form \eqref{def-Q} with $S=G_{1}^{\frac{1}{2}}\nabla_{s}$ and $T_s=-w_{1}^{-1}\partial_{t}w_{1}\partial_{t}$ acting on $H^2((0,1))$ and Dirichlet-Robin condition. The spectrum of $T_s$ is well-understood in the limit $\alpha\to+\infty$. Actually, the family $(T_s)$ depends analytically on the principal curvatures $(\kappa_{j}(s))_{1\leq j\leq d-1}$. We can deduce from the previous works \cite{HelKac15, HelKacRay15, KKR16} that, as $\alpha \to + \infty$,
\[\mu_{1}(s,\alpha)=-\alpha^2-\alpha\kappa(s)+\mathcal{O}(1)\,,\quad\mu_{2}(s,\alpha)\geq c>0\,,\]
and
\[\mu(\alpha)=\inf_{s\in\Sigma}\mu_{1}(s,\alpha)=-\alpha^2-\alpha\kappa_{\max}+\mathcal{O}(1)\,,\]
with $\kappa=\sum_{j=1}^{d-1}\kappa_{j}$. 
Here, for simplicity, we assume that~$\kappa$ has a unique maximum at $s=0$
that is not degenerate and not attained at infinity. Moreover, we assume that the eigenvalues of $D^2_{s}+\frac{1}{2}\mathsf{Hess}_{0} (-\kappa)(s,s)$ are simple.
We let
\[\Zc_{C_{0}, c_{0},\alpha}=\{z\in\R : |z-\mu(\alpha)|\leq c_{0}\alpha\,,\quad \mathrm{dist}(z,\Sp(\hatLLDR)) \geq  C_{0}\}\,.\]

\begin{proposition}\label{cor.s-a-Robin}
There exist $C,\alpha_{0}>0$ such that, for all $z\in \Zc_{C_{0}, c_{0},\alpha}$
\[\left\|\left(\mathscr{L}_{\alpha}^{\mathsf{DR}}-z\right)^{-1}-\left(\LLDReff-z\right)^{-1}\right\|\leq C\alpha^{-1}\,.\]
\end{proposition}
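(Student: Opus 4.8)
The plan is to follow exactly the template established in the proof of Proposition~\ref{cor.waveguides}, transporting the strong-coupling limit $\alpha\to+\infty$ into the abstract framework of Theorem~\ref{theo.main} and Proposition~\ref{prop.compl-theo.main}. First I would pass to the shifted form $\mathcal{Q}^{\mathsf{DR}}_{\alpha} - \mu(\alpha)$, which by the stated asymptotics $\mu_1(s,\alpha) = -\alpha^2 - \alpha\kappa(s) + \mathcal{O}(1)$ and $\mu_2(s,\alpha) \geq c > 0$ is non-negative and has a uniform spectral gap $\gamma_\alpha = \mu_2(s,\alpha) - \mu(\alpha) \sim \alpha^2$, so that $\gamma_\alpha \to +\infty$. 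I would then record the corresponding non-negative self-adjoint operator $\LLe$ (here with $\varepsilon$ replaced by $\alpha$, i.e. $\mathscr{L}^{\mathsf{DR}}_\alpha - \mu(\alpha)$) and its block-diagonal companion $\hatLL$ built as in Lemma~\ref{lem.hQ}. The key quantitative inputs are: the commutator bound $\|[S,\Pi]\|_{\mathcal{L}(\mathcal{H})} = \mathcal{O}(1)$ (this follows, as indicated for the previous examples, from analytic perturbation theory in the parameters $(\kappa_j(s))$, since $S = G_1^{1/2}\nabla_s$ is $\alpha$-independent and only the transverse groundstate $u_{s,\alpha}$, and hence $\Pi_s$, depends on $\alpha$ — one must check $\sup_s \|\partial_s u_{s,\alpha}\|$ stays bounded), whence $a = a(\alpha) = \|[S,\Pi]\|/\sqrt{\gamma_\alpha} = \mathcal{O}(\alpha^{-1})$; and the fact that for $z \in \Zc_{C_0,c_0,\alpha}$, writing $\zeta = z - \mu(\alpha)$, we have $|\zeta| \leq c_0\alpha$.

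Next I would feed these into the definitions~\eqref{def-eta}. With $a = \mathcal{O}(\alpha^{-1})$, $\gamma = \gamma_\alpha \sim \alpha^2$ and $|\zeta| = \mathcal{O}(\alpha)$, each term is easily estimated: $\eta_1(\zeta) = \mathcal{O}(a^2\gamma) + \mathcal{O}(a|\zeta|) + \mathcal{O}(a|\zeta|^2/\gamma) = \mathcal{O}(1) + \mathcal{O}(1) + \mathcal{O}(1)$; here one must be slightly careful, because $a^2\gamma = \mathcal{O}(\alpha^{-2})\cdot\mathcal{O}(\alpha^2) = \mathcal{O}(1)$ and $a|\zeta| = \mathcal{O}(\alpha^{-1})\cdot\mathcal{O}(\alpha) = \mathcal{O}(1)$ only give boundedness, not smallness, of $\eta_1$. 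However, $\eta_2(\zeta) = \mathcal{O}(a) + \mathcal{O}(a|\zeta|/\gamma) = \mathcal{O}(\alpha^{-1})$, and similarly $\eta_3(\zeta) = \mathcal{O}(\alpha^{-1})$, $\eta_4(\zeta) = \mathcal{O}(a/\gamma \cdot 1) = \mathcal{O}(\alpha^{-3})$. To control $\eta_1$ I would exploit the resolvent bound: by the spectral theorem, since $\zeta$ is at distance $\geq C_0$ from $\Sp(\hatLLDR - \mu(\alpha)) = (\Sp(\LLDReff) \cup \Sp(\LLbot)) - \mu(\alpha)$ and $\Sp(\LLbot - \mu(\alpha)) \subset [\gamma_\alpha,+\infty)$ with $\gamma_\alpha \to +\infty$ while $|\zeta| \leq c_0\alpha = o(\gamma_\alpha)$, we get $\|(\hatLL - \zeta)^{-1}\| = \mathcal{O}(1)$. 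The subtle point is then that $\eta_1(\zeta)\|(\hatLL-\zeta)^{-1}\| + \eta_2(\zeta)$ must be bounded away from $1$; since $\eta_1$ is only $\mathcal{O}(1)$, not $o(1)$, this is where one needs the constants to cooperate, and I would invoke that the abstract Theorem~\ref{theo.main} is applied on the region where $\mathsf{dist}(z,\Sp(\hatLLDR))\geq C_0$ with $C_0$ chosen large enough — a close reading of~\eqref{def-eta} shows the $a^2\gamma$ and $a|\zeta|$ contributions to $\eta_1$ come with a fixed numerical prefactor, so taking $C_0$ large (equivalently enlarging the resolvent set region) keeps the product controlled, or alternatively one re-examines the $\mathcal{O}(1)$ constants and shrinks the allowed range.

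Having verified the hypothesis $1 - \eta_1(\zeta)\|(\hatLL-\zeta)^{-1}\| - \eta_2(\zeta) > 0$ for $\alpha$ large, Theorem~\ref{theo.main} yields $\zeta \in \rho(\LLe)$ with $\|(\LLe - \zeta)^{-1}\| = \mathcal{O}(1)$ and
\[
\|(\LLe - \zeta)^{-1} - (\hatLL - \zeta)^{-1}\| \leq \eta_1\|(\LLe-\zeta)^{-1}\|\|(\hatLL-\zeta)^{-1}\| + \eta_2\|(\LLe-\zeta)^{-1}\| + \eta_3\|(\hatLL-\zeta)^{-1}\| + \eta_4 = \mathcal{O}(\alpha^{-1}),
\]
using that in the first term one of the resolvents can be traded against the bound~\eqref{A^-1.est} and, crucially, that $\eta_1$ appears multiplied by two resolvent norms but the overall combination still collapses to $\mathcal{O}(\alpha^{-1})$ once one tracks that $\eta_1 = \mathcal{O}(a) \cdot \gamma^{1/2}$-type terms are actually $\mathcal{O}(\alpha^{-1})$ after accounting for the gain hidden in $a/\sqrt\gamma = \gamma$-scaling — more honestly, one observes $\eta_1(\zeta) \leq C a$ does \emph{not} hold, so the $\mathcal{O}(\alpha^{-1})$ final rate must come from the $\eta_2,\eta_3,\eta_4$ terms together with the fact that after subtracting $\mu(\alpha)$ the difference $(\LLe-\zeta)^{-1} - (\hatLL-\zeta)^{-1}$ carries an extra factor; I would reconcile this exactly as in Proposition~\ref{cor.waveguides}, where the same structure gives $\mathcal{O}(\varepsilon)$. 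Finally, Proposition~\ref{prop.compl-theo.main} (equivalently Lemma~\ref{lem-res-bot}), applied with $\gamma = \gamma_\alpha$ and $\mathsf{dist}(\zeta,[\gamma_\alpha,+\infty)) \to +\infty$, gives $\|(\hatLL-\zeta)^{-1} - (\LLDReff - \mu(\alpha) - \zeta)^{-1}\Pi\| = \mathcal{O}(\gamma_\alpha^{-1}) = \mathcal{O}(\alpha^{-2})$; shifting back by $\mu(\alpha)$ and combining the two estimates via the triangle inequality yields $\|(\mathscr{L}^{\mathsf{DR}}_\alpha - z)^{-1} - (\LLDReff - z)^{-1}\| = \mathcal{O}(\alpha^{-1})$ for all $z \in \Zc_{C_0,c_0,\alpha}$, which is the claim. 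The main obstacle, as flagged, is the bookkeeping of the $\eta_j$: unlike the Dirichlet example where $a = \mathcal{O}(\varepsilon^2)$ and $\gamma \sim \varepsilon^{-2}$ make every $\eta_j$ small, here $a^2\gamma = \mathcal{O}(1)$ and $a|\zeta| = \mathcal{O}(1)$, so $\eta_1$ is merely bounded and one genuinely needs the largeness of $C_0$ (the distance to $\Sp(\hatLLDR)$) — this is precisely why the threshold in $\Zc_{C_0,c_0,\alpha}$ is a \emph{fixed} constant $C_0$ rather than $c_0\alpha$, and this asymmetry with the shrinking-layer case should be explained in the write-up.
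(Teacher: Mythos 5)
Your overall strategy (shift by $\mu(\alpha)$, apply Theorem~\ref{theo.main} to $\mathscr{L}^{\mathsf{DR}}_{\alpha}-\mu(\alpha)$ with $\zeta=z-\mu(\alpha)$, then use Proposition~\ref{prop.compl-theo.main}) is exactly the paper's, but there is a genuine quantitative gap at the single most important input: the commutator. You take $\|[S,\Pi]\|_{\mathcal{L}(\mathcal{H})}=\mathcal{O}(1)$, hence $a=\mathcal{O}(\alpha^{-1})$, and as you yourself observe this makes $a^{2}\gamma=\mathcal{O}(1)$ and $a|\zeta|=\mathcal{O}(1)$, so $\eta_{1}$ is merely bounded. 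The paper instead uses the sharper bound $\nu=\|[S,\Pi]\|=\mathcal{O}(\alpha^{-1})$, hence $a=\nu/\sqrt{\gamma}=\mathcal{O}(\alpha^{-2})$. This is not a luxury: it comes from perturbation theory for the transverse Dirichlet--Robin operator in the strong-coupling regime, where the groundstate $u_{s,\alpha}$ depends on $s$ only through the curvatures and this dependence is of relative size $\mathcal{O}(\alpha^{-1})$ (see the commutator discussion in \cite[Section 3]{KKR16} invoked in the proof of Proposition~\ref{cor.s-a-Robin-ev}). With $a=\mathcal{O}(\alpha^{-2})$, $\gamma\sim\alpha^{2}$ and $|\zeta|\leq c_{0}\alpha$ one gets $\eta_{1}=\mathcal{O}(\alpha^{-1})$, $\eta_{2},\eta_{3}=\mathcal{O}(\alpha^{-2})$, $\eta_{4}=\mathcal{O}(\alpha^{-3})$, the hypothesis $1-\eta_{1}\|(\hatLL-\zeta)^{-1}\|-\eta_{2}>0$ holds automatically for $\alpha$ large with the \emph{given} fixed $C_{0}$, and the resolvent-difference bound of Theorem~\ref{theo.main} is $\mathcal{O}(\alpha^{-1})$, dominated by the $\eta_{1}$ term; combined with the $\mathcal{O}(\alpha^{-2})$ contribution from Proposition~\ref{prop.compl-theo.main} this is the claim.

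Your attempted rescues do not close the gap. Enlarging $C_{0}$ only shrinks $\|(\hatLL-\zeta)^{-1}\|$ to $1/C_{0}$; with $\eta_{1}=\mathcal{O}(1)$ this can salvage invertibility of $\mathscr{L}^{\mathsf{DR}}_{\alpha}-z$, but the resulting estimate on $\|(\mathscr{L}^{\mathsf{DR}}_{\alpha}-z)^{-1}-(\hatLL-z)^{-1}\|$ is then only $\mathcal{O}(1)$ (the term $\eta_{1}\|(\mathscr{L}-\zeta)^{-1}\|\|(\hatLL-\zeta)^{-1}\|$ does not decay in $\alpha$), so the stated rate $C\alpha^{-1}$ is not obtained, and the proposition would also change its meaning since $\Zc_{C_{0},c_{0},\alpha}$ is defined with a fixed $C_{0}$. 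The passage where you suggest the $\alpha^{-1}$ rate ``must come from the $\eta_{2},\eta_{3},\eta_{4}$ terms together with \dots an extra factor'' and from a ``gain hidden in $a/\sqrt{\gamma}$'' is not an argument; the only way the scaling works out is the improved commutator estimate. In a corrected write-up you should therefore prove (or cite) $\|[S,\Pi]\|=\mathcal{O}(\alpha^{-1})$ before computing the $\eta_{j}$; everything else in your outline (the $\mathcal{O}(1)$ bound on $\|(\hatLL-\zeta)^{-1}\|$ from the spectral theorem, the final triangle inequality, and the $\mathcal{O}(\gamma_{\alpha}^{-1})$ estimate from Lemma~\ref{lem-res-bot}) is sound and matches the paper.
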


\begin{proof}
Here we have $\gamma= \mathcal{O}(\alpha^2)$, $\nu=\mathcal{O}(\alpha^{-1})$ and $a=\mathcal{O}(\alpha^{-2})$. We use again Remark \ref{rem.0} and we apply Theorem \ref{theo.main} with $\LLp=\mathscr{L}^\mathsf{DR}_{\alpha}-\mu(\alpha)$ and $z$ replaced by $z-\mu(\alpha)$. For $z\in \Zc_{c_{0}, C_{0},\alpha}$, we get 
\[\eta_{1}=\mathcal{O}(\alpha^{-1})\,,\quad \eta_{2}=\mathcal{O}(\alpha^{-2})\,,\quad \eta_{3}=\mathcal{O}(\alpha^{-2})\,,\quad \eta_{4}=\mathcal{O}(\alpha^{-3})\,.\]
Moreover, for $\alpha$ large enough, we have, for all $z\in Z_{c_{0}, C_{0},\alpha}$, $z\in\rho\left(\LLDReff\right)$ and
\[\|(\LLDReff-z)^{-1}\|\leq C\,.\]
Then Theorem \ref{theo.main} implies the wished estimate.
\end{proof}

We recover, under our simplifying assumptions, a result appearing in \cite{HelKac15, PP-eh, KKR16}.

\begin{proposition}\label{cor.s-a-Robin-ev}
For all $j\geq 1$, we have, as $\alpha \to + \infty$,
\[\lambda^\mathsf{DR}_{j,\eff}(\alpha)=-\alpha^2+\nu_{j}(\alpha)+\mathcal{O}(1)\,,\]
and
\[\lambda^\mathsf{DR}_{j}(\alpha)=-\alpha^2+\nu_{j}(\alpha)+\mathcal{O}(1)\,,\]
where $\nu_{j}(\alpha)$ is the $j$-th eigenvalue of $D^2_{s}-\alpha\kappa(s)$.
\end{proposition}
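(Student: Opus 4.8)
The claim is the asymptotics of both the effective eigenvalues $\lambda^{\mathsf{DR}}_{j,\eff}(\alpha)$ and the true eigenvalues $\lambda^{\mathsf{DR}}_{j}(\alpha)$ of the Dirichlet--Robin shell operator as $\alpha\to+\infty$. The plan is to treat the two statements in turn, exactly as in the proofs of Propositions~\ref{prop-eig-BH} and~\ref{cor.waveguides-ev}: first identify the effective operator and apply known one-dimensional semiclassical-type asymptotics, then transfer the information to $\mathscr{L}^{\mathsf{DR}}_{\alpha}$ via the norm-resolvent estimate of Proposition~\ref{cor.s-a-Robin} together with the min-max principle.

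For the effective operator, recall from Remark~\ref{rem.0} that $\LLDReff$ acts on $L^2(\Sigma)$ and is obtained by projecting onto the groundstate $u_{s,\alpha}$ of the transverse Dirichlet--Robin operator $T_s$. Using the expansions $\mu_1(s,\alpha)=-\alpha^2-\alpha\kappa(s)+\Oc(1)$ and the fact (from analytic perturbation theory in the parameters $(\kappa_j(s))$, as invoked just before Proposition~\ref{cor.waveguides-ev}) that the commutator term and the other correction terms contribute only $\Oc(1)$, one finds $\LLDReff = -\Delta_s - \alpha^2 - \alpha\kappa(s) + \Oc(1)$ in the appropriate sense; hence $\lambda^{\mathsf{DR}}_{j,\eff}(\alpha) = -\alpha^2 + \nu_j(\alpha) + \Oc(1)$, where $\nu_j(\alpha)$ is the $j$-th eigenvalue of $D_s^2 - \alpha\kappa(s)$. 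Here one should be a little careful: $\nu_j(\alpha)$ itself grows like $-\alpha^2$ minus a harmonic-oscillator term of order $\alpha$ (since $\kappa$ has a nondegenerate maximum at $0$ with the simplicity assumption on the Hessian eigenvalues), so the $\Oc(1)$ remainder is genuinely lower order; this is where one uses the harmonic approximation, cited via \cite{FouHel10} or \cite[Section 4.3.1]{Ray17}, to guarantee that $\nu_j(\alpha)$ is well separated from $\nu_{j+1}(\alpha)$ at scale $\alpha$.

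For the true eigenvalues, I would mimic the proof of Proposition~\ref{cor.waveguides-ev}. Fix $j\geq 1$. Since the low-lying $\nu_k(\alpha)$ are separated by a gap of order $\alpha$, there exist constants $c_0,\tilde c_0, C_0,\alpha_0>0$ such that $z_\alpha := \lambda^{\mathsf{DR}}_{j,\eff}(\alpha) + \tilde c_0$ lies in $\Zc_{C_0,c_0,\alpha}$ for $\alpha\in(0^{-1},\alpha_0)$ — i.e. $|z_\alpha - \mu(\alpha)|\leq c_0\alpha$ and $z_\alpha$ is at distance $\geq C_0$ from $\Sp(\hatLLDR)$. Proposition~\ref{cor.s-a-Robin} then gives $z_\alpha\in\rho(\mathscr{L}^{\mathsf{DR}}_\alpha)$ and $\|(\mathscr{L}^{\mathsf{DR}}_\alpha-z_\alpha)^{-1} - (\LLDReff-z_\alpha)^{-1}\|\leq C\alpha^{-1}$. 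Applying the min-max principle to these two bounded self-adjoint resolvents, the $j$ largest eigenvalues of $(\mathscr{L}^{\mathsf{DR}}_\alpha-z_\alpha)^{-1}$ are within $C\alpha^{-1}$ of the corresponding ones of $(\LLDReff-z_\alpha)^{-1}$; taking the top one and inverting yields
\[
\left|\frac{1}{\lambda^{\mathsf{DR}}_j(\alpha)-z_\alpha} - \frac{1}{\lambda^{\mathsf{DR}}_{j,\eff}(\alpha)-z_\alpha}\right|\leq C\alpha^{-1},
\]
hence $|\lambda^{\mathsf{DR}}_j(\alpha)-\lambda^{\mathsf{DR}}_{j,\eff}(\alpha)| = \Oc(\alpha^{-1})\,|\lambda^{\mathsf{DR}}_j(\alpha)-z_\alpha|$, and since $\lambda^{\mathsf{DR}}_j(\alpha)-z_\alpha = \Oc(\alpha)$ the right-hand side is $\Oc(1)$. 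Combining with the effective asymptotics gives $\lambda^{\mathsf{DR}}_j(\alpha) = -\alpha^2 + \nu_j(\alpha) + \Oc(1)$.

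The main obstacle is bookkeeping of scales rather than anything deep: one must ensure that after subtracting $\mu(\alpha)$ the relevant eigenvalues of both operators sit in a window of width $\Oc(\alpha)$ above $\mu(\alpha)$ while staying at distance $\geq C_0$ from the rest of $\Sp(\hatLLDR)$ (in particular from $\Sp(\mathscr{L}^{\mathsf{DR},\perp}_\alpha)$, which is pushed up by the gap $\mu_2(s,\alpha)\geq c>0$ after renormalization), so that Proposition~\ref{cor.s-a-Robin} applies; and one must invoke the harmonic approximation to control $\nu_j(\alpha)$ and its spectral gaps, which is where the nondegeneracy of the maximum of $\kappa$ and the simplicity assumption on $\mathsf{Hess}_0(-\kappa)$ are used. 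The algebraic manipulation of resolvent eigenvalues is routine, identical to Propositions~\ref{prop-eig-BH} and~\ref{cor.waveguides-ev}.
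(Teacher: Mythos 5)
Your proposal is correct and follows essentially the same route as the paper: effective eigenvalue asymptotics via perturbation theory and a commutator estimate, the harmonic approximation for $\nu_j(\alpha)$, and then Proposition~\ref{cor.s-a-Robin} combined with the min-max comparison of the two resolvents, exactly as in Propositions~\ref{prop-eig-BH} and~\ref{cor.waveguides-ev}. One small correction: the gap between consecutive $\nu_j(\alpha)$ is of order $\alpha^{1/2}$, not $\alpha$ (since $\nu_j(\alpha)=-\alpha\kappa_{\max}+\alpha^{1/2}\tilde{\nu}_j+\mathcal{O}(\alpha^{1/4})$), but this does not harm your argument—a fixed shift $\tilde c_0$ still keeps $z_\alpha$ at distance at least $C_0$ from $\Sp(\hatLLDR)$—and the paper in fact extracts the sharper estimate $|\lambda^{\mathsf{DR}}_j(\alpha)-\lambda^{\mathsf{DR}}_{j,\eff}(\alpha)|=\mathcal{O}(\alpha^{-1})$, of which your $\mathcal{O}(1)$ bound is a weaker but sufficient version.
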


\begin{proof}
Let us first discuss the asymptotic behavior of the eigenvalues of the effective operator. Let us recall that it is defined as explained in Section \ref{sec.general-reduction}, and that it can be identified with the operator associated with the form $H^1(\Sigma)\ni\varphi\mapsto \mathcal{Q}^\mathsf{DR}_{\alpha}(\varphi u_{s,\alpha})$ where $u_{s,\alpha}$ is the positive $L^2$-normalized groundstate of the transverse operator $T(s)$. The asymptotic expansion of the effective eigenvalues again follows from perturbation theory and a commutator estimate (see \cite[Section 3]{KKR16} where it is explained how we can estimate such a commutator).
 
Then, we proceed as in the previous section. Note that, by the harmonic approximation, for all $j\geq 1$,
\[\nu_{j}(\alpha)=-\alpha\kappa_{\max}+\alpha^{\frac{1}{2}}\tilde{\nu}_{j}+\mathcal{O}(\alpha^{\frac{1}{4}})\,,\]
where $(\tilde\nu_{j})_{j\in\N^*}$ is the non-decreasing sequence of the eigenvalues of $D^2_{s}+\frac{1}{2}\mathsf{Hess}_{0} (-\kappa)(s,s)$. In particular, the asymptotic gap between consecutive eigenvalues is of order $\alpha^{\frac{1}{2}}$. Then, there exist $c_{0}>0$, $C_{0}>0$ and $C>0$ such that, for $\alpha$ large, $z=\lambda^{\mathsf{DR}}_{j,\eff}(\alpha)+C\in \Zc_{c_{0}, C_{0},\alpha}$. We use Proposition \ref{cor.s-a-Robin} and we get, as in the other examples,
\[|\lambda^{\mathsf{DR}}_{j,\eff}(\alpha)-\lambda^{\mathsf{DR}}_{j}(\alpha)|\leq C\alpha^{-1}\,.\]
\end{proof}

\section{The non-self-adjoint Robin Laplacian between hypersurfaces} \label{sec-Robin}

In this section we prove Theorem \ref{th.n-s-a-Robin}. The proof is split in two main steps. We first transform the problem into an equivalent statement, where $\PP_\e$ is replaced by a unitarily equivalent operator on $\O$.

\subsection{A change of variables}

The operator $\PP_\eps$ is associated to the (coercive) quadratic form defined for $\phi \in H^1(\Omega_{\eps})$ by
\begin{equation}\label{Robin.form}
Q_\eps^1 (\phi) =Q_{\eps,\alpha}^1(\phi)= \int_{\Omega_{\eps}} |\nabla \phi|^2 + \int_{\Sigma_{+,\eps}} \alpha_{+,\eps} |\phi|^2 - \int_{\Sigma_{-,\eps}} \alpha_{-,\eps} |\phi|^2\,.
\end{equation}
As in Section \ref{sec.shrink.Dir}, we use the diffeomorphism $\Theta_\eps$ to see $\PP_\eps$ as an operator on $L^2(\Omega)$: for $\psi \in H^1(\Omega)$ we set 
\[
Q_\eps^2(\psi) = Q_\eps^1(\psi \circ \Theta_\eps^{-1})\,.
\]
We obtain
\begin{align*}
Q_\eps^2(\psi)
& = \int_{\Omega_{\eps}} |(\mathrm{Id}_{T_s\Sigma} + \eps t d_s n)^{-1} \nabla_s \psi (\Theta_\eps^{-1} (x))|^2 \, dx + \frac 1 {\eps^2} \int_{\Omega_{\eps}} |\partial_t \psi (\Theta_\eps^{-1} (x))|^2 \dx x \\
& \quad + \int_{\Sigma_+} \alpha_{+,\eps}  |\psi \circ (\Theta_\eps^+)^{-1}|^2 - \int_{\Sigma_-} \alpha_{-,\eps}  |\psi \circ (\Theta_\eps^+)^{-1}|^2\\
& = \int_{\Omega} \langle G_\eps(s,t) \nabla_s \psi, \nabla_s \psi\rangle_{T\Sigma} \eps\tilde w_{\eps} \dx\sigma \dx t  + \frac 1 \eps \int_{\Omega} |\partial_t \psi|^2\tilde w_{\eps} \dx\sigma \dx t\\
& \quad + \int_{\Sigma} \alpha \, (|\psi|^2\tilde w_{\eps})|_{t=1} \dx\sigma  -  \int_{\Sigma} \alpha  \, ( |\psi|^2\tilde w_{\eps})|_{t=-1} \dx\sigma,
\end{align*}
where, as in \eqref{det-w}, $\tilde w(s,t) = \prod_{j=1}^{d-1} (1- \eps t \kappa_j(s))$. Notice that $L^2(\Omega,\dx\sigma\dx t)$ and $L^2(\Omega, \eps\tilde w_{\eps} \dx\sigma\dx t)$ (or their corresponding Sobolev spaces) are equal as sets, but $\Theta_\eps$ induces only a unitary transformation from $L^2(\Omega, \eps\tilde w_{\eps} \dx\sigma\dx t)$ to $L^2(\Omega_{\eps},\dx x)$.

\subsection{A change of function}

In the next step we make a change of function to turn our problem with Robin boundary conditions into an equivalent problem with Neumann boundary conditions. For this we consider the unitary transform 
\[
\tilde U_{\eps} : \fonc {L^2(\Omega,\eps \tilde w_{\eps} \dx\sigma \dx t)}{L^2(\Omega, e^{-2 \eps t \Re(\alpha)}  \tilde{w}_{\eps} \dx\sigma \dx t),} {u} {\sqrt \eps e^{\alpha \eps t} u.}
\]
We set 
\[
w_{\eps} = e^{-2 \eps t \Re(\alpha)}  \tilde{w}_{\eps}\,.
\]
Then on $H^1(\Omega, w_{\eps} \dx\sigma \dx t)$ we consider the transformed quadratic form given by 
\begin{align*}
Q_\eps (\phi)
& = Q_\eps^2 (\tilde U^{-1} \phi)\\
& = \int_{\Omega} \langle G_\eps (\nabla_s - \eps t \nabla_s \alpha)  \phi, (\nabla_s - \eps t \nabla_s \alpha) \phi\rangle w_{\eps} \dx\sigma \dx t  + \frac 1 {\eps^2} \int_{\Omega} |\partial_t \phi|^2 w_{\eps} \dx\sigma \dx t\\
& \quad - \frac {1}{\eps} \int_{\Omega} \big(\alpha \phi \partial_t \bar \phi + \bar \alpha \bar \phi \partial_t \phi \big) w_{\eps} \dx \sigma \dx t +  \int_{\Omega} |\alpha|^2 |\phi|^2 w_{\eps} \dx\sigma \dx t \\ 
& \quad + \frac 1 \eps \int_{\Sigma} \alpha |\phi| w_{\eps} \dx\sigma - \frac 1 \eps \int_{\Sigma} \alpha |\phi|^2 w_{\eps} \dx\sigma\,.
\end{align*}

By integration by parts we have 
\begin{multline*}
- \frac {1}{\eps} \int_{\Omega} \alpha \phi \partial_t \bar \phi w_{\eps} \dx\sigma \dx t\\
= - \frac 1 \eps \int_{\Sigma} \alpha | \phi| w_{\eps} \dx\sigma + \frac 1 \eps \int_{\Sigma} \alpha |\phi|^2 w_{\eps} \dx\sigma + \int_{\Omega} \left(- 2 \alpha \Re(\alpha) + \frac {\alpha \partial_t \tilde w_{\eps}}{\eps \tilde w_{\eps}} \right) | \phi|^2 w_{\eps} \dx\sigma \dx t\,.
\end{multline*}
Finally,
\begin{align*}
\Qe (\phi)
& = \int_{\Omega} \langle G_\eps (\nabla_s - \eps t \nabla_s \alpha)  \phi, (\nabla_s - \eps t \nabla_s \alpha) \phi\rangle w_{\eps} \dx\sigma \dx t + \frac 1 {\eps^2} \int_{\Omega} |\partial_t \phi|^2 w_{\eps} \dx\sigma \dx t\\
& + \frac {2i}{\eps} \int_{\Omega} \Im(\alpha) \partial_t \phi \bar \phi w_{\eps} \dx\sigma \dx t + \int_{\Omega} V_\eps |\phi|^2 w_{\eps} \dx\sigma \dx t\,,
\end{align*}
where 
\[
V_\eps = |\alpha|^2 - 2 \alpha \Re(\alpha) + \alpha \frac {\partial_t \tilde w_{\eps}}{\eps \tilde w_{\eps}}\,.
\]
On $H^1(\Omega, w_{\eps} \dx\sigma \dx t)$ we can also consider the forms defined by 
\[
\hatQe (\phi) = \int_{\Omega} |\nabla_s \phi|^2 \dx\sigma \dx t  + \frac 1 {\eps^2} \int_{\Omega} |\partial_t \phi|^2 \dx\sigma \dx t  + \int_{\Omega} V_\eff |\phi|^2 \dx \sigma \dx t
\]
and 
\[
\Qeff (\phi) = \int_{\Omega} |\nabla_s \phi|^2 \dx\sigma \dx t + \int_{\Omega} V_\eff |\phi|^2 \dx \sigma \dx t.
\]
We denote by $\LLe$, $\hatLLe$ and $\LLeff$ the operators corresponding to the forms $\Qe$, $\hatQe$ and $\Qeff$, respectively.

\subsection{About the new operator \texorpdfstring{$\LLe$}{L}}
If $U_{\eps}$ denotes the composition of the unitary transform associated with $\Theta_{\eps}$ and $\tilde U_{\eps}$, we write $\LLe =U_{\eps} \PP_\eps U_{\eps}^{-1}$ and the estimate of Theorem \ref{th.n-s-a-Robin} can be rewritten as
\begin{equation} \label{estim-th-He}
\|(\LLe-z)^{-1} - (\LLeff - z)^{-1} \Pi \|_{\mathcal{L}(L^2(\Omega))} \lesssim \eps\,.
\end{equation}
As $\PP_{\eps}$, the operator $\LLe$ is $m$-accretive. We have the following accretivity estimate when $\eps$ goes to $0$.
\begin{lemma} \label{lem-minor-Qe}
If $\eps_0 > 0$ is small enough there exist $M_{0}\geq 0$ and $c_0 > 0$ such that for $\eps \in (0,\eps_0)$, $M \geq M_0$ and $\phi \in H^1(\Omega)$ we have 
\[
\Re \big(\Qe(\phi)\big) + M \|\phi\|_{L^2(\Omega)}^2 \geq c_0 \left( \|\nabla_s \phi\|_{L^2(\Omega)}^2 + \frac 1 {\eps^2} \|\partial_t \phi\|_{L^2(\Omega)}^2 + \|\phi\|_{L^2(\Omega)}^2 \right)\,.
\]
\end{lemma}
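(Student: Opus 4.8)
The plan is to bound from below the real part of each of the four terms making up $\Qe(\phi)$, using that all geometric coefficients are uniformly controlled as $\eps\to0$. First I would record the elementary facts: since $\Sigma$ admits a tubular neighborhood the principal curvatures $\kappa_j$ are bounded on $\Sigma$, so there is $C>1$ such that, for $\eps$ small, the weights $\tilde w_\eps$ and $w_\eps=e^{-2\eps t\Re(\alpha)}\tilde w_\eps$ take values in $[C^{-1},C]$, the symmetric matrix $G_\eps=(\mathrm{Id}_{T_s\Sigma}+\eps t\,d_sn)^{-2}$ satisfies $G_\eps\geq\frac12\mathrm{Id}$, and the potential $V_\eps=|\alpha|^2-2\alpha\Re(\alpha)+\alpha\,\partial_t\tilde w_\eps/(\eps\tilde w_\eps)$ is uniformly bounded (the only delicate piece, $\partial_t\tilde w_\eps/(\eps\tilde w_\eps)=-\sum_{j}\kappa_j/(1-\eps t\kappa_j)$, stays bounded). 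I also use that $\alpha$ and $\nabla_s\alpha$ are bounded.

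Then I would handle the terms one by one. The longitudinal term $\int_\Omega\langle G_\eps(\nabla_s-\eps t\nabla_s\alpha)\phi,(\nabla_s-\eps t\nabla_s\alpha)\phi\rangle w_\eps$ is real and, by positivity of $G_\eps$ and $w_\eps$, at least $c\int_\Omega|(\nabla_s-\eps t\nabla_s\alpha)\phi|^2$; writing $\nabla_s\phi=(\nabla_s-\eps t\nabla_s\alpha)\phi+\eps t\,\nabla_s\alpha\,\phi$ and using $\|\nabla_s\alpha\|_\infty<\infty$ turns this into a lower bound $c\nr{\nabla_s\phi}^2-C\eps^2\nr{\phi}^2$ (with a smaller $c$). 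The transverse term $\eps^{-2}\int_\Omega|\partial_t\phi|^2w_\eps$ is real and at least $(C\eps^2)^{-1}\nr{\partial_t\phi}^2$. The potential term obeys $\Re\int_\Omega V_\eps|\phi|^2w_\eps\geq-C\nr{\phi}^2$.

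The only term that can be negative at a scale comparable with the transverse term is the cross term, and this is the delicate point. Taking real parts, $\Re\big(\frac{2i}{\eps}\int_\Omega\Im(\alpha)\,\partial_t\phi\,\bar\phi\,w_\eps\big)=-\frac{2}{\eps}\int_\Omega\Im(\alpha)\,w_\eps\,\Im(\partial_t\phi\,\bar\phi)$, whose modulus is at most $\frac{C}{\eps}\nr{\partial_t\phi}\,\nr{\phi}$ and hence, by Young's inequality, at most $\frac{\delta}{\eps^2}\nr{\partial_t\phi}^2+\frac{C^2}{4\delta}\nr{\phi}^2$ for every $\delta>0$; choosing $\delta\leq\frac{1}{2C}$ absorbs it into half of the transverse term and leaves only an extra multiple of $\nr{\phi}^2$. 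Summing the four estimates yields constants $c_0>0$, $M_0\geq0$ with $\Re(\Qe(\phi))\geq c_0(\nr{\nabla_s\phi}^2+\eps^{-2}\nr{\partial_t\phi}^2)-M_0\nr{\phi}^2$, and taking $M\geq M_0+1$ (and shrinking $c_0$ if needed) gives the claim. The main obstacle is precisely the $\eps^{-1}$ cross term; but since the transverse term carries the decisive $\eps^{-2}$ weight, the standard absorption closes the estimate, all the rest being bookkeeping with the uniformly bounded geometric data.
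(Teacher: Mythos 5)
Your proof is correct and takes essentially the same route as the paper: uniform control of the geometric data ($w_\eps$, $G_\eps$, $V_\eps$, $\alpha$, $\nabla_s\alpha$), a crude lower bound for the longitudinal and transverse terms, and absorption of the $\eps^{-1}$ cross term into half of the $\eps^{-2}$ transverse term via Young's inequality. The paper only differs cosmetically, keeping $(1\pm C\eps)$-type constants instead of your uniform $C^{-1},C$ bounds.
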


\begin{proof}
There exists $C_1 \geq 0$ such that for $\eps \in (0,\eps_0)$ and $\phi \in H^1(\O)$ we have
\begin{align*}
\Re \big(\Qe(\phi)\big)
\geq \ & (1-C_1 \eps) \|(\nabla_s - \eps t \nabla_s \alpha)\phi\|^2 + \frac {1-C_1\eps}{\eps^2} \|\partial_t \phi\|^2\\
& - \frac {2 \|\Im(\alpha)\|_\infty(1+C_1\eps)}\eps  \|\partial_t \phi\|\|\phi\| - C_1 \|\phi\|^2.
\end{align*}
For some $C_2 \geq 0$ we also have 
\[
\|(\nabla_s - \eps t \nabla \alpha) \phi\|^2 \geq (1-\eps) \|\nabla_s \phi\|^2 - C_2 \|\phi\|^2\,.
\]
and
\begin{eqnarray*}
\lefteqn{\frac {1-C_1\e}{\e^2} \nr{\partial_t \f}^2 - \frac {2 \nr{\Im(\a)}_\infty(1+C_1\e)}\e  \nr{\partial_t \f} \nr{\f}}\\
&& = (1-C_1\e) \left( \frac {\nr{\partial_t \f}^2} {\e^2}  - \frac {\nr{\partial_t \f}} {\e}   \frac {2 \nr{\Im(\a)}_\infty(1+C_1\e)\nr{\f}}{1-C_1 \e} \right)\\
&& \geq (1-C_1 \e) \frac {\nr{\partial_t \f}^2} {2 \e^2} - C_2 \nr{\f}^2.
\end{eqnarray*}
The conclusion follows if $\eps_0 > 0$ was chosen small enough.
\end{proof}

A remarkable property of $\LLe$ is the following 
complex symmetry (cf.~\cite{BK}).

\begin{lemma}\label{lem.symmetry}
Let $\e > 0$ and $z \in \C$. If $z \in \C$ is an eigenvalue for $\LLe$ then $\overline{z}$ is an eigenvalue for $\LLe^*$. In particular the operator $\LLe$ has no residual spectrum.
\end{lemma}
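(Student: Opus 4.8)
The plan is to show that $\LLe$ is a \emph{complex symmetric} operator, that is, to exhibit a conjugation $\mathcal{C}$ on the underlying Hilbert space $L^2(\Omega,w_\eps\,\dx\sigma\,\dx t)$ --- an antilinear, isometric involution --- such that $\mathcal{C}\LLe\mathcal{C}=\LLe^*$. Recall that $\LLe$ is $m$-accretive, hence closed and densely defined, so $\LLe^*$ makes sense. Granting such a $\mathcal{C}$, both assertions of the lemma follow from routine arguments which I would spell out: since $\mathcal{C}^2=\Id$, the relation $\mathcal{C}\LLe\mathcal{C}=\LLe^*$ forces $\Dom(\LLe^*)=\mathcal{C}\Dom(\LLe)$, and if $\LLe\phi=z\phi$ with $\phi\neq0$ then $\mathcal{C}\phi\in\Dom(\LLe^*)$ and $\LLe^*(\mathcal{C}\phi)=\mathcal{C}\LLe\mathcal{C}(\mathcal{C}\phi)=\mathcal{C}(\LLe\phi)=\mathcal{C}(z\phi)=\overline z\,\mathcal{C}\phi$ with $\mathcal{C}\phi\neq0$, so $\overline z$ is an eigenvalue of $\LLe^*$. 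Applying the same to $\LLe^*$, which satisfies $\mathcal{C}\LLe^*\mathcal{C}=\LLe$, shows conversely that $z$ is an eigenvalue of $\LLe$ whenever $\overline z$ is one of $\LLe^*$; since the residual spectrum of $\LLe$ consists precisely of those $z$ that fail to be eigenvalues of $\LLe$ while $\overline z$ is an eigenvalue of $\LLe^*$, it is empty.

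To construct $\mathcal{C}$, the cleanest route is to transport the ordinary complex conjugation through the unitary equivalence $\LLe=U_\eps\PP_\eps U_\eps^{-1}$. On $L^2(\Omega_\eps,\dx x)$ put $\mathcal{T}u=\overline u$, a conjugation. I would then check that $\mathcal{T}\PP_\eps\mathcal{T}=\PP_\eps^*$: this holds because $\PP_\eps$ has a real formal expression ($-\Delta$) and its only non-self-adjoint feature, the Robin coupling $\alpha_{\pm,\eps}$, acts by multiplication; concretely, from \eqref{Robin.form} one gets $Q^1_\eps(\overline\phi,\overline\psi)=Q^1_\eps(\psi,\phi)$ for all $\phi,\psi\in H^1(\Omega_\eps)$ (conjugating the two boundary integrals replaces $\alpha_{\pm,\eps}$ by $\overline{\alpha_{\pm,\eps}}$, which is exactly the coupling of $\PP_\eps^*$), and the representation theorem promotes this to an operator identity, domains included. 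Setting $\mathcal{C}:=U_\eps\,\mathcal{T}\,U_\eps^{-1}$, the unitarity of $U_\eps$ makes $\mathcal{C}$ an antilinear isometric involution on $L^2(\Omega,w_\eps\,\dx\sigma\,\dx t)$, and
\[
\mathcal{C}\LLe\mathcal{C}=U_\eps\,(\mathcal{T}\PP_\eps\mathcal{T})\,U_\eps^{-1}=U_\eps\,\PP_\eps^*\,U_\eps^{-1}=\big(U_\eps\PP_\eps U_\eps^{-1}\big)^*=\LLe^*,
\]
as required. Equivalently one may work directly on $L^2(\Omega,w_\eps\,\dx\sigma\,\dx t)$: tracking $\mathcal{T}$ through the two transformations performed above produces the explicit formula $(\mathcal{C}\phi)(s,t)=e^{2i\eps t\,\Im\alpha(s)}\,\overline{\phi(s,t)}$, and one then verifies that $\mathcal{C}$ preserves $H^1(\Omega)$ and satisfies $\Qe(\mathcal{C}\phi,\mathcal{C}\psi)=\Qe(\psi,\phi)$, the crucial algebraic identities being $(\nabla_s-\eps t\nabla_s\alpha)(\mathcal{C}\phi)=e^{2i\eps t\Im\alpha}\,\overline{(\nabla_s-\eps t\nabla_s\alpha)\phi}$ (which uses $2i\nabla_s\Im\alpha-\nabla_s\alpha=-\nabla_s\overline\alpha$) and $\partial_t(\mathcal{C}\phi)=e^{2i\eps t\Im\alpha}\big(\overline{\partial_t\phi}+2i\eps\Im\alpha\,\overline\phi\big)$.

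I do not anticipate any genuine obstacle: this is in essence a bookkeeping lemma, and the only content is that the $\mathcal{T}$-self-adjointness of the manifestly real second-order operator with complex multiplicative boundary coupling survives the changes of variable. The two points deserving a line of care are that $\mathcal{C}$ be truly isometric on the \emph{weighted} space, which relies on $w_\eps$ being real-valued (the principal curvatures $\kappa_j$ and $\Re\alpha$ are real), and that the identity $\mathcal{T}\PP_\eps\mathcal{T}=\PP_\eps^*$ be understood together with its domains --- both immediate from the representation theorem. Finally, it is worth emphasizing, in contrast with the $\mathcal{PT}$-symmetry used in \cite{BK}, that no spatial reflection is needed here: complex (i.e.\ $\mathcal{T}$-)self-adjointness of $\PP_\eps$ by itself already yields both the absence of residual spectrum and the $z\mapsto\overline z$ pairing of eigenvalues of $\LLe$ with those of $\LLe^*$.
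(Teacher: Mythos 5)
Your proof is correct and follows essentially the same route as the paper's: both establish that $\PP_{\eps,\alpha}$ is $J$-self-adjoint with respect to ordinary complex conjugation, via the form identity $\PP_{\eps,\alpha}^*=\PP_{\eps,\overline\alpha}=J\PP_{\eps,\alpha}J$, and then transfer the conclusion to $\LLe$ through the unitary equivalence. The only difference is presentational: you make the transported conjugation $\mathcal{C}=U_\eps \mathcal{T} U_\eps^{-1}$ explicit and spell out the standard deduction of the eigenvalue pairing and of the emptiness of the residual spectrum, which the paper leaves as ``the conclusion follows''.
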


\begin{proof}
Since $\LLe$ is unitarily equivalent to $\PP_\eps = \PP_{\e,\a}$, it is sufficient to prove the result for $\PP_{\eps,\a}$. Notice that $\Dom(Q^1_{\eps,\alpha})=\Dom(Q^1_{\eps,\overline{\alpha}})$. Moreover for $\phi,\psi \in \Dom(Q^1_{\e,\a})$ we have $Q^1_{\e,\overline \a}(\phi,\psi) = \overline{Q^1_{\e,\a}(\psi,\phi)}$, so $\PP_{\e,\a}^* = \PP_{\e,\overline \a}$. 
Now let $\psi \in \Dom(\PP_{\e,\a})$. For all $\phi \in \Dom(Q^1_{\e,\a})$ we have 
\[
Q^1_{\e,\overline\a}(\phi,\overline \psi) = \overline{Q^1_{\e,\a}(\overline \phi, \psi)} = \overline{\innp{\overline \phi}{\PP_{\e,\a}\psi}} = \innp{\phi}{\overline{\PP_{\e,\a}\psi}}.
\]
This proves that $\overline \psi \in \Dom(\PP_{\e,\overline\a})$ and $\PP_{\e,\overline \a} \psi = \overline{\PP_{\e,\a} \p}$. Thus, if we denote by $J$ the complex conjugation, 
we get that $\PP_{\e,\a}$ is $J$-self-adjoint 
\[
\PP_{\e,\overline \a} = J \PP_{\e,\a} J.
\]
The conclusion follows.
\end{proof}

\subsection{Proof of Theorem \ref{th.n-s-a-Robin}}
Theorem \ref{th.n-s-a-Robin} will be a consequence of the following proposition.
\begin{proposition}\label{prop.D-eps}
There exist $\eps_{0}, C>0$ such that for all $\eps\in(0,\eps_{0})$, $\varphi\in \mathsf{Dom}(\hatLLe^*)$ and $\psi\in\mathsf{Dom}(\LLe)$,
\[|\Qe (\varphi,\psi)- \hatQe(\varphi,\psi)|\leq C\eps \|\varphi\|_{\hatLLe^*}\|\psi\|_{\LLe}\,.\]
\end{proposition}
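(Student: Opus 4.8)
\textbf{Proof proposal for Proposition \ref{prop.D-eps}.}

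The plan is to estimate the sesquilinear difference $\Qe(\f,\p) - \hatQe(\f,\p)$ term by term, grouping the contributions according to the powers of $\eps$ they carry, and to control each piece by the graph norms $\|\f\|_{\hatLLe^*}$ and $\|\p\|_{\LLe}$. First I would write out $\Qe(\f,\p) - \hatQe(\f,\p)$ explicitly from the formulas for $\Qe$ and $\hatQe$ derived above. The difference naturally splits into several families of terms: (i) the discrepancy between $\langle G_\eps(\nabla_s - \eps t \nabla_s\alpha)\f,(\nabla_s-\eps t\nabla_s\alpha)\p\rangle w_\eps$ and $\langle \nabla_s\f,\nabla_s\p\rangle$, which is $\mathcal O(\eps)$ because $G_\eps = \mathrm{Id} + \mathcal O(\eps)$, $w_\eps = 1 + \mathcal O(\eps)$ and the extra $\eps t\nabla_s\alpha$ factors are $\mathcal O(\eps)$; (ii) the first-order transverse term $\frac{2i}{\eps}\int_\Omega \Im(\alpha)\partial_t\f\,\bar\p\, w_\eps$; (iii) the discrepancy between $V_\eps w_\eps$ and $V_\eff$, which is $\mathcal O(\eps)$ once one recalls that $\frac{\partial_t\tilde w_\eps}{\eps\tilde w_\eps} = -\sum_j \kappa_j + \mathcal O(\eps)$ so that $V_\eps = V_\eff + \mathcal O(\eps)$, and the extra factor $w_\eps - 1 = \mathcal O(\eps)$ contributes another $\mathcal O(\eps)$; and (iv) the $\frac{1}{\eps^2}\int |\partial_t\f|^2$-type term, which actually cancels exactly between $\Qe$ and $\hatQe$, so it does not contribute at all.

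The terms of type (i), (iii) are handled by brute estimation: each is bounded by $C\eps\|\f\|_{H^1}\|\p\|_{H^1}$ (for (i)) or $C\eps\|\f\|\|\p\|$ (for (iii) second part) or $C\eps\|\f\|\|\p\|$ (for (iii) first part, using boundedness of $V_\eff$ and $\kappa_j$), and one then absorbs $\|\f\|_{H^1}, \|\p\|_{H^1}$ into the graph norms. The key point here is the coercivity estimate of Lemma \ref{lem-minor-Qe}: applied to $\p \in \Dom(\LLe)$ it gives
\[
\|\nabla_s\p\|^2 + \frac{1}{\eps^2}\|\partial_t\p\|^2 + \|\p\|^2 \lesssim \Re\langle(\LLe + M)\p,\p\rangle \leq \|\LLe\p\|\,\|\p\| + M\|\p\|^2 \lesssim \|\p\|_{\LLe}^2,
\]
and the same bound for $\f$ using $\hatLLe^*$ (whose coercivity follows similarly, or directly since $\hatLLe$ is essentially a self-adjoint Schrödinger operator plus the bounded potential $V_\eff$, with an analogous lower bound). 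Crucially, this shows not only $\|\f\|_{H^1}, \|\p\|_{H^1} \lesssim$ graph norms, but that $\|\partial_t\f\| \lesssim \eps\|\f\|_{\hatLLe^*}$ and $\|\partial_t\p\| \lesssim \eps\|\p\|_{\LLe}$.

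This last observation is what makes the dangerous term (ii) work, and I expect it to be the main obstacle. The term $\frac{2i}{\eps}\int_\Omega \Im(\alpha)\partial_t\f\,\bar\p\,w_\eps$ carries an explicit $\eps^{-1}$, so a naive bound gives only $\mathcal O(\eps^{-1})\|\partial_t\f\|\|\p\|$, which is $\mathcal O(1)$, not $\mathcal O(\eps)$. The resolution is to use $\|\partial_t\f\| \lesssim \eps\|\f\|_{\hatLLe^*}$, so that $\frac{1}{\eps}\|\partial_t\f\|\|\p\| \lesssim \|\f\|_{\hatLLe^*}\|\p\|$ — still only $\mathcal O(1)$. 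To gain the extra power of $\eps$ one must instead integrate by parts in $t$ to move the $\partial_t$ off $\f$: since $\f \in \Dom(\hatLLe^*)$ satisfies Neumann conditions at $t = \pm 1$, there is no boundary term, and one gets $-\frac{2i}{\eps}\int \Im(\alpha)\f\,\partial_t\bar\p\,w_\eps - \frac{2i}{\eps}\int \Im(\alpha)\f\,\bar\p\,\partial_t w_\eps$; the first of these is $\lesssim \frac{1}{\eps}\|\f\|\,\|\partial_t\p\| \lesssim \|\f\|\,\|\p\|_{\LLe}$ and the second uses $\partial_t w_\eps = \mathcal O(\eps)$ to cancel the $\eps^{-1}$ outright. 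Actually the cleanest route is to split $\partial_t\f$ between $\f$ and $\p$ symmetrically, or better: observe that $\partial_t$ applied to \emph{both} $\f$ and $\p$ gains $\eps$ each, so one distributes the derivative to whichever factor is convenient and always picks up at least one factor $\|\partial_t\cdot\| \lesssim \eps\|\cdot\|_{\text{graph}}$; combined with the $\eps^{-1}$ prefactor this yields exactly $\mathcal O(1)$, and one more factor of $\eps$ then comes from $\Im(\alpha)$ being paired against a term that, after integration by parts, involves $\partial_t w_\eps = \mathcal O(\eps)$ or from re-symmetrizing against $\hatQe$ which has no such term. I would carry out this integration-by-parts bookkeeping carefully, as it is the only non-routine part; everything else is collecting $\mathcal O(\eps)$ factors and invoking the two coercivity bounds.
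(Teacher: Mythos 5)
Your decomposition of $\Qe-\hatQe$, the coercivity consequences $\nr{\partial_t\p}\lesssim\eps\nr{\p}_{\LLe}$ and $\nr{\partial_t\f}\lesssim\eps\nr{\f}_{\hatLLe^*}$ (from Lemma \ref{lem-minor-Qe} and its easy analogue for $\hatLLe^*$), and your identification of the $\eps^{-1}$ cross term as the crux all agree with the paper. But your treatment of that crux term does not close, and it is the only non-routine step. First, the integration by parts in $t$ does produce boundary terms: they involve the traces of $\f\,\bar\p\,w_\eps$ at $t=\pm1$, and the Neumann-type condition satisfied by elements of $\Dom(\hatLLe^*)$ constrains $\partial_t\f$, not $\f$, so these terms do not vanish. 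Second, even discarding them, moving the derivative only trades $\tfrac1\eps\nr{\partial_t\f}\nr{\p}$ for $\tfrac1\eps\nr{\f}\nr{\partial_t\p}\lesssim\nr{\f}\nr{\p}_{\LLe}$, and the $\partial_t w_\eps=\mathcal O(\eps)$ piece likewise only cancels the $\eps^{-1}$, so every route you describe lands at $\mathcal O(1)$, exactly as you concede; the proposed sources of the missing factor $\eps$ (``re-symmetrizing against $\hatQe$'', pairing $\Im(\alpha)$ with $\partial_t w_\eps$) are not estimates --- $\hatQe$ has no such term to cancel against, and the $\partial_t w_\eps$ contribution was already counted. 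So your argument proves $\abs{\Qe(\f,\p)-\hatQe(\f,\p)}\lesssim\nr{\f}_{\hatLLe^*}\nr{\p}_{\LLe}$, not the required $C\eps$ bound.

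The paper gains the extra $\eps$ by a different device, which you would need here: it recognizes the singular pairing, at strength $\eps^{-2}$, inside the form evaluated on the auxiliary test function $t\Im(\alpha)\f$. Expanding $\Qe(t\Im(\alpha)\f,\p)$, the transverse kinetic term $\tfrac1{\eps^2}\int\partial_t\overline{(t\Im(\alpha)\f)}\,\partial_t\p\,w_\eps$ produces $\tfrac1{\eps^2}\int\Im(\alpha)\,\bar\f\,\partial_t\p\,w_\eps$ up to errors of size $\nr{\f}_{H^1_s}\nr{\p}_{H^1_s}+\tfrac1{\eps^2}\nr{\partial_t\f}\nr{\partial_t\p}+\tfrac1\eps\nr{\f}\nr{\partial_t\p}$; since $t\Im(\alpha)\f$ lies in the form domain and $\p\in\Dom(\LLe)$, one has $\Qe(t\Im(\alpha)\f,\p)=\innp{t\Im(\alpha)\f}{\LLe\p}$, whence $\tfrac1\eps\abs{\int\Im(\alpha)\,\bar\f\,\partial_t\p\,w_\eps}\lesssim\eps\nr{\f}\nr{\LLe\p}+\eps\nr{\f}_{H^1_s}\nr{\p}_{H^1_s}+\tfrac1\eps\nr{\partial_t\f}\nr{\partial_t\p}+\nr{\f}\nr{\partial_t\p}$, and each term is $\mathcal O(\eps)\nr{\f}_{\hatLLe^*}\nr{\p}_{\LLe}$ by the two coercivity bounds. (A minor further slip: the $\eps^{-2}$ transverse terms of $\Qe$ and $\hatQe$ do not cancel exactly, since their weights are $w_\eps$ and $1$; the leftover $\tfrac1{\eps^2}\abs{\int\partial_t\bar\f\,\partial_t\p\,(w_\eps-1)}\lesssim\tfrac1\eps\nr{\partial_t\f}\nr{\partial_t\p}$ is harmless, again because both $\partial_t$ factors gain an $\eps$.)
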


\begin{proof}
We set
\[
\mathsf{D}_{\eps}(\varphi,\psi)= \Qe (\varphi,\psi)- \hatQe(\varphi,\psi)\,.
\]
Using the Taylor formula, we get
\[|\mathsf{D}_{\eps}(\varphi,\psi)|\lesssim \eps\|\varphi\|_{H^1_{s}}\|\psi\|_{H^1_{s}}+ \frac 1 \e \|\partial_{t}\varphi\|\|\partial_{t}\psi\|+ \left|\frac {2}{\eps} \int_{\Omega} \Im(\alpha) \partial_t \psi \bar \varphi w_\e \dx\sigma \dx t\right|,
\]
where $\nr{\psi}_{H^1_s}^2 = \nr{\psi}_{L^2(\O)}^2 + \nr{\nabla_s \psi}_{L^2(\O)}^2$.
The most delicate term is the last one.
We have
\[
\abs{\Qe (t\Im(\alpha)\varphi,\psi)- \frac 1 {\eps^{2}} \int\Im(\alpha) \partial_{t}(t\overline{\varphi})\partial_{t}\psi w_\e \dx\sigma \dx t} 
\lesssim\|\varphi\|_{H^1_{s}}\|\psi\|_{H^1_{s}}+\frac 1 \e \|\varphi\|\|\partial_{t}\psi\|,
\]
so
\[
\abs{\Qe (t\Im(\alpha)\varphi,\psi)- \frac 1 {\eps^2} \int\Im(\alpha) \overline{\varphi}\partial_{t}\psi w_\e \dx\sigma \dx t}
\lesssim\|\varphi\|_{H^1_{s}}\|\psi\|_{H^1_{s}}+\frac 1 {\eps^2}\|\partial_{t}\varphi\|\|\partial_{t}\psi\|+\frac 1 {\eps}\|\varphi\|\|\partial_{t}\psi\|\,.
\]
Since $\Qe (t\Im(\alpha)\varphi,\psi) = \innp{t \Im(\a) \varphi}{\LLe \p}$, we obtain 
\[
\left|\frac 1 \e \int \Im(\alpha)\partial_{t}\psi\overline{\varphi} w_\e \dx\sigma \dx t\right|\lesssim \eps\|\varphi\|_{H^1_{s}}\|\psi\|_{H^1_{s}}+ \frac 1 \e \|\partial_{t}\varphi\|\|\partial_{t}\psi\|+\|\varphi\|\|\partial_{t}\psi\|+\eps\|\LLe \p \|\|\varphi\|\,.
\]
We conclude with Lemma \ref{lem-minor-Qe}.
\end{proof}

By Proposition \ref{prop.D-eps}, there exists $C \geq 0$ such that for $z \in K$, $\varphi\in \mathsf{Dom}((H^\eff_{\eps})^*)$ and $\psi\in\mathsf{Dom}(H_{\eps})$ we have 
\[|Q^\eff_{\eps}(\varphi,\psi)-z\langle\varphi,\psi\rangle-\left(Q_{\eps}(\varphi,\psi)-z\langle\varphi,\psi\rangle\right)|\leq C \eps \|\varphi\|_{(H_{\eps}^\eff-z)^*}\|\psi\|_{H_{\eps}-z}\,.\]
Finally, we apply Lemma \ref{lem-res-bot} and Lemma \ref{lem-diff-resolv}, and Theorem \ref{th.n-s-a-Robin} follows.

\section*{Acknowledgment}
The authors would like to thank 
the \emph{Centre International de Rencontres Math\'ematiques} in Marseille
for the support during two weeks in August and September 2016
where this research was initiated in the framework of 
a Research-In-Pairs grant. The research of P.S.\ was supported by the \emph{Swiss National Science Foundation}, 
SNF Ambizione grant No. PZ00P2\_154786. D.K. and P.S. were partially supported by FCT (Portugal)
through project PTDC/MAT-CAL/4334/2014.

\bibliographystyle{acm}
\bibliography{v15}

\end{document}